\numberwithin{equation}{section}
        \newtheorem{theorem}{Theorem}[section]
        \newtheorem{proposition}[theorem]{Proposition}
        \newtheorem{lemma}[theorem]{Lemma}
        \newtheorem{definition}[theorem]{Definition} 
        \newtheorem{remark}[theorem]{Remark}  
 \newtheorem{example}[theorem]{Example}
 \newtheorem{exampleproposition}[theorem]{Example and Proposition}
\newcommand \Depth {\text{\rm Depth}}
\newcommand \mADM {m_\text{ADM}}
\newcommand \mH {m_\text{H}}
\newcommand \Hcal {\mathcal H}
\newcommand \Vol {\text{Vol}}
\newcommand{\be}{\begin{equation}}
\newcommand{\ee}{\end{equation}}
\newcommand \bel {\be\label} 
\newcommand \reg {\text{\rm reg}}
\newcommand \bei {\begin{itemize}}
\newcommand \eei {\end{itemize}}
\newcommand \weak {\text{\rm weak}}
\newcommand \Area {\text{\rm Area}}
\newcommand \loc {\text{\rm loc}} 
\newcommand \del \partial 
\newcommand \eps \epsilon 
\DeclareMathOperator{\RS}{RotSym} 
\newcommand \RSzero {\RS_m^{\weak, 0}}  
\newcommand \RSone {\RS_m^{\weak, 1}}  
\newcommand \RSnobar {\RS_m^{\weak}} 
\newcommand \RSonethree {\RS_3^{\weak, 1}}  
\newcommand \RSzerobar {\overline{\RS}{}_m^{\weak, 0}}  
\newcommand \RSonebar {\overline{\RS}{}_m^{\weak, 1}}  
\newcommand \RSbar {\overline{\RS}{}_m^{\weak}}  
\newcommand \RSzerobarthree {\overline{\RS}{}_3^{\weak, 0}}  
\newcommand \RSonebarthree {\overline{\RS}{}_3^{\weak, 1}}  
\let\oldmarginpar\marginpar
\renewcommand\marginpar[1]{\-\oldmarginpar[\raggedleft\footnotesize #1]%
{\raggedright\footnotesize #1}}
\newcommand{\N}{\mathbb{N}} 
\newcommand \R {\mathbb{R}} 
\newcommand{\E}{\mathbb{E}}
\newcommand{\Z}{\mathbb{Z}}
\newcommand{\diam}{\operatorname{diam}} 
\newcommand{\set}{\rm{set}}
\newcommand{\mass}{{\mathbf M}}
\newcommand{\intcurr}{{\mathbf I}}     
\newcommand{\vol}{\operatorname{Vol}}
\def \rmin   {r_{\min}}
\begin{document}

\title[The nonlinear stability of rotationally symmetric spaces]{The nonlinear stability of rotationally symmetric
\\
spaces with low regularity}

\author[Philippe G. L{\tiny e}Floch]{Philippe L{\scriptsize e}Floch$^1$}
\footnotetext[1]{Laboratoire Jacques-Louis Lions \& Centre National de la Recherche Scientifique, 
Universit\'e Pierre et Marie Curie, 4 Place Jussieu, 75252 Paris, France. Email: contact@philippelefloch.org}

\author[Christina Sormani]{Christina Sormani$^2$}
\footnotetext[2]{CUNY Graduate Center and Lehman College, 365 Fifth Avenue
New York, NY 10016-4309, USA. Email: sormanic@member.ams.org}

\date{}

\keywords{}

\begin{abstract} We consider rotationally symmetric spaces with low regularity, which we regard as integral currents spaces or manifolds with Sobolev regularity and are assumed to have nonnegative scalar curvature. Relying on the flat distance and on Sobolev norms, we establish several nonlinear stability estimates about the ``distance'' between a rotationally symmetric manifold and the Euclidian space, which are stated in terms of the ADM mass of the manifold. Importantly, we make explicit the dependencies and scales involved in this problem, particularly the ADM mass, the depth, and the CMC reference hypersurface. Several notions of independent interest are introduced in the course of our analysis, including the notion of depth of a manifold and a scaled version of the flat-distance, the D-flat distance as we call it,  which involves the diameter of the manifold.  Finally we prove a compactness theorem for sequences of regions with uniformly bounded depth, whose outer boundaries have fixed area and an upper bound on Hawking mass. 
\end{abstract}

\maketitle 


\section{Introduction}
\label{sec:1}

It is of fundamental importance to understand the compactness of sequences of three dimensional 
asymptotically flat manifolds with nonnegative scalar curvature.   Recall that Schoen and Yau's positive mass theorem \cite{Schoen-Yau-positive-mass} establishes that the so-called ADM mass of such manifolds is nonnegative and vanishes if and only if the manifold is isometric to Euclidean space.
Naturally, the limits of such spaces will have low regularity, depending upon the notion of convergence used, and one still hopes to define nonnegative scalar curvature and notions like ADM and Hawking mass on such limit spaces.   Even the rotationally symmetric setting is not yet completely understood.  Lee and the second author \cite{LS1,LS2} have recently proven the stability of the positive mass theorem, in the sense that if a sequence of asymptotically flat, rotationally symmetric Riemannian manifolds, say $M_j$, with no closed interior minimal surfaces and nonnegative scalar curvature has ADM mass $m_{ADM}(M_j)\to 0$, then the sequence converges to Euclidean space in the intrinsic flat sense \cite{LS1}.  In \cite{LS2}, they showed that if a sequence of such $M_j$ approaches
equality in the Penrose Inequality then a subsequence converges in the intrinsic flat sense. However, these theorems strongly depend upon the fact that they were able to predict the limit space associated with these special sequences. More general sequences, in which only the ADM mass is bounded from above uniformly, can have limit spaces of very low regularity.  While the second author and Wenger in \cite{SW10,SW} have proven intrinsic flat limit spaces are always countably $\mathcal{H}^m$ rectifiable, the notion of nonnegative scalar curvature and Hawking mass on such spaces is difficult to define. 

On the other hand, the Einstein equations with solutions in the Sobolev space $H_\loc^1$ were extensively investigated by the first author together with Rendall \cite{LR} and Stewart \cite{LSt0,LSt}. This theory was motivated by a joint work with Mardare \cite{LM}, proving that a manifold with  $H^1_\loc$ regular metric admits an $L_\loc^2$ regular connection, whose curvature tensor is then defineable as a distribution. Thus, nonnegative scalar curvature and notions like Hawking mass which depend on mean curvature can be defined in a 
distributional sense.   Here, in the rotationally symmetric setting, we will be able to define nonnegative scalar curvature and Hawking mass and prove its monotonicity, under this $H_\loc^1$ regularity.  

Recall that the notion of $H_\loc^1$ regularity and $H_\loc^1$ convergence are
gauge dependent, in the sense that they depend upon a choice of coordinate charts, while intrinsic
flat convergence is defined using the metric geometry and does not depend upon gauge.   In this paper, we choose a specific gauge tied to the rotationally symmetric geometry and we are able to relate the two notions of convergence.    We also introduce the $D$-flat distance, a variation upon the intrinsic flat distance, which has good scaling properties and can be applied to sequences of regions $\Omega_j\subset M_j$ with a uniform upper bound on diameter $\diam(\Omega_j)\le D$.   

In particular, we study sequences of regions $\Omega_j\subset M_j$ within surfaces $\Sigma_j$
of {\em uniformly bounded depth} (a notion introduced here for the first time)  
\be \label{Depth-defn}
\Depth(\Sigma_j)=\sup\{ d_M(x, \Sigma_j):\,\, x\in \Omega_j\} \le D_0,
\ee
and uniformly bounded Hawking mass 
\be
m_H(\Sigma_j)\le M_0,
\ee
where 
\be \label{sigma-j}
\Sigma_j= \partial \Omega_j \setminus \partial M_j
\ee
 is a rotationally symmetric surface with fixed area 
\be 
\Area(\Sigma_j)=A_0
\ee
and
where the boundary $\partial M_j$ is either empty or a minimal surface.   Our spaces $M_j$ are assumed to be asymptotically flat, rotationally symmetric spaces with weak regularity admitting no closed interior minimal surfaces.

An outline of this paper is as follows. 
In Section~\ref{sec:2}, we introduce and study the various classes of spaces under consideration in this paper.
In Definition~\ref{def-RSweak} we extend the smooth class of Riemannian manifolds considered in \cite{LS1} and denoted by $\RS_m^\reg$, to classes
 $\RSone\subset \RSzero$ of $H^1_\loc$ and $L^2_\loc$ regularity, respectively.   We also introduce 
larger classes of the same low regularity but possibly with interior closed minimal CMC (constant mean curvature) hypersurfaces, denoted by 
 $\RSonebar\subset \RSzerobar$,  since such spaces
 may appear as limits.   We study the `profile functions' of these spaces, which are defined in (\ref{our-g}) below.  In Section~\ref{subsec:2.3}, we use these profile functions and define the mean curvature and scalar curvature in the distributional sense.  We also check the monotonicity of the Hawking mass in Proposition~\ref{thm-monotonicity} below.   
 
In Section~\ref{sec:3}, we prove that spaces $M\in \RSzero$ are countably $\mathcal{H}^m$ rectifiable metric spaces (and, for the convenience of the reader, we conclude here a brief review of this notion).   In Section~\ref{sec:4} we prove that tubular neighborhoods, $T_D(\Sigma)\subset M$ where $M\in \RSzero$ are integral current spaces (including a review of this notion).   This allows us to define the intrinsic flat distance between such regions. In Section~\ref{sec:5}, we review the notion of intrinsic flat distance and introduce the $D$-flat distance, which is first proposed in this paper; cf.~Definition~\ref{def-Dflat}. 
 
 In Section~\ref{sec:6}, we first review and then improve upon the stability of the positive mass theorem first proven by Lee and the first author \cite{LS1}.   We first rederive the original statement in \cite{LS1} by extending it to manifolds $M^m \in \RSone$; cf.~Theorem~\ref{thm-basic-LS}.
 We then reexamine the stability estimates in \cite{LS1} and establish {\em quantitative bounds} on the intrinsic flat distance, as well as on the $D$-flat distance and the difference in volumes between tubular neighborhoods $T_D(\Sigma)\subset M$ and annular regions in Euclidean space. 
These new estimates explicitly depend upon the parameters $m_{ADM}(M)$,
 $\Area(\Sigma)$ and $D$. (See Theorem~\ref{thm-new-F}).   
The technique of proof we propose here 
relies an arbitrary parameter which helps to "balance" contributions to the overall distance by selecting an optimal numerical value.   In Theorem~\ref{thm-adapted}, we thus provide precise bounds on the intrinsic flat distance, the $D$-flat distance and the difference in volumes between regions $U_D(\Sigma)$ which lie within $\Sigma$
and corresponding regions in Euclidean space, depending upon $m_H(\Sigma)$,
 $\Area(\Sigma)$, and $D$.   Next, in Theorem~\ref{thm-depth}, we provide such bounds for regions $\Omega$
of finite depth (in the sense \eqref{Depth-defn}) again depending upon the same parameters. 

In Section~\ref{sec:7}, we turn our attention to the Sobolev norms between the regions studied in Section~\ref{sec:6}.   We study thin regions in the $H^1$ norm using diffeomorphisms; cf.~Theorem~\ref{thm-sobolev-1}. Considering the possibility of very deep wells, we
realize that it is essential to study the {\em backwards profile functions} for level sets $\Sigma_0$ of given area.   These are defined in Definition~\ref{defn-back}.  In Theorem~\ref{thm-sobolev-no-diffeo}, we provide precise bounds on the $H^1[0,D]$ norm of the difference between backwards profile functions in $M$ and in Euclidean space, which depend upon the area $\Area(\Sigma_0)$, the Hawking mass $m_H(\Sigma_0)$, and $D$.   
 
In Section~\ref{sec:8} we prove our main compactness theorem which implies the following precompactness theorem. We refer to Theorems~\ref{compactness} and \ref{compSobolev} below for full statements. 

\begin{theorem}[Compactness framework in the intrinsic flat sense] 
Fix constants $A_0, D_0, M_0>0$. Consider a sequence of rotationally symmetric regions $\Omega_j\subset M_j$ lying within CMC spheres $\Sigma_j$ as stated in (\ref{sigma-j}), where $M_j$ have nonnegative scalar curvature and no interior minimal surfaces.   Assuming the uniform bounds 
\be
\aligned
Area(\Sigma_j) &=A_0,
\\
\Depth(\Sigma_j) &\le D_0,
\\
m_H(\Sigma_{j}) &\le M_0.
\endaligned
\ee
Then a subsequence (also denoted $M_j$) converges in the intrinsic flat sense to
a region $\Omega_\infty\subset M_\infty  \in \RSonebar$.   In particular, the limit space has and $H^1_\loc$ rotationally symmetric metric with nonnegative scalar curvature as defined in Section~\ref{sec:2}. 
By taking $\Sigma_\infty =\partial U_\infty \setminus \partial M_\infty \in M_\infty$, one has the following 
\be
\aligned
Area(\Sigma_\infty) & =A_0, 
\\
\Depth(\Sigma_\infty) & \le \liminf_{j \to +\infty} \Depth(\Sigma_j) \le D_0,
\\
m_H(\Sigma_{\infty}) &=\lim_{j \to +\infty} m_H(\Sigma_j) \le M_0,  
\endaligned
\ee
as well as 
\be
\vol(\Omega_\infty)=\lim_{j\to\infty}\vol(U_j)\le A_0D_0. 
\ee
(The relevant notions are defined as in Section~\ref{sec:2} below.) 
\end{theorem}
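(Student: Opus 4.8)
The plan is to reduce the theorem to the two substantial compactness results cited in the statement (Theorems~\ref{compactness} and \ref{compSobolev}) and then verify that the various geometric quantities pass to the limit. First I would reformulate each region $\Omega_j \subset M_j$ in terms of its backwards profile function $h_j$ on the interval $[0,D_0]$ associated to the level set $\Sigma_j$ of fixed area $A_0$, in the gauge introduced in Section~\ref{sec:2}: since $\Depth(\Sigma_j)\le D_0$, the entire region $\Omega_j$ is captured on this interval, and the fixed area condition $\Area(\Sigma_j)=A_0$ fixes the boundary value. The hypothesis $m_H(\Sigma_j)\le M_0$ together with the monotonicity of the Hawking mass (Proposition~\ref{thm-monotonicity}) and nonnegativity of scalar curvature gives a uniform upper bound on the Hawking mass along every level set inside $\Omega_j$, hence—via the estimates of Theorem~\ref{thm-sobolev-no-diffeo}—a uniform bound on the $H^1[0,D_0]$ norm of $h_j$ (or of $h_j$ minus the Euclidean profile). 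This is the key a priori estimate.

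Next I would extract a subsequence. By the uniform $H^1[0,D_0]$ bound and Rellich compactness, $h_j \to h_\infty$ strongly in $L^2$ and uniformly (the profile functions are monotone, so one gets pointwise a.e.\ convergence with control), and weakly in $H^1$; the limit $h_\infty$ is again an admissible backwards profile, so the associated rotationally symmetric metric lies in $\RSonebar$ (the ``bar'' class is needed precisely because a closed interior minimal CMC sphere may appear in the limit where $h_j'$ degenerates, even though each $M_j$ has none). This defines $\Omega_\infty \subset M_\infty$. The nonnegativity of the limiting distributional scalar curvature follows from the characterization in Section~\ref{subsec:2.3}: nonnegative scalar curvature is equivalent to a convexity/monotonicity property of the profile that is stable under the convergence obtained (weak $H^1$ limit of a sequence satisfying the one-sided distributional inequality still satisfies it). Then I would invoke Theorem~\ref{compactness} to upgrade this convergence of profile functions to intrinsic flat convergence $\Omega_j \to \Omega_\infty$, and Theorem~\ref{compSobolev} for the corresponding Sobolev statement; the tubular-neighborhood/integral-current-space structure from Sections~\ref{sec:3}--\ref{sec:4} is what makes the intrinsic flat distance meaningful here.

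Finally, I would check each conclusion about the limit. The area $\Area(\Sigma_\infty)=A_0$ is immediate from the fixed boundary value of $h_\infty$. The equality $m_H(\Sigma_\infty)=\lim_j m_H(\Sigma_j)$ follows because the Hawking mass of the fixed-area level set is a continuous function of the profile and its first derivative at the endpoint, and the $H^1$ convergence controls exactly this (here the monotonicity guarantees the limit of $m_H(\Sigma_j)$ exists along the subsequence, $\le M_0$). The depth inequality $\Depth(\Sigma_\infty)\le \liminf_j \Depth(\Sigma_j)$ is a lower-semicontinuity statement: the depth is the length of the interval actually needed to reach $\partial M_\infty$ (or the whole of $[0,D_0]$), and a uniform limit of profiles cannot increase this length. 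The volume identity $\vol(\Omega_\infty)=\lim_j \vol(U_j)$ comes from writing the volume as $\int_0^{D_0} (\text{cross-sectional area})\,ds$ and passing to the limit by dominated convergence, using the uniform bound on the cross-sectional areas by $A_0$ (monotonicity of area along level sets, since there are no interior minimal surfaces), which also yields $\vol(\Omega_\infty)\le A_0 D_0$.

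The main obstacle I anticipate is the a priori $H^1$ bound on the backwards profile functions in the presence of ``very deep wells''—that is, controlling $h_j$ uniformly near points where $h_j'$ is large or where the geometry nearly pinches, using only the area, depth, and Hawking-mass bounds. This is exactly the delicate point that Theorem~\ref{thm-sobolev-no-diffeo} is designed to handle, and the proof here will consist largely in checking that its hypotheses are met uniformly along the sequence and that the resulting bound is independent of $j$; a secondary subtlety is ensuring the limit object genuinely lands in $\RSonebar$ rather than some wilder class, i.e.\ that no pathology worse than a single minimal CMC sphere develops, which again follows from the monotonicity of area and Hawking mass controlling the possible degenerations.
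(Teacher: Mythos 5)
Your high-level plan is sound and mirrors the paper's decomposition: pass to backwards profile functions $h_j$ associated to $\Sigma_j$, extract a convergent subsequence, land in $\RSonebar$, and convert to intrinsic flat convergence. However, there are two genuine gaps.

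\textbf{The compactness mechanism.} You invoke Rellich to obtain $h_j\to h_\infty$ strongly in $L^2$ and only \emph{weakly} in $H^1$. Weak $H^1$ convergence is not enough to establish the conclusions claimed: in particular $m_H(\Sigma_\infty)=\lim_j m_H(\Sigma_j)$ requires knowing $h_j'(0)\to h_\infty'(0)$ pointwise, and the identification of the limit of the Hawking-mass functions with the Hawking mass of $h_\infty$ (equation \eqref{eq:limmass}) requires $h_j'(\sigma)\to h_\infty'(\sigma)$ at every $\sigma$, since $2\,m_{H j}(\sigma)=h_j(\sigma)^{m-2}\bigl(1-(h_j'(\sigma))^2\bigr)$ involves the nonlinear term $(h_j')^2$, which does not commute with weak limits. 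The paper instead exploits a stronger structure you did not use: the nonnegative distributional scalar curvature and the Hawking mass bound force $h_j'$ to lie in $BV_\loc$ with a \emph{uniform bound on total variation} (see \eqref{eq:215} and Proposition~\ref{thm-monotonicity}), and then Helly's selection theorem gives $h_j'(\sigma)\to h_\infty'(\sigma)$ \emph{at every point}, hence strong $L^p$ convergence of the derivative. This is the heart of Theorem~\ref{compSobolev}, and it is what makes the Hawking-mass and volume identities and the ``limit lies in $\RSonebar$'' claim go through. Your alternate route (observing that the one-sided scalar-curvature inequality has a favorable sign for weak limits) does rescue the \emph{nonnegativity} of the limit scalar curvature, but it does not rescue the \emph{equality} $m_H(\Sigma_\infty)=\lim_j m_H(\Sigma_j)$ nor the exact formula for the limit Hawking mass.

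\textbf{The Sobolev-to-intrinsic-flat step.} You write that you would ``invoke Theorem~\ref{compactness}'' to upgrade Sobolev convergence to intrinsic flat convergence, but Theorem~\ref{compactness} is precisely the statement being proven, so this is circular. The actual content here is Proposition~\ref{compIF}, which applies the Lakzian--Sormani estimate (Theorem~\ref{thm-subdiffeo}): one restricts to the good set $W_{j,\varepsilon}$ where the metric coefficients of $g_j$ and $g_\infty$ are within a factor $(1+\varepsilon)^2$, controls the volume of the complement by splitting off the thin well where $h_\infty<\delta$, and bounds the distortion of distances via Lemma~\ref{lem-lambda}, all of which rests on the uniform convergence $h_j\to h_\infty$. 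This is a nontrivial part of the argument that your proposal does not address beyond noting that the integral-current-space structure from Sections~\ref{sec:3}--\ref{sec:4} makes the intrinsic flat distance well-defined.

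A minor note: you suggest Theorem~\ref{thm-sobolev-no-diffeo} is needed for the a priori $H^1$ bound on $h_j$, but in fact the bound is immediate — $|h_j'|\le 1$ because $0\le f_j'\le 1$ (equation \eqref{eq:233}), and $0\le h_j\le r_0$. The real role of that theorem is to quantify proximity to the Euclidean profile, not to provide compactness.
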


To establish this result, we first prove a Sobolev compactness theorem for the backwards profile functions and produce a candidate limit space in $\RSonebar$ (cf.~Theorem~\ref{compSobolev}). This convergence
is strong enough so that the limit space has nonnegative scalar curvature. We then apply a method by Lakzian and the first author  \cite{LakzianSormani} and transform the Sobolev convergence into intrinsic flat convergence; cf.~Proposition~\ref{compIF} below.   The convergence of the volume, area, and Hawking mass then follows from the convergence of the backwards profile functions
proven in Theorem~\ref{compSobolev}.  Intrinsic flat convergence alone is not strong enough to obtain convergence of these quantities. 

In Section~\ref{sec:9} we present several examples of particular interest. Example~\ref{no-scalar} demonstrates that while the notion of nonnegative scalar curvature is conserved in the limit, the scalar curvature does not converge.
Example~\ref{thin-well} (first presented in \cite{LS1}) has an increasingly thin well that disappears in the limit.   In  \cite{LS1}, this example was used to demonstrate why Gromov-Hausdorff convergence could not be used to prove the stability theorem.  Here, we use this example to demonstrate the importance of the backwards profile functions in Theorems~\ref{thm-sobolev-no-diffeo} and 
\ref{compSobolev}.  This example also demonstrates that the depth of a sequence need not converge.

One may naturally speculate on possible extensions of our theorems that do not require rotational symmetry.   It is of particular interest to understand the relationship between $H^1_\loc$ convergence and intrinsic flat convergence and whether one can rely on such relationship to also maintain nonnegative scalar curvature of the limit spaces without rotational symmetry.  One may also ask whether, under intrinsic flat or $H^1_\loc$ convergence, one can prove convergence of the Hawking mass (or another notion of quasilocal mass) for converging CMC hypersurfaces. 
 
\section*{Acknowledgements}
 
The authors gratefully acknowledge financial support from the National Science Foundation under Grant No.~0932078 000 via the Mathematical Science Research Institute, Berkeley, which the authors were visiting for the Fall Semester 2013. The first author (PLF) was also partially supported by the Agence Nationale de la Recherche through the grants ANR 2006--2-134423 and ANR SIMI-1-003-01. The second author (CS) was also partially supported by a PSC CUNY Research Grant and a National Science Foundation Grant DMS \#1309360. 


\section{Definition of rotationally symmetric spaces with low regularity}
\label{sec:2} 

\subsection{Definitions}

We begin with some definitions and properties about rotationally symmetric manifolds. We state first a definition for {\sl regular} manifolds. 

\begin{definition} 
\label{def:201}
The class $\RS_m^\reg$ of {\bf regular rotationally symmetric spaces} consists of $m$-dimensional, smooth topological manifolds with boundary, say 
$(M^m,g)$ endowed with a metric $g$ with $C^2$ regularity, which  
\bei 
\item are complete, rotationally symmetric, Riemannian manifolds such that the area of the distance sphere from the center tends to infinity when the distance approaches infinity,

\item admit no closed interior minimal hypersurfaces, and  either have no boundary or have a boundary which is a stable minimal hypersurface called an ``apparent horizon",

\item and have nonnegative scalar curvature. 

\eei  
\end{definition} 

For such manifolds, we can use geodesic coordinates and write
\bel{eq:201}
g = ds^2 + f(s) \, g_{S^{m-1}}, 
\ee
where $g_{S^{m-1}}$ is the standard unit metric on the $(m-1)$-sphere, $s$ is the distance from the boundary $\del M$, 
and the {\bf profile function} 
\be
f: [0, +\infty) \to [\rmin, +\infty)
\ee
 determines the overall geometry of the manifold.    Let
 \be\label{rmin}
 \rmin := f(0)=\lim_{s\to o} f(s)
 \ee
and we note that $f(0)=0$ if $M$ admits no boundary, while $f(0)>0$ if there is a boundary. 
Moreover, we say $M$ has a {\bf pole} (or a center) if $f(0)=0$ and thus $\partial M=\emptyset$. 
Finally, the orbits of the symmetry group are denoted by $\widetilde\Sigma_s$ and determine a CMC (constant mean curvature) foliation of the space. The profile function $f$ is strictly increasing  due to the restriction on the
non-existence of stable minimal surfaces.

A broad class of spaces is now obtained by relaxing the regularity requirement. 

\begin{definition} \label{def-RSweak}
The classes $\RSzero$ 
of  {\bf $L^2$ weakly regular rotationally symmetric spaces} 
consists of $m$-dimensional, smooth topological manifolds with boundary, say 
$(M^m,g)$, endowed with a metric with $L^2_\loc$, whose profile functions
$f \in L^2_\loc$ are strictly increasing from $r_{min}$ as in (\ref{rmin}).   
So that it satisfies all the properties listed in Definition~\ref{def:201} {\rm except} the last condition.

The class $\RSone$ of  {\bf $H^1$ weakly regular rotationally symmetric spaces} consists of $m$-dimensional, smooth topological manifolds with boundary, say 
$(M^m,g)$, endowed with a metric with $H^1_\loc$ regularity, which satisfy all the properties listed in Definition~\ref{def:201} in which the last condition is understood in the sense of distributions.

The classes and $\RSzerobar$ and $\RSonebar$ are defined similarly except\footnote{Our notation is motivated by a ``closure'' property established later in Section~\ref{sec:8}.} that one solely requires that the profile functions are non-decreasing and thus allows for interior minimal surfaces. 
\end{definition} 

The assumed $L^2_\loc$ ($H^1_\loc$, respectively) regularity means that, in any atlas of local coordinates, the metric coefficients belong to the space $L^2_\loc$ (resp.~$H^1_\loc$) of functions which (resp.~together with their first order derivatives) are locally square-integrable from the center (or pole). According to LeFloch and Mardare \cite{LM}, the connection of a manifold $(M^m,g) \in \RSonebar$ is well defined in the $L_\loc^2$ sense and its curvature tensors are well-defined as distributions.  The condition that the scalar curvature be nonnegative is thus understood here in the sense of distributions. Observe that no uniform regularity is assumed as one approaches the boundary of the manifold, which allows for a black hole in these spaces. 

Given $(M^m,g) \in \RSzerobar$, we introduce geodesic coordinates such that 
\bel{our-g}
g= ds^2 + f(s)^2 g_{S^{m-1}}, \qquad s \in (0, +\infty), 
\ee
where $g_{S^{m-1}}$ is the canonical metric on the unit $(m-1)$-dimensional sphere $S^{m-1}$. 
We observe that our definition yields the limited regularity 
\bel{our-f}
\aligned
& f \in L^2_\loc(0, +\infty) \quad \text{ if } (M^m,g) \in \RSzerobar, 
\\
& f \in H^1_\loc(0, +\infty) \quad \text{ if } (M^m,g) \in \RSonebar.  
\endaligned
\ee
In other words, the restriction of the profile function $f$ to any compact subset of $(0,\infty)$ is squared-integrable 
and, for the class $\RSonebar$, its first derivative in the distributional sense is also squared-integrable on that compact subset.


\subsection{Profile function and area of $\RSzerobar$ spaces}\label{subsec:2.2}

The local and global geometry of such manifolds $(M^m,g)$ will now be studied in terms of the properties of the profile function $f$. Several immediate but important observations are made in the rest of this section. We begin by discussing the regularity of the profile function $f$ and, until further notice, we assume that $(M^m,g) \in \RSzerobar$, so that the function $f$ is defined almost everywhere only. 
\bei 

\item  Our first assumption in Definition~\ref{def:201} about the area of the distance spheres tending to infinity when $s \to +\infty$ yields   
\bel{eq:210b}
\lim_{s\to+\infty} f(s)=+\infty.
\ee

\item The function $f$ is non-decreasing in $(0, +\infty)$ and thus 
\bel{eq:211}
f' \geq 0  \quad \text{ in the sense of distributions on } (0, +\infty)  
\ee
and the trace at the center $f(0):= \displaystyle\lim_{s \to 0 \atop  s >0} f(s)$ exists. 

\item Therefore, when the space has a pole, 
\be\label{f-pole}
f(1/k) \textrm{ approaches } 0 \textrm{ as } k\to +\infty,
\ee
while if $(M^m,g)$ does not have a pole then $f(s)\ge f(0)>0$ for all $s\ge 0$. 

\item In view of the monotonicity property of the function $f$, we can introduce its (right-continuous) pointwise representative by assigning a specific value at every $s \in (0, +\infty)$: 
\bel{f-values}
f(s) := \lim_{s_1 \to s \atop s_1 > s} f(s_1) = \liminf_{s_1 \to s} f(s_1).
\ee
Also, the function $f$ has countably many jump discontinuities. 

\item Finally, provided $(M^m,g)$ belongs to $\RSzero$, the condition \eqref{eq:210b} together with our assumption about the non-existence of closed interior minimal surfaces imply that $f$ has no local minima except possibly at the boundary $s=0$. 

\eei

Next, for each $s \in (0, +\infty)$, we consider the corresponding level set $\widetilde\Sigma_s$ of the distance function from the pole or the boundary, and we introduce the {\bf area function} $A=A(s)$ of these orbits of rotational symmetry, as well as their {\bf mean curvature} $H=H(s)$ given by 
\bel{eq:211b}
\aligned
A(s) =& \Vol(\widetilde\Sigma_s) = \omega_{m-1} (f(s))^{m-1} \qquad \text{at almost every } s >0, 
\\
H(s) =& (m-1) F'(s), \qquad \text{in the distributional sense,} 
\endaligned
\ee
where $\omega_{m-1}$ is a dimension-related constant and we have introduced the function 
\bel{eq:211c}
F(s) := \log f(s), \qquad \text{for almost all } s > 0.
\ee
These functions have only limited regularity, i.e.~thanks to \eqref{our-f} 
\bel{eq:212}
\aligned
& A \in L_\loc^2(0, +\infty) \quad \text{ when } (M^m,g) \in \RSzerobar, 
\endaligned
\ee
while the mean curvature $H$ is solely defined as a distribution.  Therefore, the mean curvature is not defined pointwisely, and the scalar curvature is not defined for {\sl all} slices $\widetilde\Sigma_s$ and, rather, we are working with a ``global'' definition dealing with the family of slices.

Another piece of notation will be useful. In view of \eqref{eq:211}, the area function $A: [0, +\infty) \to [A_{\min}, +\infty)$ is increasing (with $A_{\min} = A(0)$) and can be used to reparametrize the orbits of the symmetry group. So, for each $A_0 \in [A_{\min}, +\infty)$, we introduce the notation   
\bel{eq:212b}
\Sigma_{A_0} := \widetilde\Sigma_{s_0} \quad \text{ with $s_0$ characterized by } \Vol(\widetilde\Sigma_{s_0}) = A_0.  
\ee 

 
\subsection{Scalar curvature and Hawking mass of $\RSonebar$ spaces}
\label{subsec:2.3} 
 
In this section, we consider a space $(M^m,g) \in \RSonebar$. Then, the associated functions $A$ and $H$ have better regularity and, thanks to \eqref{our-f} 
and \eqref{eq:211b},  
\bel{eq:212cc}
\aligned
& A \in H_\loc^1(0, +\infty), 
\qquad
 H \in L^2_\loc(0, +\infty) \quad \text{ when } (M^m,g) \in \RSonebar.  
\endaligned
\ee
Importantly, the curvature of the space can now be defined, at least as a distribution. Specifically, for the scalar curvature, say $R=R(s)$, the expression originally derived for smooth metrics in \cite{LS1}
$$
R= {(m-1) \over (f(s))^2} \Big( (m-2) (1 - (f'(s))^2) - 2 f(s) f''(s)\Big) 
$$
does not immediately make sense since, in view of \eqref{our-f}, the second derivative $f''(s)$ is solely a distribution and is multiplied by the factors $(m-1)/ f(s)^{-2}$ and $f(s)$. It is convenient here to introduce the notation $F = \log f \in H^1_\loc(0, +\infty)$ and
we observe that $F''$ is defined as a distribution and the scalar curvature takes in the form
\bel{eq:213}
{R \over m-1} = -2 F'' - m {F'}^2 + (m-2) e^{-2F}. 
\ee
When the metric is sufficiently regular, this formula is equivalent to the standard formula for the scalar curvature, but \eqref{eq:213} now does make sense (as a distribution) even for metrics in our broad class $\RSzerobar$.
As expected from the general theory in \cite{LM}, we conclude that the scalar curvature 
\bel{eq:214}
R \, \text{ is well-defined as a distribution when } (M^m,g) \in \RSonebar.
\ee

Furthermore, our third assumption in Definition~\ref{def:201} that $R \geq 0$ in the distribution sense implies that $R$ is actually a locally bounded measure. In view of \eqref{eq:213}, this nonnegativity condition reads 
\bel{eq:214b}
F'' \leq - {m \over 2} {F'}^2 + {m-2\over 2} e^{-2F}, 
\ee
in which the left-hand side must understood in the sense of distributions but the right-hand side contains functions. So that our spaces enjoy the bounded variation regularity: 
\bel{eq:215}
f', F' \in BV_\loc(0, +\infty)
\ee
and, in particular, $f'$ is locally Lipschitz continuous and the condition \eqref{eq:211} becomes 
\bel{eq:2111}
\aligned
& f'(s) \geq 0 \qquad \text{ for all } s \in (0, +\infty)   && \text{ when } (M^m,g) \in \RSonebar, 
\\
& f'(s) > 0 \qquad \text{ for all } s \in (0, +\infty)   && \text{ when } (M^m,g) \in \RSone. 
\endaligned
\ee
Indeed, this inequality holds at all points, provided we introduce the right-continuous (say) pointwise representative of the function $f'$. 

We have some important consequence concerning the {\bf Hawking mass} $m_H=m_H(s)$, defined by
\bel{eq:220}
\aligned
2 \, m_H(s) &= (f(s))^{m-2} \big( 1 - (f'(s))^2 \big) 
\\
& = e^{(m-2) F(s)}  - e^{m F(s)}  (F'(s))^2. 
\endaligned
\ee
With some abuse of notation, we also use the radius $r=f(s)$ as an independent variable and we write $\mH=\mH(r)$. Furthermore, relying now on the monotonicity of the Hawking mass, we can introduce its limit at spatial infinity, denoted below by $\mADM \in [0, +\infty]$, which is nothing but the so-called ADM mass. In the following, we will assume that this limit is finite and seek for estimate in terms of this parameter. 

\begin{proposition}
\label{thm-monotonicity}
The Hawking mass \eqref{eq:220} of a manifold $(M^m,g) \in \RSonebar$ is a monotone non-decreasing\footnote{It is monotone inceasing if $(M^m,g) \in \RSone$.} function, which is bounded above and satisfies 
\bel{eq:297}
0 \leq \mH(\rmin)\le \mH(r)\le \mADM, 
\qquad 
r \in [\rmin, +\infty). 
\ee
If equality holds, that is, if the Hawking mass is a constant throughout the manifold, then $(M^m,g)$ is in fact regular, and coincides with 
Euclidean space (when $\mADM=0$) or the Riemannian Schwarzschild manifold of mass $\mADM>0$ given by 
\bel{eqn-def-Sch}
g=\left(1+\frac{2 \mADM}{ r^{m-2}-2 \, \mADM}\right)dr^2 + r^2 g_{S^{m-1}}.
\ee
\end{proposition}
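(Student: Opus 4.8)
The plan is to exploit the explicit formula \eqref{eq:220} for the Hawking mass together with the distributional inequality \eqref{eq:214b} expressing nonnegative scalar curvature, and to compute the distributional derivative of $r \mapsto \mH(r)$ along the foliation. First I would rewrite $2\mH$ as a function of $F = \log f$ and compute, in the sense of distributions, the derivative of $2\mH$ with respect to the arclength $s$; using \eqref{eq:220} one gets something proportional to $e^{(m-2)F}F' \bigl( (m-2) - 2F'' f^2/\text{(stuff)} - m {F'}^2 \bigr)$ — more cleanly, a short calculation should give $\frac{d}{ds}(2\mH) = -e^{mF} F'\bigl( 2F'' + m{F'}^2 - (m-2)e^{-2F}\bigr)$, which by the scalar curvature formula \eqref{eq:213} equals $-e^{mF}F' \cdot \frac{R}{m-1} \cdot (-1)$, i.e. a nonnegative multiple of $R$ times $F'$. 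Since $F' \ge 0$ by \eqref{eq:2111} and $R \ge 0$ as a measure (by the remark following \eqref{eq:214b}), the derivative of $\mH$ is a nonnegative measure, so $\mH$ is monotone non-decreasing in $s$, hence in $r = f(s)$ since $f$ is non-decreasing. The care needed here is that $F'$ is only $BV_\loc$ and $F''$ is a measure; one must check the product $F' \cdot (F'' + \ldots)$ makes sense and that the Leibniz-type computation is valid for the $BV$ representative — this is where I would invoke \eqref{eq:215}, noting $f'$ is locally Lipschitz so $F'$ is continuous where $f > 0$, making the product of the continuous function $F'$ with the measure $R$ well defined.

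Next, for the boundedness $\mH(r) \le \mADM$: monotonicity gives that $\mH(r)$ increases to its limit at spatial infinity, which is finite by hypothesis and is by definition $\mADM$; the lower bound $0 \le \mH(\rmin)$ follows because at $s = 0$ either $\rmin = 0$ and $\mH \to 0$ (pole case, from \eqref{eq:220} with $f(0) = 0$ and controlled $f'$), or $\partial M$ is a minimal hypersurface so $f'(0) = 0$ and $2\mH(\rmin) = \rmin^{m-2} \ge 0$. Then $0 \le \mH(\rmin) \le \mH(r) \le \mADM$ is immediate.

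For the rigidity statement, suppose $\mH$ is constant, so $\frac{d}{ds}\mH \equiv 0$ as a measure; since this derivative is a nonnegative multiple of $R \cdot F'$ and $F' > 0$ on the interior (here one must be in the regime where $f$ is strictly increasing — if merely non-decreasing, $f' $ could vanish on an interval, but then on such an interval $\mH$ constant forces via \eqref{eq:220} a relation, and one argues the constancy propagates; I would handle the $\RSone$ case first where $f' > 0$ everywhere by \eqref{eq:2111}), we conclude $R = 0$ as a measure. Then \eqref{eq:214b} holds with equality: $F'' = -\frac{m}{2}{F'}^2 + \frac{m-2}{2}e^{-2F}$, which is now an ODE with a continuous right-hand side, so $F$ is $C^2$, hence the metric is regular and $M \in \RS_m^\reg$. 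Equivalently, setting $2\mH \equiv \mu$ constant in \eqref{eq:220} gives the first-order ODE $(f')^2 = 1 - \mu f^{2-m}$, whose solution with the appropriate boundary behavior is exactly the flat metric when $\mu = 0$ and the Schwarzschild profile \eqref{eqn-def-Sch} when $\mu = 2\mADM > 0$; one verifies the change of variables $r = f(s)$ turns $ds^2 + f^2 g_{S^{m-1}}$ into $\bigl(1 - 2\mADM r^{2-m}\bigr)^{-1} dr^2 + r^2 g_{S^{m-1}}$, matching \eqref{eqn-def-Sch}.

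The main obstacle I anticipate is the low-regularity bookkeeping in the monotonicity computation: justifying the distributional Leibniz rule for the product of the $BV_\loc$ function $F'$ with the measure-valued quantity $2F'' + m{F'}^2 - (m-2)e^{-2F} = -R/(m-1)$, and correctly interpreting "$\frac{d}{ds}\mH$ is a nonnegative measure" when $F'$ may have jump discontinuities (it does not, by \eqref{eq:215}, but $F''$ does have a singular part). The clean way is: $F'$ is continuous (indeed locally Lipschitz after the upgrade \eqref{eq:215}), $R \ge 0$ is a locally finite measure, so $F' \, d\bigl(\text{primitive of } R\bigr)$ is unambiguously a nonnegative measure, and one integrates against test functions to conclude $\mH$ is monotone. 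The rigidity direction is then routine ODE analysis once $R = 0$ is established.
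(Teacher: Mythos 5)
Your proof is correct and follows essentially the same route as the paper: differentiate $2\mH$ in the sense of distributions via the BV/measure chain rule, recognize the result as $-2e^{mF}F'\bigl(F''+\tfrac{m}{2}(F')^2-\tfrac{m-2}{2}e^{-2F}\bigr)$, and invoke \eqref{eq:214b} and $F'\ge 0$ to conclude. The paper justifies the ill-defined product of $F'$ with the measure $F''$ by invoking Volpert's product (citing \cite{DLM}), whereas you instead note that after the upgrade \eqref{eq:215} the function $F'$ is continuous, so the pairing with the nonnegative measure $R$ is unambiguous; both routes work and the paper's remark is what licenses yours. One place you are actually more careful than the paper: for the lower bound $\mH(\rmin)\ge 0$ you correctly split into the pole case ($f(0)=0$, so $\mH\to 0$ directly from \eqref{eq:220} and $0\le f'\le 1$) and the minimal-boundary case ($f'(0)=0$, giving $2\mH(\rmin)=\rmin^{m-2}\ge 0$), whereas the paper's proof asserts "$f'(0)=0$" uniformly, which is only the boundary case. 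Finally, the paper's printed proof stops after deriving \eqref{eq:297} and does not actually argue the rigidity ("equality implies Euclidean/Schwarzschild") statement at all; your sketch --- $\mH$ constant forces $R=0$ as a measure, equality in \eqref{eq:214b} becomes an ODE with continuous right-hand side giving $F\in C^2$, then solve $(f')^2=1-\mu f^{2-m}$ explicitly --- is the natural way to supply the missing argument, though as you note one should be more careful when $f'$ vanishes on intervals in the $\RSonebar$ case (there the product $F'R$ degenerates and one cannot immediately conclude $R=0$; you would need to argue that $f'$ cannot vanish on an open interval when $\mH$ is constant and $\mH>0$, or handle the degenerate interval directly from \eqref{eq:220}).
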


From the positivity of the mass and \eqref{eq:2111}, we deduce the uniform bound
\bel{eq:233}
0 \leq  f'(s) \leq 1 \qquad \text{ for all } s \in (0, +\infty).
\ee 

\begin{proof} 
Namely,  in view of the inequality \eqref{eq:214b}, we have the monotonicity property 
\bel{eq:229}
m_H'(s) \geq 0
\ee
in the sense of distributions. On the other hand, by differentiating \eqref{eq:220} in the sense of distributions and using the chain rule for functions of bounded variation \cite{DLM}, we obtain
$$
\aligned
2 \, m_H'(s) 
& = (m-2) e^{(m-2) F(s)} F'(s)  - 2 e^{m F(s)} F'(s) F''(s) - m  e^{m F(s)} (F'(s))^3
\\
& = - 2 e^{m F(s)} F'(s) \Bigg( F''(s) + {m \over 2} (F'(s))^2 - {m-2 \over 2} e^{-2 F(s)}  \Bigg) \geq 0. 
\endaligned
$$
This calculation is justified, even at the level of weak regularity under consideration, provided one notices that the (ill-defined) product $F(s) F'(s)$ of a BV function by a measure is understood as a so-called Volpert's product; see for instance \cite{DLM}. Furthermore, our conditions in Definition~\ref{def:201} guarantees that $f'(0) = 0$ so that $\mH(\rmin)$ is nonnegative, so that the monotonicity of the Hawking mass yields \eqref{eq:297}. 
\end{proof} 

We complete this section with a remark and an example. 
 
\begin{remark}
For any manifold  $(M^m,g) \in \RSone$ the profile function $f=f(s)$ belongs not only to $H^1_\loc$ but in fact to $H^1$ (since $0 \leq f' \leq 1$ as a consequence of the Hawking mass bound).   
The function $f=f(s)$ need not belong to $H^2_\loc$, 
as seen in Example~\ref{not-H2}, below.
\end{remark}

\begin{example}\label{not-H2}
Let
\be
f(s) = 
\begin{cases}
a + b_1 s,          \qquad & s \in (0, s_1], 
\\
a + b_1 s_1 + b_2 (s - s_1), \qquad & s \geq s_1, 
\end{cases} 
\ee 
in which one chooses $s_1 >0$ and $a \geq 0$, as well as  $1 > b_1 > b_2$.   So the scalar curvature is positive, and $f$ is a profile function for a space $(M^m,g) \in \RSone$.    This function is not $H^2_\loc$.   In fact the second derivative $f''$ is bounded above but may approach $-\infty$, near the surface $s_1$.  
\end{example}


\subsection{Embedding of $\RSzerobar$ spaces in Euclidian space}
\label{sec:5}

\subsubsection*{The class of $\RSnobar$ spaces}

To proceed with the analysis of our classes of rotationally symmetric spaces, it is convenient to embed them first in Euclidian space. We provide here such a construction for any space $(M^m,g) \in \RSzero$. Indeed, our construction below requires nothing more than the conditions defining the broad class $\RSzero$. It will be important to precisely relate the regularity and the bounds in the variables $s$ and $r$, as we now do.  

Fix any $(M^m,g) \in \RSzero$. Since the function $f=f(s)=:r(s)$ is increasing and possibly discontinuous, it admits a non-decreasing and continuous inverse denoted by $s=s(r)$ for $r \in [\rmin, +\infty)$. The distributional derivative $s'(r) \geq 1$ is a locally bounded measure and we can introduce the {\bf height function} $z:   [\rmin, +\infty) \to [0, +\infty)$ by 
\bel{eq:510}
z(r) := \int_{\rmin}^r \sqrt{(s')^2 - 1},  \qquad r \in [0, +\infty), 
\ee
in which the integrant is actually a measure, defined (by Legendre transform, cf.~\cite{DemengelTemam}) as the composition of the measure $s'$ by the concave function $a \in [1, +\infty)\mapsto \sqrt{a^2 - 1}$. 
Observe that 
\bel{eq:511}
z=z(r) \quad 
\aligned
& \text{ is monotone non-decreasing}
\\
& \text{and continuous}
\endaligned
\quad 
\text{ if $(M^m,g) \in \RSzero$.} 
\ee

Observe that the function $z$ need not be increasing and may be constant on some intervals. In the class $\RSzero$, we have the following expressions in terms of the radial variable $r$: 
\be
A(r) = \omega_{m-1} r^{m-1}, \qquad 
H(r) = {m-1 \over r \sqrt{1 + (z')^2}},  
\qquad 
\mH(r) = {1 \over 2} r^{m-2} {(z')^2 \over \sqrt{1 + (z')^2}}.  
\ee
The function $A$ is of course smooth, but the mean curvature $H=H(r)$ (which was a measure in the variables $s$) is now a bounded function, at least away from the pole (if it exists).

Suppose next that $(M^m,g) \in \RSone$. We now have 
\bel{eq:511-third}
z=z(r) \quad 
\aligned
& \text{ is monotone increasing}
\\
& \text{and continuous}
\endaligned
\quad 
\text{ if $(M^m,g) \in \RSone$.} 
\ee
In terms of the function $z=z(r)$, the scalar curvature 
\be
{r R \over m-1} = - \Bigg( {1 \over 1+ {z'}^2} \Bigg)' + (m-2) {{z'}^2 \over 1 + {z'}^2} 
\ee
is now well-defined but solely as a distributions. 
The function $1/(1+ {z'}^2)$ therefore has locally bounded variation and, in particular, has countably many jumps.
Since $s'(r) \ge 1$ and the Hawking mass was shown to increase as $s$ increases, we see that the Hawking mass also increases as $r$ increases.
So, we conclude that the mass function  
\bel{Hawking-r-bound}
\mH(r) = {1 \over 2} r^{m-2} {(z')^2 \over {1 + (z')^2}}
\ee
is monotone increasing in $r$ and 
\bel{mH-ADM}
\lim_{r\to +\infty} \mH(r)=: m_{ADM}(M),
\ee
which we assume to be finite.

\subsubsection*{The class of $\RSbar$ spaces} Considering now a space in the broader class $(M^m,g) \in \RSzerobar$ and since 
 the function $f=f(s)=:r(s)$ is non-decreasing and possibly discontinuous, then its inverse $s=s(r)$ is also non-decreasing and possibly discontinuous. Then, the height function satisfies 
\bel{eq:511-bar}
z=z(r) \quad 
\aligned
& \text{ is monotone non-decreasing}
\\
& \text{and possibly discontinuous}
\endaligned
\quad 
\text{ if $(M^m,g) \in \RSzerobar$.} 
\ee
and 
\bel{eq:511-bar1}
z=z(r) \quad 
\aligned
& \text{ is monotone increasing}
\\
& \text{and possibly discontinuous}
\endaligned
\quad 
\text{ if $(M^m,g) \in \RSonebar$.} 
\ee


\section{Viewing $\RSzerobar$ spaces as countably rectifiable metric spaces}
\label{sec:3} 

\subsection{Countably rectifiable metric structure}

Our first objective is to eventually provide an interpretation of spaces in $\RSzerobar$ as integral current spaces (which we will need to estimate such spaces in the flat distance) but, first, in this section we show that such spaces can be viewed as rectifiable metric spaces. 

We denote by $\Hcal^m$ the $m$-dimensional Haussdorf measure. By definition, a metric space $(X,d)$ is said to be {\bf  countably $\Hcal^m$ rectifiable} if it admits a countable collection of bi-Lipschitz charts, say $\varphi_k: U_k \subset \R^m \to V_k \subset X$, where $U_k$ are Borel measurable sets and the family of sets $V_k$
cover almost all of $X$, in the sense that
$\Hcal_m\Bigg(X \setminus \bigcup_{k=1}^{+\infty} V_k \Bigg)=0$. 
For instance, any smooth Riemannian $m$-manifold $M$ with smooth Riemannian metric $g$ can be viewed as a countably $\Hcal^m$ rectifiable metric space, denoted by $(M, d_g)$, by setting 
\bel{d_g}
d_g(p,q) :=\inf\big\{ L_g(C): \,  C(0)=p, \, C(1)=q, \, C\,\, \textrm{piecewise smooth} \big\}
\ee
 for any two points $p, q \in M$, where the infimum is taken over all continuous and piecewise smoth curves with length defined by 
\bel{L_g}
L_g(C) :=\int_0^1 \big( g(C'(t), C'(t))\big)^{1/2} dt \in [0,+\infty].
\ee

We emphasize that the key property we will rely here is the monotonicity of the shape function $f$ describing the spaces in geodesic coordinates. 
In particular, our argument does not require the continuity of the metric.

\begin{proposition}[Viewing $\RSzerobar$ spaces as countably rectifiable metric spaces] 
\label{prop-rect}
A space $M^m \in \RSzerobar$ is a countably $\Hcal^m$ rectifiable metric space endowed with the distance $d_g$ defined
in (\ref{d_g})-(\ref{L_g}), provided the infimum is taken over piecewise smooth curves that avoid the pole (if it exists)
and, thus, in geodesic coordinates \eqref{our-g} with $C(t)=(s(t), \theta(t))$
(with $t \in [0,1]$, $s(t) \in (0, +\infty)$, and $\theta(t) \in S^{m-1}$)  
$$
L_g(C)=\int_0^1 \sqrt{|s'(t)|^2 \,+ \, | (f \circ s)(t) |^2\,|\theta'(t)|^2} \, dt, 
$$
where the precised (right-continuous) representative of the shape function $f$ is used in order to define the composite function 
$f \circ s$  (as in \eqref{f-values}). 
\end{proposition}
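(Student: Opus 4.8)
The plan is to first give meaning to the length functional $L_g$ on pole‑avoiding piecewise smooth curves, then to verify that $d_g$ is a genuine metric, and finally to build bi‑Lipschitz charts by comparing $(M,d_g)$ on ``radial annuli'' $(a,b)\times S^{m-1}$ with the smooth product Riemannian metric $ds^2+g_{S^{m-1}}$. First I would record the elementary properties of $f$: being monotone non‑decreasing on $(0,+\infty)$, the profile function — taken with its right‑continuous representative \eqref{f-values} — is a locally bounded Borel function with at most countably many jumps, so along any continuous piecewise smooth curve $C(t)=(s(t),\theta(t))$ with $s(t)\in(0,+\infty)$ the composition $(f\circ s)(t)$ is a well‑defined bounded Borel function of $t$, the integrand $\sqrt{|s'(t)|^2+|(f\circ s)(t)|^2\,|\theta'(t)|^2}$ is non‑negative and Borel, and $L_g(C)\in[0,+\infty]$ is unambiguously defined (and finite when the $s$‑range of $C$ is compact in $(0,+\infty)$). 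Reversing a curve shows $d_g$ is symmetric; the concatenation of two pole‑avoiding piecewise smooth curves is again such, so the triangle inequality holds; and $d_g\ge0$ is clear. Finiteness of $d_g(p,q)$ follows from a ``radial‑then‑spherical'' competitor, which avoids the pole and has finite length because $f$ is locally bounded on $(0,+\infty)$ and $f(s)>0$ for every $s>0$ (the latter being forced by $M$ being a smooth topological manifold, so each orbit $\widetilde\Sigma_s$, $s>0$, is a bona fide $(m-1)$‑sphere). Finally, the pole (when it exists) is a single point and the apparent horizon $\{s=0\}$ (in the boundary case) is an $(m-1)$‑sphere; both are $\Hcal^m$‑null, so they are harmless for the rectifiability conclusion, and $d_g$ at the pole may be taken by continuous extension.

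The heart of the matter is a two‑sided estimate: for every $0<a<b$ and all $(s,\theta),(s',\theta')\in[a,b]\times S^{m-1}$,
\[
c_-\big(|s-s'|+d_{S^{m-1}}(\theta,\theta')\big)\ \le\ d_g\big((s,\theta),(s',\theta')\big)\ \le\ c_+\big(|s-s'|+d_{S^{m-1}}(\theta,\theta')\big),
\]
with $c_+=\max(1,f(b))$ and $c_-=\tfrac12\min\big(1,\,f(a/2),\,a/\pi\big)>0$ (note $f(a/2)>0$ since $a/2>0$). The upper bound follows by following the radial segment from $(s,\theta)$ to $(s',\theta)$ and then a spherical geodesic to $(s',\theta')$. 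For the lower bound, let $C=(s_C,\theta_C)$ be an arbitrary curve in $M$ joining the two points (it need not stay in the annulus). Then $L_g(C)\ge\int_0^1|s_C'|\,dt\ge|s-s'|$ and $\int_0^1|\theta_C'|\,dt\ge d_{S^{m-1}}(\theta,\theta')$. I would then split into cases: if $\min_t s_C(t)\ge a/2$, then $f\circ s_C\ge f(a/2)$ along $C$, whence $L_g(C)\ge f(a/2)\,d_{S^{m-1}}(\theta,\theta')$; if $\min_t s_C(t)<a/2$, then since $s_C(0),s_C(1)\ge a$ the arc length of the $s$‑component (equal to its total variation) is at least $(a-\tfrac a2)+(a-\tfrac a2)=a$, so $L_g(C)\ge a\ge(a/\pi)\,d_{S^{m-1}}(\theta,\theta')$. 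In all cases $L_g(C)\ge\max\big(|s-s'|,\ \min(f(a/2),a/\pi)\,d_{S^{m-1}}(\theta,\theta')\big)$, which yields the lower bound. (When $M$ has no pole, $f\ge\rmin>0$ everywhere and the case split is unnecessary.) In particular, choosing $s=s'$ with $\theta\neq\theta'$, or $s\neq s'$, this shows $d_g(p,q)>0$ for $p\neq q$, so $d_g$ is a metric.

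For the rectifiability itself, I would cover $(0,+\infty)$ by the countable family of open intervals $I_k=\big(\tfrac1{k+1},\,k\big)$, $k\ge1$. Each $(I_k\times S^{m-1},\,ds^2+g_{S^{m-1}})$ is a smooth Riemannian manifold, hence (as recalled before the proposition for smooth metrics) a countably $\Hcal^m$ rectifiable metric space; fix a countable family of bi‑Lipschitz charts $\psi_{k,l}:U_{k,l}\subset\R^m\to I_k\times S^{m-1}$ covering it. The intrinsic product distance $d_{\mathrm{prod}}$ on $I_k\times S^{m-1}$ equals $\sqrt{|s-s'|^2+d_{S^{m-1}}(\theta,\theta')^2}$, hence is comparable to $|s-s'|+d_{S^{m-1}}(\theta,\theta')$; combining this with the two‑sided estimate (applied with $a=\tfrac1{k+1}$, $b=k$) shows that the inclusion $\iota_k:(I_k\times S^{m-1},d_{\mathrm{prod}})\hookrightarrow(M,d_g)$ is bi‑Lipschitz onto its image. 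Therefore the compositions $\iota_k\circ\psi_{k,l}$ form a countable family of bi‑Lipschitz charts for $(M,d_g)$ whose images exhaust $(0,+\infty)\times S^{m-1}$, i.e.\ all of $M$ outside an $\Hcal^m$‑null set. This proves $M^m\in\RSzerobar$ is countably $\Hcal^m$ rectifiable, and the displayed formula for $L_g(C)$ is just its expression in the geodesic coordinates $(s,\theta)$, which are genuine coordinates on $(0,+\infty)\times S^{m-1}$, the set containing every pole‑avoiding curve.

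I expect the main obstacle to be the lower bound near the pole: a priori a connecting curve could dip close to the pole, where $f\to0$, and thereby make the angular part of its length arbitrarily small; the case split on $\min_t s_C(t)$ is precisely what prevents this, since approaching the pole is expensive in radial length. A secondary point requiring care is that $f$ is only $L^2_\loc$ and possibly discontinuous, so $f\circ s$ must be defined through the fixed right‑continuous representative — here monotonicity of $f$ (local boundedness, countably many jumps, Borel measurability) is exactly what makes $L_g$, and hence $d_g$, well defined.
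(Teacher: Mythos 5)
Your proof is correct and takes a genuinely different route from the paper's. The paper works directly with small spherical caps $Q_k$ of carefully shrunk opening angle $\theta_k$ (so that Lemma~\ref{lem-rect} applies), and establishes the Lipschitz bound on $\varphi_k^{-1}$ by a three-step surgery on a near-minimizer: first straighten the angular part via Lemma~\ref{lem-geod}, then truncate $s$ from below, then replace the portion dipping below $1/k$ by a constant-$s$ spherical arc whose length Lemma~\ref{lem-rect} controls. You instead establish a clean two-sided comparison of $d_g$ with the product metric $|s-s'|+d_{S^{m-1}}(\theta,\theta')$ on entire annuli $[a,b]\times S^{m-1}$, handling the deep-dipping competitor by the simple dichotomy on $\min_t s_C(t)$: either the curve stays above $a/2$, in which case $f\circ s_C\ge f(a/2)$ gives the angular lower bound, or it dips below $a/2$, in which case the radial total variation alone is $\ge a\ge (a/\pi)\,d_{S^{m-1}}$, and this already dominates. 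This bypasses Lemmas~\ref{lem-geod} and~\ref{lem-rect} and the shrinking-cap bookkeeping entirely, and it lets you factor the rectifiability through the already-known case of the smooth product $(I_k\times S^{m-1},\,ds^2+g_{S^{m-1}})$, making the argument more modular. What the paper's heavier construction buys is the explicit charts $\varphi_k$, $U_k$, $Q_k$ of \eqref{chart-1}--\eqref{chart-4}, which are cited verbatim again in the proof of Proposition~\ref{prop-int-curr-space}; your annular charts would serve there too, but if you intend to develop the rest of Section~\ref{sec:4} you would want to record them in a form amenable to the telescoping/absolute-convergence argument used for the current $T$ and its boundary.
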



\subsection{Construction of the countably rectifiable structure}

Before we prove Proposition~\ref{prop-rect}, we need a few lemmas which will be used again elsewhere in the paper.    The first lemma is a standard lemma
from the study of smooth warped product spaces which we include since
it is not so well known although nowhere is it used that $f$ is smooth.

\begin{lemma}\label{lem-geod}
Let $(M, d)$ be defined as in Proposition~\ref{prop-rect}, let
$p_i=(s_i,\theta_i)\in M$ for $i=0,1$.   If $C(t)=(s(t), \theta(t))$
is piecewise smooth with $C(i)=p_i$ and $s(t)>0$ parametrized
so that $|\theta'(t)|=z$ almost everywhere where $z$ is constant,
and $C_2(t)=(s(t), \bar{\theta}(t))$, 
where 
$\bar{\theta}(t)$ is
a minimal geodesic in $S^{m-1}$ parametrized proportional to its arclength
with $\bar{\theta}(i)=\theta_i$ then $L(C_2) \le L(C)$.
\end{lemma}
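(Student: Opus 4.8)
The plan is to compare the two curves $C$ and $C_2$ pointwise in their integrands and conclude $L(C_2)\le L(C)$ by integrating the inequality over $[0,1]$. The curve $C$ has length
$$
L(C)=\int_0^1 \sqrt{|s'(t)|^2 + |(f\circ s)(t)|^2\,|\theta'(t)|^2}\, dt
= \int_0^1 \sqrt{|s'(t)|^2 + |(f\circ s)(t)|^2\, z^2}\, dt,
$$
using the hypothesis $|\theta'(t)|=z$ a.e.\ with $z$ constant. The curve $C_2$ shares the same radial component $s(t)$, so its length is
$$
L(C_2)=\int_0^1 \sqrt{|s'(t)|^2 + |(f\circ s)(t)|^2\,|\bar\theta'(t)|^2}\, dt .
$$
Since the radial parts coincide, it suffices to show that $|\bar\theta'(t)|\le z = |\theta'(t)|$ for almost every $t$; then the integrand for $C_2$ is pointwise no larger than that for $C$ (note $f\circ s\ge 0$, so the $f^2|\bar\theta'|^2$ term only shrinks), and integrating gives the claim.

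For this, I would first bound the arclength of $\theta$ in $S^{m-1}$ from below by the distance between its endpoints: $L_{S^{m-1}}(\theta)=\int_0^1|\theta'(t)|\,dt = z \ge d_{S^{m-1}}(\theta_0,\theta_1)$, the last inequality being the defining property of the intrinsic distance on the sphere. On the other hand, $\bar\theta$ is by hypothesis a minimal geodesic from $\theta_0$ to $\theta_1$ parametrized proportionally to arclength, so $|\bar\theta'(t)|$ is the constant $L_{S^{m-1}}(\bar\theta) = d_{S^{m-1}}(\theta_0,\theta_1)$ for all $t$. Combining, $|\bar\theta'(t)| = d_{S^{m-1}}(\theta_0,\theta_1)\le z = |\theta'(t)|$ a.e., which is exactly what is needed. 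Substituting into the length integral for $C_2$ and using $|(f\circ s)(t)|^2\ge 0$ yields $\sqrt{|s'|^2+|f\circ s|^2|\bar\theta'|^2}\le \sqrt{|s'|^2+|f\circ s|^2 z^2}$ pointwise a.e., and integrating over $[0,1]$ gives $L(C_2)\le L(C)$.

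I do not anticipate a serious obstacle here; the only points requiring care are bookkeeping ones. First, one should check that the constant-speed reparametrization of $\theta$ assumed in the statement is harmless — any piecewise smooth curve in $S^{m-1}$ with positive length can be reparametrized to have constant speed $z=L_{S^{m-1}}(\theta)$ without changing its length, and a degenerate case $z=0$ (so $\theta$ constant, $\theta_0=\theta_1$) makes $\bar\theta$ constant as well and the inequality trivial. Second, one must make sure the composite function $f\circ s$ is well-defined and measurable along the curve despite $f$ having only the right-continuous representative and countably many jumps; since $s(t)$ is (piecewise) smooth and $f$ is monotone with a fixed pointwise representative, $f\circ s$ is a bounded measurable function on $[0,1]$ (it is even of bounded variation when $s$ is monotone), so the integrals make sense. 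With these remarks, the pointwise comparison and integration complete the argument.
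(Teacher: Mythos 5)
Your proof is correct and follows essentially the same route as the paper's: observe that $z=L_{S^{m-1}}(\theta)\ge d_{S^{m-1}}(\theta_0,\theta_1)=|\bar\theta'(t)|$ because $\bar\theta$ is a minimal geodesic parametrized proportionally to arclength, and then compare the length integrands pointwise. The extra remarks on the degenerate case $z=0$ and measurability of $f\circ s$ are harmless bookkeeping the paper leaves implicit.
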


\begin{proof}
First note that $z=L(\theta(0,1))$ viewed as a curve in the sphere
and that $|\bar{\theta}'(t)|=L(\bar{\theta}(0,1)) \le z$ since $\bar{\theta}$
is the minimal geodesic between the endpoints.   Then we have
$$
\aligned
L(C)&=\int_0^1 \sqrt{ |s'(t)|^2 \,+ \,|f(s(t))|^2\,|\theta'(t)|^2\,} \, dt
 =\int_0^1 \sqrt{ |s'(t)|^2 \,+ \,|f(s(t))|^2\,z^2\,} \, dt
\\
&\ge\int_0^1 \sqrt{ |s'(t)|^2 \,+ \,|f(s(t))|^2\,|\bar{\theta}'(t)|^2\,} \, dt=L(C_2). 
\endaligned 
$$
\end{proof}

The next lemma allows us to bound the distances between points from below. Recall that the shape function $f$ is strictly increasing and may have jump discontinuities, so that $f^{-1}$ is well-defined but is continuous and non-decreasing only. 

\begin{lemma} \label{lem-rect}
Let $(M, d)$ be defined as in Proposition~\ref{prop-rect}
and $k\in \mathbb{N}$.
Given a pair of points $p_i\in M$ such that $s(p_i)=1/k$ and
taking $\theta_i\in S^{m-1}$ to be the corresponding points in
the sphere $S^{m-1}$.  If
\be\label{lem-rect-1}
(f(1/k)-f(0))d_{S^{m-1}}(\theta_1,\theta_2)/2 < 2(1/k -s_k),
\ee
where $s_k:=f^{-1}( |f(1/k)-f(0)| /2 )$, 
then
\be
d_M(p_1, p_2)\ge |f(1/k)-f(0)|d_{S_2}(\theta_1,\theta_2)/2.
\ee
\end{lemma}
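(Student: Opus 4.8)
The plan is to bound from below the length of an arbitrary piecewise smooth curve $C$ joining $p_1$ to $p_2$, and conclude by taking the infimum. By Lemma~\ref{lem-geod}, among all curves joining $p_1$ and $p_2$ with a prescribed total angular length $z = L_{S^{m-1}}(\theta(0,1))$, the length is minimized by replacing $\theta(t)$ with a minimal geodesic $\bar\theta(t)$ in $S^{m-1}$ parametrized proportionally to arclength; so I may assume from the start that $C$ has the form $(s(t),\bar\theta(t))$ with $|\bar\theta'(t)| \equiv L_{S^{m-1}}(\theta(0,1)) =: \ell$, and of course $\ell \ge d_{S^{m-1}}(\theta_1,\theta_2)$ since the angular path connects $\theta_1$ to $\theta_2$. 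The two cases to analyze are: (a) $C$ stays in the region $s(t) \ge s_k$ for all $t$, and (b) $C$ dips below $s = s_k$ at some time.

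In case (a), $f$ is monotone, so $f(s(t)) \ge f(s_k) = |f(1/k)-f(0)|/2$ for all $t$ (using the right-continuous representative and that $s_k$ is defined via $f^{-1}$ of that value). Then
$$
L(C) = \int_0^1 \sqrt{|s'(t)|^2 + |f(s(t))|^2 |\bar\theta'(t)|^2}\,dt \ \ge\ \int_0^1 |f(s(t))|\,|\bar\theta'(t)|\,dt \ \ge\ \frac{|f(1/k)-f(0)|}{2}\,\ell \ \ge\ \frac{|f(1/k)-f(0)|}{2}\,d_{S^{m-1}}(\theta_1,\theta_2),
$$
which is the desired bound. In case (b), since $s(0) = s(1) = 1/k$ and $s$ drops below $s_k$, the $s$-component alone traverses the interval $[s_k, 1/k]$ at least twice (down and back up), so $\int_0^1 |s'(t)|\,dt \ge 2(1/k - s_k)$; hence $L(C) \ge \int_0^1 |s'(t)|\,dt \ge 2(1/k - s_k)$, and by hypothesis \eqref{lem-rect-1} this is strictly larger than $|f(1/k)-f(0)|\,d_{S^{m-1}}(\theta_1,\theta_2)/2$. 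In both cases $L(C) \ge |f(1/k)-f(0)|\,d_{S^{m-1}}(\theta_1,\theta_2)/2$, and taking the infimum over all admissible $C$ gives $d_M(p_1,p_2) \ge |f(1/k)-f(0)|\,d_{S^{m-1}}(\theta_1,\theta_2)/2$. (Here $d_{S_2}$ in the statement should read $d_{S^{m-1}}$.)

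The main obstacle I anticipate is handling the low regularity of $f$ cleanly in case (b): one must be careful that $s(t)$ dipping below $s_k$ really does force the claimed minimal $s$-variation, which is a consequence of continuity of $s(t)$ and the intermediate value theorem, combined with the (merely continuous, non-decreasing) inverse $f^{-1}$ being used to define $s_k$ — so that $f(s) < |f(1/k)-f(0)|/2$ forces $s < s_k$ only up to the jump set of $f$, which is harmless for the length estimate since it is a null set. A secondary subtlety is that curves must avoid the pole, but since both endpoints have $s = 1/k > 0$ and any curve passing near $s=0$ would already have large $s$-variation, this regime is absorbed into case (b). The reduction via Lemma~\ref{lem-geod} is the conceptual heart and is already in hand.
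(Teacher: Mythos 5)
Your proof is correct and is essentially the paper's own argument recast from contradiction form into a direct two‑case analysis: the paper assumes $L(C)<|f(1/k)-f(0)|\,d_{S^{m-1}}(\theta_1,\theta_2)/2$, invokes Lemma~\ref{lem-geod}, deduces $\min_t f(s(t))<|f(1/k)-f(0)|/2$ hence some $s(t_0)\le s_k$, and then uses $L(C)\ge\int|s'|\ge 2(1/k-s_k)$ to contradict \eqref{lem-rect-1} — exactly your cases (a) and (b). One small slip: after applying Lemma~\ref{lem-geod} the replacement curve $\bar\theta$ is a minimal geodesic, so $|\bar\theta'(t)|\equiv d_{S^{m-1}}(\theta_1,\theta_2)$ (not $\ell=L_{S^{m-1}}(\theta)$), but this is harmless since you only use $\int_0^1|\bar\theta'|\,dt\ge d_{S^{m-1}}(\theta_1,\theta_2)$.
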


\begin{proof}
First observe that $s_k < 1/k$ and 
\be\label{s_k}
f(s) > |f(1/k)-f(0)|/2, \qquad \,s> s_k.  
\ee 
Now assume on the contrary that there is a piecewise
smooth curve $C$ (avoiding the pole or boundary joining
the points $p_i$) whose length has
\bel{lem-rect-2}
L(C)< |f(1/k)-f(0)|d_{S_2}(\theta_1,\theta_2)/2.
\ee
By Lemma~\ref{lem-geod} we can assume 
$C(t)=(s(t), \theta(t))$ where $\theta(t)$ is a minimal geodesic in the
sphere such that $|\theta'(t)|=d_{S_2}(\theta_1,\theta_2)$ for
almost every $t\in [0,1]$.   Thus
$$
\aligned
|f(1/k)-f(0)|/2
> L(C) /d_{S_2}(\theta_1,\theta_2) 
&= \int_0^1 |\theta'(t)|^{-1} \sqrt{ |s'(t)|^2 \,+ \,|f(s(t))|^2\,|\theta'(t)|^2\,} \, dt \\
&\ge \int_0^1 |f(s(t))|  \, dt \ge \min \big \{ f(s(t)): \, t\in [0,1] \big \}.
\endaligned
$$
Combining this with (\ref{s_k}) we see that there exists $t_0\in (0,1)$ such that $s(t_0)\le s_k$, thus
$$
\aligned
L(C)\ge \int_0^1  |s'(t)| \, dt
&  \ge |s(1)-s(t_0)|+|s(t_0)-s(1)|
\\
& \ge 2(1/k-s_k).
\endaligned
$$
Combining this with (\ref{lem-rect-1}) contradicts (\ref{lem-rect-2}) 
and we are done.
\end{proof}

\begin{proof}[Proof of Proposition~\ref{prop-rect}]
Let 
\bel{chart-1}
U_k= \{s\theta: \,s\in (1/k,k), \theta \in Q_k\} \subset \R^m,
\ee 
where $Q_k \subset S^{m-1}$ is a spherical cap of opening angle 
$\theta_k\in (0, \pi/4)$ chosen so that
\be
(f(1/k)-f(0))2\theta_k/2 < 2(1/k -s_k)
\ee
with $s_k$ defined as in Lemma~\ref{lem-rect} depending on $f$.

We define a countable collection of charts
\be
\varphi_k: U_k \subset \R^m \to V_k \subset M
\ee
where 
\bel{chart-4}
\varphi_k(s\theta)=(s, \theta).
\ee
Here we take all $k \in \mathbb{N}$
and, for each $k$, a finite collection of spherical caps $Q_k$ needed to cover $S^{m-1}$.   These charts cover all of $M$ except the pole or the boundary.

First we show $\varphi_k: V_k \to U_k$ are Lipschitz with
Lipschitz constant 
\be
L = \max\{1, \sqrt{2}k |f(k)|\}.
\ee 
Let $x_i=s_i\theta_i\in U_k\subset \R^m$ and join them by a line segment, $\gamma:[0,1]\to \R^m$
with $\gamma(i)=x_i$.   Since
$d_{S^{m-1}}(\theta_0, \theta_1)< \theta_k$ 
we can write
\be
\gamma(t)=s(t)\theta(t) \textrm{ where }
s(t)\in (\sqrt{2}/(2k), k) \textrm{ for all } t\in [0,1].
\ee 
Thus
$\varphi_k(x_i)$ are joined by the smooth curve $C(t)=(s(t), \theta(t))\in M$
and
$$
\aligned
d_g(\varphi_k(x_0), \varphi_k(x_1)
\le  L_g(C) 
& = \int_0^1 \sqrt{ |s'(t)|^2 \,+ \,|f(s(t))|^2\,|\theta'(t)|^2\,} \, dt \\
&\le  \int_0^1 \sqrt{ |s'(t)|^2 \,+ \,|f(k)|^2\,|\theta'(t)|^2\,} \, dt \\
&\le  L \int_0^1 \sqrt{ |s'(t)|^2 \,+ \, | \sqrt{2}/(2k) |^2   \,|\theta'(t)|^2\,} \, dt \le  L |x_0-x_1|.
\endaligned
$$
We claim that $\varphi_k^{-1}: V_k\to U_k$ is Lipschitz.   It
will take us three steps to prove this.

If $p,q\in V_k$, then for any $\eps>0$, there
exists a curve $C_1:[0,1]\to M$ from $p$ to $q$ such that
$L(C_1) \le d_g(p,q)+\eps$.   Applying Lemma~\ref{lem-geod},
we can write $C_1(t)=(s_1(t), \theta(t))$ and we can ensure that
$\theta(t)\in Q_k$ since $Q_k$ is convex in $S^{m-1}$.   

We define a new curve
$C_2:[0,1]\to V_k$ by $C_2(t)=(s_2(t), \theta(t))$ and 
$s_2(t)=\max\{s_1(t),j\}$, 
so that $C_2(0)=C_1(0)=p$ and $C_2(1)=C_1(1)=q$.   Furthermore
\bel{eq:epsil}
L(C_2) \le L(C_1)\le d_g(p,q)+\eps, 
\ee
since $f$ is monotone increasing (which is the key assumption required in our construction). 

If $s_2(t) \ge 1/k$, let $C_3(t)=C_2(t)$ for all $t\in [0,1]$.    Otherwise let
$t_1=\inf\{t\in [0,1]: \, s_2(t)< 1/k\}$ and $t_2=\sup\{t\in [0,1]: \, s_2(t)< 1/k\}$,
and set 
$C_3(t)=C_2(t)$ for all $t\in [0,1]\setminus (t_1,t_2)$
 and for
$t\in (t_1, t_2)$ let $C_3(t)=(1/k, \theta_3(t))$ where $\theta_3(t)$
is running minimally from $\theta(t_1)$ to $\theta(t_2)$.  Since
$d_{S^2}( \theta(t_1),\theta(t_2))< 2\theta_k$
and
$$
(f(1/k)-f(0))d_{S^2}( \theta(t_1),\theta(t_2)/2 < 2(1/k -s_k), 
$$
we have (by Lemma~\ref{lem-rect})
$$
L(C_2(t_1,t_2)) \ge (f(1/k)-f(0))d_{S_2}(\theta(t_1),\theta(t_2))/2.
$$
Thus, we find
$$
\aligned
L(C_3(t_1, t_2))
&= \int_{t_1}^{t_2} f(1/k) |\theta_3'(t)|\, dt \\
&=  f(1/k) d_{S_2}(\theta(t_1),\theta(t_2))
\le  \frac{2f(1/k)}{f(1/k)-f(0)} L(C_2(t_1,t_2))
\endaligned
$$
and
$$
\aligned
L(C_3)&=L(C_3(0,t_1))+L(C_3(t_1,t_2))+L(C_3(t_2,1)) \\
&\le L(C_2(0,t_1))+\frac{2f(1/k)}{f(1/k)-f(0)}L(C_2(t_1,t_2))+L(C_2(t_2,1)) \\
&\le \frac{2f(1/k)}{f(1/k)-f(0)} L(C_2). 
\endaligned
$$

Next, since $C_3(t)=(s_3(t), \theta_3(t)) \subset V_k$, we
can define a curve 
\be
\varphi_k^{-1}\circ C_3(t)=s_3(t)\theta_3(t)\in U_k \subset \R^m
\ee
 running from 
$\varphi_k^{-1}(p)$ to $\varphi_k^{-1}(q)$ whose length can be
estimated as follows
$$
\aligned
\left|\varphi_k^{-1}(p)- \varphi_k^{-1}(q)\right| 
&\le \int_0^1 \sqrt{s_3'(t)^2 + (s_3(t))^2 \theta'(t)^2 }\, dt \\
&\le   \left( 1+\frac{1}{f(1/k)}\right) \int_0^1 \sqrt{s_3'(t)^2 + (f(s_3(t)))^2 \theta'(t)^2 }\, dt
\endaligned
$$
since $f(s_3(t))\ge f(1/k)$, so that 
$$
\aligned 
\left|\varphi_k^{-1}(p)- \varphi_k^{-1}(q)\right| 
&=   \left( 1+\frac{1}{f(1/k)}\right) \int_0^1 g(C_3'(t), C_3'(t))^{1/2} \, dt 
= \left( 1+\frac{1}{f(1/k)}\right)\, L_g(C_3) \\
&\le  \left( 1+\frac{1}{f(1/k)}\right)\frac{2f(1/k)}{f(1/k)-f(0)}  \, L_g(C_2) \\
&\le   \left( 1+\frac{1}{f(1/k)}\right)\frac{2f(1/k)}{f(1/k)-f(0)} \, (d_g(p,q)+\eps).
\endaligned
$$
Thus $\varphi_k^{-1}: V_k\to U_k$ is Lipschitz with Lipschitz constant
$ \left( 1+\frac{1}{f(1/k)}\right)\frac{2f(1/k)}{f(1/k)-f(0)}$.

We now have a countable collection of bi-Lipschitz charts which cover all of
$M$ except the pole or the boundary.   The pole clearly has Hausdorff measure $0$ since it is a single point.  The boundary also has 
$\Hcal^m$ measure $0$ since it is a sphere of radius $f(0)$ and
dimension $m-1$.  
\end{proof}


\section{Viewing $\RSzerobar$ spaces as integral current spaces}
\label{sec:4} 

\subsection{Background on integral currents}

Federer and Flemming \cite{FF,Federer} introduced the notion of an integral current in Euclidean space as a way to generalize the notion of a smooth oriented submanifold with boundary.   If $\psi: M^m \to \R^N$ be a bi-Lipschitz embedding of a smooth oriented submanifold, it can
be viewed as an $m$-dimensional integral current, $T=\psi_\#[[M]]$, which acts on differential $m$-forms, $\omega$, so that
$T(\omega) = \int_M \psi^*\omega$.  
In this way they were able to define the weak convergence of submanifolds
viewed as currents, $T_j \to T$ if and only if $T_j(\omega)\to T(\omega)$ for all differential forms of compact support.  They proved that this weak convergence is equivalent
to flat convergence when the sequence has a uniform bound $\vol(M)+\vol(\partial M)$.  The limits of the submanifolds under this notion of convergence are called integral currents.   These integral currents, $T$, are {\bf rectifiable} in the sense that there exists a countable collection bi-Lipschitz charts 
$\psi_k: U_k \to V_k \subset \R^N$ such that
$
T(\omega)=\sum_k \int_M h_k \psi_k^*\omega, 
$
where $h_k \in \Z$.   Furthermore, one can define a weighted volume, called the {\bf mass}: 
\bel{FFmass}
\mass(T) =\sum_k |h_k| \vol(\psi_k(U_k))<+\infty.
\ee
In addition they have a boundary defined by $\partial T(\omega)=T(d\omega)$ and this boundary is also an integral current.   In particular $\mass(\partial T)<+\infty$.

Ambrosio and Kirchheim extended the notion to integral currents on complete metric spaces $(Z,d)$ by taking them to act on tuples of Lipschitz functions, $(f, \pi_1,...,\pi_m)$ rather than smooth forms.   If $\psi: M^m \to Z$   is Lipschitz
then $T=\psi_\#[[M]]$ is defined so that
\be
T(f, \pi_1,...,\pi_m)=\int_M (f\circ \psi) \, d(\pi_1\circ\psi)\wedge \cdots \wedge d(\pi_m\circ\psi).
\ee
More generally an $m$-dimensional rectifable currents, $T$, defined on $m+1$ tuples of Lipschitz
functions $(f, \pi_1,...,\pi_m)$ is defined by a collection of bi-Lipschitz charts
$\varphi_k:U_k \to V_k\subset Z$ such that
\bel{gen-T}
T(f, \pi_1,....\pi_m):= \sum_{k=1}^{+\infty} \int_{U_k} h_k f\circ \varphi_k
\, d(\pi_1\circ\varphi_k) \wedge \cdots d(\pi_m\circ \varphi_k),
\ee
where $h_k$ are positive integers and the $U_k$ are Borel measurable sets in $\R^m$.    They also define mass, $\mass(T)$, which we will refer to as Ambrosio-Kirchheim mass, which they require to be finite.  This mass does not satisfy (\ref{FFmass}) but it can be bounded:
\bel{AKmass}
\mass(T) \le C_m\sum_k |h_k| \Hcal_m (\varphi_k(U_k))<+\infty, 
\ee
where $C_m$ is a constant depending on the dimension.
 A rectifiable current $T$ is called an {\bf integral current} (written $T\in \intcurr_m(Z)$)
if $\partial T$ has finite mass where
\be
\partial T(f, \pi_1,...,\pi_{m-1})=T(1, \pi_1,...,\pi_{m-1}), 
\ee
in which case they prove $\partial T$ is also rectifiable.
They define weak convergence of integral currents testing against the tuples of functions which agrees with flat convergence when the
$\mass(T)+\mass(\partial T)$ is uniformly bounded from above.   
They also define $\set(T)\subset Z$ as the set of positive density of
$T$ and prove that this is a countably $\Hcal^m$ rectifiable set using the
same charts as the ones in (\ref{gen-T}).

Finally, given a Lipschitz map $\varphi: Z_1 \to Z_2$, and an integral
current $T \in \intcurr_m(Z_1)$, they define the push-forward $\varphi_{\#}T \in \intcurr_m(Z_2)$ as follows
\bel{push-forward}
\varphi_{\#}T (f, \pi_1,...,\pi_m)=T(f\circ \varphi, \pi_1\circ \varphi,...,\pi_m\circ\varphi.
\ee
When $\varphi$ is metric isometric embedding, that is
\be\label{metric-isom-embed}
d_{Z_2}(\varphi(x), \varphi(y))=d_{Z_1}(x,y), \qquad  x,y\in Z_1,
\ee
then one has
\be\label{mass=}
\mass(\varphi_{\#}T)=\mass(T).
\ee


\subsection{Background on integral current spaces} 

In this paper we are not studying submanifolds of any metric space, but rather sequences of Riemannian manifolds.  In Sormani and Wenger~\cite{SW}, the notion of an integral current space was introduced as a way to generalize the notion of a smooth oriented Riemannian manifold with boundary.  The intrinsic flat distance
between integral current spaces was defined to extend the notion of Federer-Flemming's flat distance between integral currents in Euclidean space.   Thus one is able to take intrinsic flat limits of Riemannian manifolds and study their limits which are metric spaces called integral current spaces.   One may
also consider sequences of integral current spaces when one does not wish to require the full regularity required to define a smooth Riemannian manifold
with a smooth metric tensor.

An {\bf integral current space} $(X,d,T)$ is a weighted oriented countably $H^m$ rectifiable metric space, $X$, endowed with an integral current structure $T\in \intcurr_m(\bar{X})$ such that $X=\set(T)$.   This means that $X$ has a countable collection of bi-Lipschitz charts, $\varphi_k: U_k \subset \R^m \to V_k \subset X$ where
$U_k$ are Borel measurable sets and where $V_k$
cover almost all of $X$:
\be\label{almost-all}
H_m\left(X \setminus \bigcup_{k=1}^{+\infty} V_k \right)=0
\ee
and an $m$-dimensional integral current structure, $T$, defined on $m+1$ tuples of Lipschitz
functions $(f, \pi_1,...,\pi_m)$ as follows:
\bel{gen-T-2}
T(f, \pi_1,....\pi_m):= \sum_{k=1}^{+\infty} \int_{U_k} h_k f\circ \varphi_k
\, d(\pi_1\circ\varphi_k) \wedge \cdots d(\pi_m\circ \varphi_k), 
\ee
where $h_k$ are positive integers.
In addition $T$ must have finite Ambrosio-Kirchheim mass,
$\mass(T)< +\infty$, and the boundary current,
\be
\partial T(f, \pi_1,...,\pi_{m-1}):=T(1, f, \pi_1,...,\pi_{m-1}), 
\ee
which is $m-1$ dimensional
must have finite Ambrosio-Kirchheim mass, $\mass(\partial T)<+\infty$.   
  
In \cite{SW} it was shown that any compact oriented smooth Riemannian manifold with boundary $(M^m,g)$ can be considered to be an integral current space $(M, d, T)$, by setting the metric $d=d_g$ as in
(\ref{d_g})-(\ref{L_g})
and taking
\bel{T}
T(f, \pi_1,...\pi_m)=\int_M f \, d\pi_1\wedge \cdots d\pi_m.
\ee
One can easily find a collection of oriented bi-Lipschitz charts with disjoint images that cover almost all of $M$ as in (\ref{almost-all}).
Taking $h_k=1$ we can define $T$ as in (\ref{gen-T-2})
with $h_k=1$ to obtain (\ref{T}).  The Ambrosio-Kirccheim mass of $T$ is then just the
volume of $M$, that is, $\mass(T)=\vol_m(M)$, which is finite as required.   The {\bf boundary} of $T$, is defined as in the work of Ambrosio and Kirchheim as
\be
\aligned
\partial T(f, \pi_1,...\pi_{m-1})
 & =T(1,f,\pi_1,...,\pi_{m-1})= \int_{M} 1 \, df \wedge d\pi_1\wedge \cdots d\pi_{m-1}
\\
& =
\int_{\partial M} f \, d\pi_1\wedge \cdots d\pi_{m-1}
\endaligned
\ee
also has finite mass, $\mass(\partial T)=\vol_{m-1}(\partial M)$.

Note that if a smooth Riemannian manifold $M$ is non-compact
and asymptotically flat, then its volume is infinite and so it is
not an integral current space.  However smooth 
compact subregions of $M$ are integral current spaces.
For example, Lee and Sormani \cite{LS1} applied the
fact that tubular neighborhoods of symmetric spheres, $\Sigma$,
\be 
T_R(\Sigma)= \big\{x: d(x, \Sigma)\le R \big\}
\ee
are integral current spaces.  Thus we could study how close they were
in the intrinsic flat sense to the corresponding regions in Euclidean space.
In the next section, we show that we can similarly study 
tubular neighborhoods of symmetric spheres within 
$M\subset \RSzerobar$.


\subsection{Tubular neighborhoods viewed  as integral current spaces}

Here, we prove Propositions~\ref{prop-int-curr-space} and~\ref{prop-int-curr-space-2} by showing that tubular neighborhoods and inner tubular neighborhoods in $M\subset \RSzerobar$  are integral current spaces.
The manifolds themselves have infinite volume and are this not integral current spaces.

\begin{proposition}[Tubular neighborhoods viewed as integral current spaces]
 \label{prop-int-curr-space}
Let $M^m\in \RSzerobar$ and $\Sigma=s^{-1}(s_0)$ 
be a level set of the associated function $s$.  Fix any $D>0$ and define the distance $d_g$
as in (\ref{d_g})-(\ref{L_g}).  Then, the tubular neighborhood
\be 
T_R(\Sigma) := \big\{ x: d_g(x, \Sigma)\le D \big\}
\ee
is an
integral current space when viewed as a metric space with the restricted metric $d_g$ and  whose current structure is defined by (\ref{T}).   In addition,
the boundary of the tubular neighborhood viewed as an integral current spaces is the boundary of the tubular neighborhood viewed as a submanifold where integral current structure is defined as usual with opposing orientations on the outer and inner boundaries
\be\label{eqn-int-curr-space}
\partial T(f, \pi_1,...,\pi_{m-1})
=\int_{s^{-1}(s_0+D)} f d\pi_1\wedge \cdots \wedge d\pi_{m-1}
-\int_{s^{-1}(s_D)} f d\pi_1\wedge \cdots \wedge d\pi_{m-1}, 
\ee
where $s_D=\max\big\{ s_0-D,0 \big\}$.
\end{proposition}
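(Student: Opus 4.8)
The plan is to reduce the statement about tubular neighborhoods in $M \in \RSzerobar$ to the already-established facts about smooth Riemannian manifolds with boundary viewed as integral current spaces, by exhibiting an explicit current structure $T$ via the bi-Lipschitz charts from Proposition~\ref{prop-rect} and verifying the two finiteness conditions ($\mass(T)<\infty$ and $\mass(\partial T)<\infty$) together with the boundary formula. First I would observe that $T_R(\Sigma)$, as a subset of $M$ in geodesic coordinates, is precisely $\{(s,\theta): s \in [s_D, s_0+D],\ \theta \in S^{m-1}\}$ with $s_D = \max\{s_0-D,0\}$ — this uses that $f$ is monotone, so the distance function from $\Sigma$ reduces to $|s - s_0|$ along radial geodesics (as in Lemma~\ref{lem-geod}, the optimal curves between concentric spheres are radial). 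Thus $T_R(\Sigma)$ is a compact set in $M$ carrying the restricted metric $d_g$, and by Proposition~\ref{prop-rect} it inherits a finite countable bi-Lipschitz atlas $\varphi_k: U_k \to V_k$ (the charts \eqref{chart-1}–\eqref{chart-4}, intersected with the compact $s$-range), covering all of $T_R(\Sigma)$ except possibly the pole (which has $\Hcal^m$-measure zero).

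Next I would define the current structure $T$ by \eqref{T}, i.e. integration of $f\, d\pi_1 \wedge \cdots \wedge d\pi_m$ against the coordinate charts with all weights $h_k = 1$, oriented consistently. The key computation is the mass bound: using the charts and the volume element $\sqrt{\det g}\, ds\, d\theta = f(s)^{m-1}\, ds\, d_{S^{m-1}}$, one gets $\mass(T) = \vol(T_R(\Sigma)) = \omega_{m-1}\int_{s_D}^{s_0+D} f(s)^{m-1}\, ds$, which is finite because $f \in L^2_\loc$ — more precisely $f^{m-1}$ is integrable on the compact interval $[s_D, s_0+D]\subset (0,\infty)$ (when $s_D>0$) and, when there is a pole, $f(s)^{m-1}\to 0$ as $s\to 0$ so the integral over $[0, s_0+D]$ still converges (bounded monotone integrand near $0$). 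For the boundary, I would compute $\partial T(f,\pi_1,\dots,\pi_{m-1}) = T(1,f,\pi_1,\dots,\pi_{m-1})$ directly in coordinates by Stokes/Fubini on the product region $[s_D,s_0+D]\times S^{m-1}$: the $S^{m-1}$-integration produces no boundary term (closed manifold), while the $s$-integration of $\partial_s$ picks up the two slices $s = s_0+D$ and $s = s_D$ with opposite signs, giving exactly \eqref{eqn-int-curr-space}. Finiteness of $\mass(\partial T)$ is then $\omega_{m-1}(f(s_0+D)^{m-1} + f(s_D)^{m-1}) < \infty$, using the right-continuous representative \eqref{f-values} of $f$ at these points.

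**The main obstacle** I anticipate is the behavior at the pole and the low regularity of $f$ (only $L^2_\loc$, with jump discontinuities), which makes the "integration by parts" producing the boundary terms delicate: one cannot naively apply Stokes' theorem to a possibly-discontinuous coefficient $f$. The cleanest way around this is to note that the Ambrosio–Kirchheim current $T$ is defined chart-by-chart via \eqref{gen-T-2} on Borel sets $U_k \subset \R^m$, where the forms $d\pi_j$ are honest Lipschitz differentials and $f(s)$ appears only through the pointwise (right-continuous) representative as an $L^1$ weight against Lebesgue measure on $U_k$; the boundary operator $\partial T(f,\pi_1,\dots,\pi_{m-1}) = T(1,f,\pi_1,\dots,\pi_{m-1})$ is then computed by the standard Ambrosio–Kirchheim formula for pushforwards of the Euclidean region $[s_D,s_0+D]\times Q_k$, which only sees the smooth chart geometry and not the metric $g$ — so the irregularity of $f$ enters only in the mass estimates, where $L^2_\loc$ (hence $L^1_\loc$, and $f^{m-1}\in L^1_\loc$ once $0\le f'\le 1$ is invoked, or directly from $f\in L^2_\loc$ plus local boundedness via monotonicity) suffices. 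The pole is handled by a standard exhaustion argument: work with $\{s \ge 1/k\}$, obtain integral current spaces $T^{(k)}$ with uniformly bounded mass and boundary mass, and pass to the limit in the flat topology using the Ambrosio–Kirchheim compactness/closure theorem, identifying the limit with $T$ since the extra boundary sphere $s=1/k$ has mass $\omega_{m-1}f(1/k)^{m-1}\to 0$ by \eqref{f-pole}. I would present this exhaustion step carefully, as it is where the "no pole contribution" claim is actually earned.
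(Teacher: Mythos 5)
Your proposal is correct, and it reaches the same structure (bi-Lipschitz charts from Proposition~\ref{prop-rect}, current $T$ via \eqref{T}, mass bounded by the volume of the tubular region, boundary computed slice-by-slice with cancellation in between), but it handles the pole by a genuinely different mechanism. The paper works directly with a countable partition $[s_D, s_0+D] = \bigcup_k [a_k, b_k]$ accumulating at the pole and proves the resulting series converge by hand: the contribution of the $k$-th annulus to $T$ is bounded by $\omega_{m-1} f(s_0+D)^{m-1}/(k(k-1))$, and the alternating series $\sum_k (B_k - A_k)$ defining $\partial T$ is shown absolutely convergent by grouping into telescoping blocks indexed by a subsequence $k_j$ chosen so that $f(b_{k_j}) \le 1/j^2$ (using \eqref{f-pole}); the final formula then follows by telescoping. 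You instead truncate at $\{s \ge 1/k\}$, obtain honest integral currents $T^{(k)}$ with finitely many charts and uniformly bounded $\mass(T^{(k)}) + \mass(\partial T^{(k)})$ (since $f(1/k) \to 0$), and pass to the limit via the Ambrosio--Kirchheim closure theorem, identifying the boundary of the limit from continuity of $\partial$ in the flat topology together with $\mass([[\Sigma_{1/k}]]) \to 0$.

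Both routes are valid. The paper's approach is more elementary and entirely self-contained, giving explicit convergence rates and not invoking any compactness machinery; yours is conceptually cleaner and shorter but leans on the AK closure theorem as a black box. Two small points worth making precise in your write-up: (i) the finiteness of $\mass(T)$ really comes from the monotonicity of $f$ (which gives $f \le f(s_0+D)$ on the whole interval, hence $f^{m-1}$ bounded), and $L^2_\loc$ is not the operative hypothesis; and (ii) when taking the limit of $\partial T^{(k)}$, you need to argue both that $\partial T^{(k)} \to \partial T$ in flat norm (continuity of $\partial$) and that $\partial T^{(k)} \to [[\Sigma_{s_0+D}]]$ directly (because the inner sphere's mass $\omega_{m-1} f(1/k)^{m-1}$ tends to zero), then match the two limits — you flag this as the delicate step, which is the right instinct.
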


Note that the definition of the current structure does not depend on
the metric $g$.   However, in order to prove that this is indeed an integral
current space, we must show that $T$ is an integral current: that there
is a collection of bi-Lipschitz 
charts $\varphi_k: U_k \subset \R^m \to V_k \subset \bar{X}$ where
$U_k$ are Borel measurable sets and where $V_k$
cover almost all of $\bar{X}$ satisfying $\ref{gen-T-2}$ with finite mass and
that the boundary also has finite mass.  The definition of Ambrosio-Kirchheim mass and of bi-Lipschitz depends upon $d_g$.

\begin{proof} 
Let $k_0 \in \N$ be chosen so that $k_0> s_0+D$ and $1/k_0<s_0$.   
For $k=k_0$ let
$
a_{k_0}= \max\{1/{k_0}, s_0-D\} \textrm{ and } b_{k_0}=s_0+D.
$
and for $k>k_0$ let
$
a_k= \max\{1/k, s_0-D\} \textrm{ and } b_k=a_{k-1}
$
so that $(a_k, b_k)$ are pairwise disjoint and so that the closure of their union is $[s_0-R, s_0+R]$.  Let
$
k_{max}=\sup\{ k:\,\, a_k<b_k\} \in [k_0, +\infty].
$
Observe that $k_{max}<+\infty$ unless there is a pole.   
When
there is a pole we will use the fact that $f(0)=0$ and
(\ref{f-pole}) to control the infinite series that we will need to deal with.

Recall that, in the proof of Proposition~\ref{prop-rect} in (\ref{chart-1})-(\ref{chart-4}), we found a countable bi-Lipschitz collection of charts covering almost all of $M$.   We now choose 
\bel{chart-5}
U_{k,\alpha}= \left\{s\theta: \,s\in (a_k,b_k), \,\theta \in Q'_{k,\alpha}\,\right\} \subset 
U_{k} \subset \R^m,
\ee 
where $Q_{k,\alpha}$ are triangular disjoint subsets of the spherical caps  $Q_k \subset \mathbb{S}^{m-1}$ such that
$
\bigcup_{\alpha=1}^{N_k} Q_{k,\alpha} = \mathbb{S}^{m-1}.
$
Setting 
$
\varphi_{k,\alpha}(s\theta)=\varphi_k(s, \theta)
$
as in (\ref{chart-4}) and setting
$
V_{k,\alpha}=\varphi_{k,\alpha}(U_{k, \alpha})\subset V_k \subset M, 
$
we have bi-Lipschitz charts
$$
\varphi_{k,\alpha}: U_{k,\alpha} \subset \R^m \to V_{k,\alpha} \subset M
$$
with disjoint images such that
$$
\bigcup_{k=k_0}^{k_{max}} \bigcup_{\alpha=1}^{N_k} V_{k,\alpha}=T_D(\Sigma)\subset M.
$$
So in particular this tubular neighborhood is a countable $\mathcal{H}^m$ rectifiable set.   

We next verify that the $T$ defined in (\ref{T}) is a rectifiable current:
\be\label{sum-for-T}
\aligned
T(h, \pi_1,...\pi_m)&=\int_M h \, d\pi_1\wedge \cdots d\pi_m
= \sum_{k=k_0}^{k_{max}} \sum_{\alpha=1}^{N_k} \int_{V_{k,\alpha}}
h \, d\pi_1\wedge \cdots d\pi_m\\
&= \sum_{k=k_0}^{k_{max}} \sum_{\alpha=1}^{N_k} \int_{U_{k,\alpha}}
(h\circ\varphi_{k,\alpha}) \, d(\pi_1\circ\varphi_{k,\alpha})\wedge \cdots 
d(\pi_m\circ\varphi_{k,\alpha}).
\endaligned
\ee
Thus when $k_{max}<+\infty$ we are done.

When $k_{max}=+\infty$ we claim that
\be\label{claim-converging}
\sum_{\alpha=1}^{N_k} \int_{U_{k,\alpha}}
(h\circ\varphi_{k,\alpha}) \, d(\pi_1\circ\varphi_{k,\alpha})\wedge \cdots 
d(\pi_m\circ\varphi_{k,\alpha})
\le C_k \left(\frac{1}{k(k-1)}\right),
\ee
where for $k$ sufficiently large
$$
C_k\le\sup\{|h|\} Lip(\pi_1)\cdots Lip(\pi_m)
\omega_{m-1} (f(s_0+D)^{m-1}
$$
and so we have a converging sum in (\ref{sum-for-T}). Thus $T$ is a rectifiable current in
this case as well.

To prove our claim first observe that
$$
\aligned
\int_{U_{k,\alpha}}
(h\circ\varphi_{k,\alpha}) \, d(\pi_1\circ\varphi_{k,\alpha})\wedge \cdots 
d(\pi_m\circ\varphi_{k,\alpha})
&= \int_{V_{k,\alpha}}
(h) \, d(\pi_1)\wedge \cdots d(\pi_m) \\
&\le \sup\{|h|\} Lip(\pi_1)\cdots Lip(\pi_m) \mathcal{H}_m(V_{k,\alpha}).
\endaligned
$$
Note also that for $k>k_0$, by the monotonicity
of $f$  
 we have
$$
\aligned
\sum_{\alpha=1}^{N_k}\mathcal{H}_m(V_{k,\alpha})
=\vol\left(s^{-1}(a_k,b_k)\right)
&\le  \omega_{m-1} (f(b_k))^{m-1} (b_k-a_k)\\ 
&\le \omega_{m-1} (f(b_k))^{m-1} \left(\frac{1}{k-1}-\frac{1}{k}\right)
\\
&
\le \omega_{m-1} (f(s_0+D)^{m-1} \left(\frac{1}{k(k-1)}\right).
\endaligned
$$

In fact, we have 
$$
\aligned
\mass(T)
& \le C_m \sum_{k=k_0}^{k_{max}} \sum_{\alpha=1}^{N_k} \mathcal{H}_m(\varphi_{k,\alpha}(U_{k,\alpha})) 
\\
& \le C_m \mathcal{H}_m(T_D(\Sigma))\le \omega_{m-1}(2D)(f(s_0+D))^{m-1} <+\infty.
\endaligned
$$
To establish that $T$ is an integral current, we now check that the
boundary to $T$ is a rectifiable current.   Observe that
\be
\label{bndry-sum} 
\aligned
\partial T(h, \pi_1,...,\pi_{m-1}) &= T(1,h,\pi_1,...,\pi_{m-1})\\
&=\sum_{k=k_0}^{k_{max}} \sum_{\alpha=1}^{N_k} \int_{U_{k,\alpha}}
d(h\circ\varphi_{k,\alpha}) \wedge d(\pi_1\circ\varphi_{k,\alpha})\wedge \cdots 
d(\pi_{m-1}\circ\varphi_{k,\alpha})\\
&=\sum_{k=k_0}^{k_{max}}  \int_{a_k}^{b_k} \int_{S^{m-1}}
d(h\circ\varphi_{k}) \wedge d(\pi_1\circ\varphi_{k})\wedge \cdots 
d(\pi_{m-1}\circ\varphi_{k})
=\sum_{k=k_0}^{k_{max}}  \,\,\,\, B_k-A_k,   
\endaligned
\ee
with 
$$
\aligned
B_k&=\int_{\{b_k\}\times S^{m-1}}
(h\circ\varphi_{k})\,\,d(\pi_1\circ\varphi_{k})\wedge \cdots 
d(\pi_{m-1}\circ\varphi_{k}),
 \\
A_k&=
\int_{\{a_k\}\times S^{m-1}}  
(h\circ\varphi_{k}) \,\, d(\pi_1\circ\varphi_{k})\wedge \cdots 
d(\pi_{m-1}\circ\varphi_{k}).
\endaligned
$$
When $k_{max}<+\infty$ this suffices to show that $\partial T$ is
rectifiable.

When $k_{max}=+\infty$ we must show the sum in (\ref{bndry-sum}) is finite.   
To do this, we adapt the standard proof that an alternating series converges when its terms converge to $0$.  Recall that $k_{max}=+\infty$ only if $M$ has a pole. 
By $(\ref{f-pole})$,
we know that there exists a sequence $\eps_j \to 0$ such that
\be
f(s) \le f(\eps_j) \le 1/j^2, \qquad  s\le \eps_j.
\ee
Choose a sequence $k_0=k_0$, $k_j>k_{j-1}$ such that
$
b_{k_j}<\eps_j.
$
Thus, we have 
\be
\sum_{j=1}^{+\infty} \omega_{m-1} f(b_{k_j})^{m-1} <+\infty.
\ee
Since $b_k=a_{k-1}$ for $k>k_0$, we have $B_k=A_{k-1}$, and 
so 
$
B_{k_j} - A_{k_j}= \sum_{k=k_{j-1}}^{k_j} ( B_k-A_k ).
$
Thus, we find 
$$
\partial T(h, \pi_1,...,\pi_{m-1}) =\sum_{j=0}^{+\infty}  \,\, B_{k_j}-A_{k_j}.
$$
This series is absolutely converging, since 
$$
\aligned
\sum_{j=0}^{+\infty}  \,\, |B_{k_j}|+|A_{k_j}|
\le \sum_{j=0}^{+\infty}  \,\, 2|B_{k_j}| 
& \le  \sum_{j=1}^{+\infty} \left|\int_{\{b_{k_j}\}\times S^{m-1}}
(h\circ\varphi_{k_j})\,\,d(\pi_1\circ\varphi_{k_j})\wedge \cdots 
d(\pi_{m-1}\circ\varphi_{k_j}) \right|\\
& \le  \sum_{j=1}^{+\infty} \left|\int_{\varphi_{k_j}(\{b_{k_j}\}\times S^{m-1})}
(h)\,d(\pi_1)\wedge \cdots d(\pi_{m-1}) \right|,
\endaligned
$$
thus 
$$
\aligned
\sum_{j=0}^{+\infty}  \,\, |B_{k_j}|+|A_{k_j}| 
& \le  \sum_{j=1}^{+\infty} 
\sup{|h|} Lip(\pi_1)\cdots Lip(\pi_{m-1}) \mathcal{H}_{m-1}(\varphi_{k_j}(\{b_{k_j}\}\times S^{m-1})\\
& \le  \sum_{j=1}^{+\infty} 
\sup{|h|} Lip(\pi_1)\cdots Lip(\pi_{m-1}) \omega_{m-1}(f(b_{k_j}))^{m-1}\\
& \le  \sum_{j=1}^{+\infty} 
\sup{|h|} Lip(\pi_1)\cdots Lip(\pi_{m-1}) \omega_{m-1}(1/j^2)^{m-1}<+\infty.
\endaligned
$$
Thus $\partial T$ is rectifiable and so $T$ is an integral current.

We may now use the fact that $b_k=a_{k-1}$ and telescope the
possibly infinite sum to see that
$$
\aligned
\partial T(h, \pi_1,...,\pi_{m-1}) =& 
\int_{\{b_{k_0}\}\times S^{m-1}}
(h\circ\varphi_{k}) \wedge d(\pi_1\circ\varphi_{k})\wedge \cdots 
d(\pi_{m-1}\circ\varphi_{k}) \\
&- 
\int_{\{a_{k_{max}}\}\times S^{m-1}}
(h\circ\varphi_{k}) \wedge d(\pi_1\circ\varphi_{k})\wedge \cdots 
d(\pi_{m-1}\circ\varphi_{k}), 
\endaligned
$$
where $a_{k_{max}}=0$ if $k_{max}=+\infty$.  So $a_{k_{max}}=D_k$.   Thus we obtain
(\ref{eqn-int-curr-space}).
\end{proof}

The next statement is established by following exactly the lines of the proof of Proposition~\ref{prop-int-curr-space} 
(except that $b_{k_0}=s_0$). 

\begin{proposition} \label{prop-int-curr-space-2}
Let $M^m\in \RSzerobar$ and $\Sigma=s^{-1}(s_0)$ be a level set of the function $s$.   Fix $D>0$ and define the distance $d_g$
as in (\ref{d_g})-(\ref{L_g}).  Then, the inner tubular neighborhood
\be 
U_D(\Sigma)=s^{-1}([s_0-D, s_0])
\ee
is an
integral current space when viewed as a metric space with the restricted metric $d_g$ and  whose current structure is defined by (\ref{T}).   In addition,
the boundary of the tubular neighborhood viewed as an integral current spaces is the boundary of the tubular neighborhood viewed as a submanifold where integral current structure is defined as usual with opposing orientations on the outer and inner boundaries
\be\label{eqn-int-curr-space-2}
\partial T(f, \pi_1,...,\pi_{m-1})
=\int_{s^{-1}(s_0)} f d\pi_1\wedge \cdots \wedge d\pi_{m-1}
-\int_{s^{-1}(s_D)} f d\pi_1\wedge \cdots \wedge d\pi_{m-1}, 
\ee
where $s_D=\max\big\{ s_0-D,0 \big\}$.
\end{proposition}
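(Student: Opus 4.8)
The plan is to transcribe the proof of Proposition~\ref{prop-int-curr-space} almost verbatim, the only change being that the outermost radial endpoint is now $s_0$ instead of $s_0+D$, so that $b_{k_0}=s_0$. First I would fix $k_0\in\N$ with $k_0>s_0$ and $1/k_0$ small (exactly as in the proof of Proposition~\ref{prop-int-curr-space}), set $b_{k_0}=s_0$ and $a_{k_0}=\max\{1/k_0,\,s_0-D\}$, and for $k>k_0$ set $a_k=\max\{1/k,\,s_0-D\}$ and $b_k=a_{k-1}$, so the intervals $(a_k,b_k)$ are pairwise disjoint with union whose closure is $[s_0-D,s_0]$; let $k_{\max}=\sup\{k:a_k<b_k\}$, which is finite unless $M$ has a pole. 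As in (\ref{chart-5})--(\ref{chart-4}), I would subdivide each spherical cap $Q_k$ supplied by the proof of Proposition~\ref{prop-rect} into disjoint triangular pieces $Q_{k,\alpha}$ covering $S^{m-1}$, obtaining bi-Lipschitz charts $\varphi_{k,\alpha}\colon U_{k,\alpha}\subset\R^m\to V_{k,\alpha}\subset M$ with pairwise disjoint images whose union is $U_D(\Sigma)$; this already shows $U_D(\Sigma)$ is countably $\Hcal^m$ rectifiable.

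Next I would check that the $T$ of (\ref{T}) is a rectifiable current by writing $T(h,\pi_1,\dots,\pi_m)$ as the (possibly infinite) sum of the chart integrals over the $U_{k,\alpha}$, as in (\ref{sum-for-T}). When there is no pole the sum is finite. When there is a pole, the monotonicity of $f$ gives $\sum_\alpha\Hcal_m(V_{k,\alpha})=\vol(s^{-1}(a_k,b_k))\le\omega_{m-1}(f(b_k))^{m-1}(b_k-a_k)\le\omega_{m-1}(f(s_0))^{m-1}\tfrac{1}{k(k-1)}$, so the series converges and $\mass(T)\le C_m\Hcal_m(U_D(\Sigma))\le\omega_{m-1}D\,(f(s_0))^{m-1}<+\infty$.

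Then I would handle the boundary. Computing $\partial T(h,\pi_1,\dots,\pi_{m-1})=T(1,h,\pi_1,\dots,\pi_{m-1})$ and proceeding as in (\ref{bndry-sum}), one gets $\partial T=\sum_k(B_k-A_k)$ with $B_k,A_k$ the integrals over $\{b_k\}\times S^{m-1}$ and $\{a_k\}\times S^{m-1}$. With no pole this sum is finite. With a pole I would reuse the alternating-series device: by (\ref{f-pole}) choose $\eps_j\to0$ with $f(s)\le 1/j^2$ for $s\le\eps_j$, then $k_j$ with $b_{k_j}<\eps_j$; since $b_k=a_{k-1}$ one has $B_k=A_{k-1}$, the sum telescopes over the blocks $[k_{j-1},k_j]$, and $\sum_j(|B_{k_j}|+|A_{k_j}|)$ is bounded by a constant times $\sum_j\omega_{m-1}(1/j^2)^{m-1}<+\infty$. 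Hence $\partial T$ is rectifiable with finite mass, so $T\in\intcurr_m$, and $U_D(\Sigma)$ with metric $d_g$ and current structure $T$ is an integral current space. Finally, telescoping the full sum via $b_k=a_{k-1}$ kills all interior terms and leaves only the contributions at $b_{k_0}=s_0$ and $a_{k_{\max}}=\max\{s_0-D,0\}=s_D$, which is precisely (\ref{eqn-int-curr-space-2}).

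As in Proposition~\ref{prop-int-curr-space}, the only real obstacle is the pole case: one must control the infinitely many charts accumulating at the center, both for $\mass(T)$ (handled by the volume bound above) and, more delicately, for $\mass(\partial T)$, where absolute convergence is not evident and one must exploit $f(s)\to0$ at the pole together with the telescoping structure $b_k=a_{k-1}$. Away from the pole the argument is a routine transcription with $s_0+D$ replaced by $s_0$.
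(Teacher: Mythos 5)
Your proposal is correct and is exactly what the paper does: the paper's proof of Proposition~\ref{prop-int-curr-space-2} is a one-line remark that one should follow the proof of Proposition~\ref{prop-int-curr-space} verbatim with $b_{k_0}=s_0$ in place of $s_0+D$. You have simply spelled out that transcription in full detail, including the pole case and the telescoping, so this matches the intended argument.
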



\section{The intrinsic flat distance and the D-flat distance}
\label{sec:5bis} 

\subsection{Reviewing the intrinsic flat distance}

The intrinsic flat distance between two oriented Riemannian manifolds with
boundary of finite volume (or more generally a pair of integral current spaces)
was introduced in Sormani and Wenger \cite{SW}.  This notion is gauge invariant.  

Given $M_i=(X_i, d_i, T_i)$ of the same dimension, $m$,
we recall that the intrinsic flat distance,
\be
d_{\mathcal{F}}(M_1, M_2) = \inf\left\{ d_F^Z(\varphi_{1\#}T_1, \varphi_{2\#}T_2):
\,\, \varphi_i :M_i \to Z\, \right\}
\ee
where the infimum is taken over all complete metric spaces, $Z$, and over all metric isometric embeddings 
$\varphi_i: X_i \to Z$:
\be\label{IF-1}
d_Z(\varphi_i(x), \varphi_i(y)= d_{X_i}(x,y), \qquad  x,y\in Z.
\ee
Here the flat distance in $Z$, 
\bel{IF-2}
d_F^Z(\varphi_{1\#}T_1, \varphi_{2\#}T_2)=\inf\left\{ \mass(A) + \mass(B):\,\,  A+\partial B=\varphi_{1\#}T_1-\varphi_{1\#}T_2\right\}
\ee
where the infimum is taken over all $A \in \intcurr_m(Z)$ and 
$B\in \intcurr_{m+1}(Z)$ such that $ A+\partial B=\varphi_{1\#}T_1-\varphi_{1\#}T_2$.   The notion of a flat distance for integral currents in Euclidean space was introduced by Federer and Flemming and applied to solve the Plateau Problem at least in a weak sense \cite{FF}.   

The intrinsic flat distance is a distance and is gauge invariant in the sense that
given two precompact integral current spaces, $M_i$,
\be
d_{\mathcal{F}}(M_1, M_2)=0
\ee
if and only if 
there is a current preserving isometry
\be
\psi: X_1 \to X_2 \textrm{ such that } \psi_{\#}T_1=T_2.
\ee
In particular if $M_1$ is a Riemannian manifold then $\psi$ is an orientation preserving isometry.   

\begin{remark}\rm
\label{rmrk-est}
If $M_i^m$ are Riemannian manifolds and one can find oriented metric isometric embeddings $\varphi_i$ from $U_i=M_i \setminus A_i\subset M_i$ into the boundary of a common Lipschitz Riemannian manifold $B^{m+1}$, such that 
\be
\int_{\varphi_1(U_1)} \omega-
\int_{\varphi_2(U_2)} \omega=
\int_{B} d\omega + \int_{A_3}\omega
\ee
for some $A_3\in \partial B$.  Then one can construct a common metric space
$Z$ by gluing $M_i$ to $B$ along the images of $\varphi_i(U_I)$, and set
$A_i=M_i\setminus U_i$.  After verifying that $\varphi_i$ extend to
metric isometric embeddings $\varphi_i: M_i \to Z$, one can then bound
the intrinsic flat distance as follows:
\be
d_{\mathcal{F}}(M_1, M_2) \le \vol(B^{m+1}) +
\vol(A_1^m) + \vol(A_2^m) +\vol (A_3^m).
\ee
This is the construction used by Lee and Sormani \cite{LS1} to prove tubular neighborhoods in rotationally symmetric manifolds around CMC surfaces of fixed area $\alpha_0$ with increasingly small ADM mass converge in the intrinsic flat sense to tubular neighborhoods in Euclidean space.   We will use this technique here as well.
\end{remark}

Naturally there is a notion of pointed intrinsic flat convergence: a sequence of oriented Riemannian manifolds with boundary, $M^m_j$, with basepoints $p_j \in M_j$ converges in the pointed intrinsic flat sense to a Riemannian manifold $M^m_\infty$ with basepoint $p_\infty\in M_\infty$ if and only if for almost every $D>0$ the balls $B_{p_i}(D)$ converge in the intrinsic flat sense to $B_{p_\infty}(D)$:
\be
\lim_{i\to +\infty} d_{\mathcal{F}}(B_{p_i}(D), B_{p_\infty}(D)) =0.
\ee
In \cite{LS1} sequences of rotationally symmetric manifolds whose
ADM mass is decreasing to $0$ are shown to converge in the pointed intrinsic flat sense to Euclidean space if the points are selected to lie on CMC surfaces of fixed area, $\alpha_0$.   Naturally it would mean nothing if the points were allowed to diverge to infinity since the spaces are asymptotically flat.  The theorem is false if the points are taken to be the poles as they can descend down deeper and deeper wells.    So it was of critical importance to fix the location of the points in some invariant way.


\subsection{Introducing the D-flat distance}

The intrinsic flat distance does not scale when the pair of Riemannian manifolds are rescaled since it is a sum of two terms of different dimension.   It has this property since it is based upon Federer and Flemming's flat norm in Euclidean space which is a norm with respect to rescaling the weight of the currents rather than rescaling the space they sit in. Recall that Lee and Sormani~\cite{LS1} had suggested studying the scalable flat distance which scales like length:
\be
d_{\mathcal{F}}(M_1, M_2) = \inf\left\{ \mass(A)^{1/m}+ \mass(B)^{1/(m+1)}: \,\, \varphi_i :M_i \to Z\,,\,\, A+\partial B=\varphi_{1\#}T_1-\varphi_{2\#}T_2\right\}
\ee
where the infimum is taken over all $Z$ and $\varphi_i$ as in 
(\ref{IF-1}) and over all $A, B$ as in (\ref{IF-2}).

In the present paper, we introduce the following new notion.

\begin{definition} \label{def-Dflat}
The {\bf D-flat distance} between pairs of Riemannian manifolds with the same upper bound, $D$, on their diameter:
\be
d_{D\mathcal{F}}(M_1, M_2) = \inf\left\{ \mass(A)+ \frac{\mass(B)}{D}: \,\, \varphi_i :M_i \to Z\,,\,\, A+\partial B=\varphi_{1\#}T_1-\varphi_{1\#}T_2\right\}, 
\ee
where the infimum is taken over all $Z$ and $\varphi_i$ as in 
(\ref{IF-1}) and over all $A, B$ as in (\ref{IF-2}).   
\end{definition}

One may also try other notions of convergence dividing by volume or   
by diameter in different ways.   Based upon our study of sequences of spaces in $\RSonebar$ with bounded ADM mass, the definition above seems to be the most natural notion.  We refer to our application of this notion in the following sections. 

It is immediate (and quite natural) to define the {\bf pointed D-flat convergence}
for any sequence of Riemannian manifolds without assuming an upper bound on diameter.  We just require that for almost every $D>0$
\be
\lim_{i\to +\infty} d_{D\mathcal{F}}(B_{p_i}(D), B_{p_\infty}(D)) =0.
\ee
Furthermore, it is clear that Sormani-Wenger's compactness theorem remains true for our distance. 


\section{Nonlinear stability in the intrinsic flat distance}
\label{sec:6}

\subsection{Reviewing the $\mathcal{F}$-stability estimate} 
\label{subsect-review-LS}

Throughout this section, we restrict attention to the class of spaces $M^m\in\RSone$  whose ADM mass is finite.  
Hence, we are thus restricting attention to(with strictly increasing profile functions and to spaces without interior minimal surfaces. 
We observe first that the theorem established by Lee and Sormani \cite{LS1} for {\sl regular} manifolds immediately extends to this weak class.   However, \cite{LS1} did not establish quantitative and compactness estimates, which is our main objective in the present paper. 
Recall that $\E^m$ denotes the Euclidean space of dimension $m$. 

\begin{theorem}[$\mathcal{F}$-stability estimate]
\label{thm-basic-LS}   
Given any $\eps,D, A_0>0$ and an interger $m\in \N$
there exists a constant $\delta=\delta(\eps, D, A_0, m)>0$
such that, for every space $M^m\in\RSone$ with $\mADM(M)<\delta$, 
\bel{eq:first}
 d_{\mathcal{F}}\big( T_D(\Sigma_0)\subset M^m,  T_D(\Sigma_0)\subset \E^m\big) 
< \eps. 
 \ee
 where  $\Sigma_0$ is the symmetric sphere of area  
$\vol_{m-1}(\Sigma_0)=A_0$, and $T_D(\Sigma)$ is the 
tubular neighborhood of radius $D$ around $\Sigma_0$.   
 \end{theorem}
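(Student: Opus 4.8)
The plan is to follow the argument of Lee and Sormani \cite{LS1}, substituting for each smooth ingredient the low-regularity analogue established above: the graph embedding given by the height function of \eqref{eq:510}, the monotonicity of the Hawking mass (Proposition~\ref{thm-monotonicity}), and the integral current structure on tubular neighborhoods (Proposition~\ref{prop-int-curr-space}). Since $M^m\in\RSone$ has a strictly increasing profile function, the inverse $s=s(r)$ and the height function $z=z(r)$ are well defined, with $z$ continuous and increasing (cf.~\eqref{eq:511-third}), so that $T_D(\Sigma_0)\subset M$ is realized as the graph $r\theta\mapsto(r\theta,z(r))$ over a region $\{r_1\le|x|\le r_2\}\subset\E^m$ carrying the induced metric $(1+z'(r)^2)\,dr^2+r^2g_{S^{m-1}}$. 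Write $r_0=(A_0/\omega_{m-1})^{1/(m-1)}$ for the radius of $\Sigma_0$ and $s_0$ for the parameter with $f(s_0)=r_0$; the bound $0\le f'\le1$ from \eqref{eq:233} gives $r_2=f(s_0+D)\le r_0+D$ and $r_0\le s_0$, so that $T_D(\Sigma_0)$ avoids a pole entirely whenever $D<r_0$.

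I then split $T_D(\Sigma_0)\subset M$ at a threshold $\eps_1\in(0,r_0)$ into the \emph{well} $W=\{r<\eps_1\}$ and the \emph{annulus} $A=\{r\ge\eps_1\}$, and likewise split $T_D(\Sigma_0)\subset\E^m$ into $W^E$ and $A^E$. On $A$, Proposition~\ref{thm-monotonicity} and \eqref{Hawking-r-bound} give $\tfrac12 r^{m-2}(z'(r))^2/(1+(z'(r))^2)=\mH(r)\le\mADM(M)$, hence $(z'(r))^2\le\eta:=2\mADM(M)\eps_1^{-(m-2)}/\bigl(1-2\mADM(M)\eps_1^{-(m-2)}\bigr)$ for all $r\ge\eps_1$, and $\eta\to0$ as $\mADM(M)\to0$ with $\eps_1$ fixed. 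Thus on $A$ the induced metric is $(1+\eta)$-bi-Lipschitz to the flat metric, and, after a vertical translation normalizing $z(\eps_1)=0$, one has $\sup|z|\le(r_0+D)\sqrt{\eta}$ on the relevant radii. Applying the construction of Remark~\ref{rmrk-est} exactly as in \cite{LS1}, with $U_1=A$ and $U_2=A^E$, produces a Lipschitz filling $B^{m+1}$ and an error hypersurface $A_3$ with $\Vol(B^{m+1})+\Vol(A_3)=O(\sqrt{\eta})$, whence
\[
d_{\mathcal F}\bigl(T_D(\Sigma_0)\subset M,\ T_D(\Sigma_0)\subset\E^m\bigr)\ \le\ \Vol(B^{m+1})+\Vol(A_3)+\Vol_M(W)+\Vol_{\E^m}(W^E).
\]

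The last two terms are bounded uniformly in $M$. Indeed $\Vol_M(T_D(\Sigma_0))=\int\omega_{m-1}f(s)^{m-1}\,ds$ over an $s$-interval of length at most $2D$, and since $f<\eps_1$ throughout $W$ we get $\Vol_M(W)\le 2D\,\omega_{m-1}\eps_1^{m-1}$ for \emph{every} $M\in\RSone$, no matter how deep the well may be; the Euclidean well satisfies $\Vol_{\E^m}(W^E)\le 2D\,\omega_{m-1}\eps_1^{m-1}$ as well. Consequently one first fixes $\eps_1=\eps_1(\eps,D,A_0,m)$ small enough that $4D\,\omega_{m-1}\eps_1^{m-1}<\eps/2$, and then $\delta=\delta(\eps,D,A_0,m)$ small enough that the $O(\sqrt{\eta})$ filling contribution is $<\eps/2$ whenever $\mADM(M)<\delta$; combining the two gives \eqref{eq:first}.

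I expect the one genuine difficulty to be the control of the well: along a sequence with $\mADM\to0$ the pole may still sit at the bottom of an arbitrarily deep well, which in the graph picture becomes an arbitrarily tall, thin spike, so that no Gromov--Hausdorff or uniform-metric comparison of $T_D(\Sigma_0)$ survives in the limit. The argument succeeds only because the well carries uniformly small \emph{volume}, and this smallness is forced precisely by anchoring the region to the CMC sphere of prescribed area $A_0$ (so that $r_0$, and hence the band $r\ge\eps_1$, is fixed). Everything else reduces to low-regularity bookkeeping already dispatched above: the distributional derivation of $m_H'\ge0$ via Volpert products (Proposition~\ref{thm-monotonicity}), the fact that $z'$ may be merely a measure near a pole (the construction around \eqref{eq:510}), and the verification that $T_D(\Sigma_0)$ is an integral current space (Proposition~\ref{prop-int-curr-space}).
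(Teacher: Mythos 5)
Your proposal follows the same strategy as the paper's proof of this theorem: keep the Lee–Sormani filling construction of Remark~\ref{rmrk-est} essentially verbatim on the annulus $\{r\ge\eps_1\}$, after observing that the three ingredients it uses---continuity of the height function $z$ from \eqref{eq:510}--\eqref{eq:511-third}, the Hawking mass bound \eqref{Hawking-r-bound}, and the integral current structure of Proposition~\ref{prop-int-curr-space}---all survive the passage from $\RS_m^\reg$ to $\RSone$, and then absorb the well $\{r<\eps_1\}$ into the flat-distance error using the uniform volume bound $\Vol_M(W)\le 2D\omega_{m-1}\eps_1^{m-1}$, which is exactly what anchoring to the CMC sphere of area $A_0$ buys you. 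One small quantitative correction: since your $\eta$ bounds $(z')^2$, the quantity playing the role of $Q$ in \cite{LS1} is $\sqrt{\eta}$, and the strip width $S_M$ scales like $\sqrt{Q}$, so the filling volume is $O(\eta^{1/4})$ rather than $O(\sqrt{\eta})$; this still tends to zero with $\mADM(M)$, so the qualitative conclusion is unaffected.
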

 
 It should be noted that $T_D(\Sigma_0)\subset M^m$ and
 $T_D(\Sigma_0)\subset \E^3$ need not be diffeomorphic in
 order to achieve this closeness in the intrinsic flat sense.

\begin{proof} Here, we explain briefly why the statement holds on our weaker class of spaces and we also record the key estimates that will be useful later in the paper. This result was proven by applying the technique described in Remark~\ref{rmrk-est} defining a Lipschitz continuous, Riemannian manifold $B=B_1\cup B_2$ where $B_1$ is defined by the embedding into $\E^{m+1}$:
$$
\aligned
B_1 &=\big\{\,\big(\,x_1,...,x_m, z(r(x_1,...,x_m))\,\big)\,\,: \,\, r(x_1,....,x_m)\in (r_\eps, r_{D^+}) \,\,\big\} \subset \E^{m+1},
\\
B_2 &=U_1\times [0, S_M]
\endaligned
$$
and $U_1$ is a strip defined with a precise choice of $S_M>0$,
$$
\aligned
U_1 &= r^{-1}(r_\eps, r_{D^+})\, \subset\, T_{D}(\Sigma_{\alpha_0}), 
\qquad \qquad 
r_{D^+} =\max\big\{r(p):\, p \in T_{D}(\Sigma_{\alpha_0})\big\}. 
\endaligned
$$
Here, the radius 
$r_\eps \ge r_{D^-}=\min\big\{r(p):\, p \in T_{D}(\Sigma_{\alpha_0})\big\}$
was carefully chosen in \cite{LS1} so that 
$A_1 := T_{D}(\Sigma_{\alpha_0})\setminus U_1$ has sufficiently small volume $\vol(A_1)$.

We set $U_2=r^{-1}(r_\eps, r_{D+})\subset \E^m$
so that
$$
T_D(\Sigma_{\alpha_0})=A_{2,1} \cup A_{2,2} \cup U_2 \subset \E^m, 
$$
where
$
A_{2,1}=A_2=r^{-1}(r_{D-}, r_\eps) \subset \E^m
$
is possibly empty and
$
A_{2,2}=A_0=r^{-1}(r_{D^+}, r_0+D) \subset \E^m, 
$
with $\alpha_0=\omega_{m-1}r_0^{m-1}$.
Finally, the region 
$
A_3=A_{3,1}\cup A_{3,2} \cup A_{3,3} \subset \partial B
$
has
$$
\aligned
A_{3,1}&=\mathbb{S}_{r_{D^+}}\times [0, S_M] \subset \partial B_2,
\\
A_{3,2}&=\mathbb{S}_{r_\eps}\times [0, S_M] \subset \partial B_2,
\\
A_{3,3}&=\mathbb{S}_{r_{D^+}}\times [z(r_\eps), z(r_{D_+})] \subset \partial B_1, 
\endaligned
$$
where $A_{3,2}$ is possibly empty. (See Figure 3 in \cite{LS1}.)    

We have proven earlier that we can also isometrically embed our
Riemannian manifold $(M^m, g) \in \RSone$ into $\E^{m+1}$ using the height function $z$ which is known to be continuous.   
By
(\ref{Hawking-r-bound}) we have 
\bel{Hawking-r-bound-2}
\mH(r) = {1 \over 2} r^{m-2} {(z')^2 \over \sqrt{1 + (z')^2}}\le m_{ADM}(M), 
\ee
which is exactly as in \cite{LS1}.   We can choose the same strip width $S_M$
as in \cite{LS1} and the same $r_\eps$ and achieve the exact same
theorem as in \cite{LS1} only now for a sequence of manifolds in 
$\RSone$ whose ADM mass approaches $0$.   This completes the proof of 
Theorem~\ref{thm-basic-LS}.
\end{proof}


\subsection{Re-visiting the $\mathcal{F}$ stability estimate}
\label{subsect-reexamine}
 
From now and for simplicity in the presentation and without genuine loss of generality, we focus on $3$-dimensional spaces. 
In the present work, we examine the estimate \eqref{eq:first} more carefully so as to get a {\sl quantitative estimate} on the flat distance between 
$T_D(\Sigma_{\alpha_0})\subset M^3$ and $T_D(\Sigma_{\alpha_0})\subset \E^3$.   We begin by recalling certain constants from \cite{LS1}, especially 
\be\label{delta}
\delta :=m_H(r_{D^+}) \le m_{ADM}(M).
\ee
In Lemma 4.2 in \cite{LS1}, let us choose $\delta$ small depending upon an earlier choice of $r_\eps< r_0$ so that 
\be\label{z'Q}
z'(r) \le Q,\qquad  r\ge r_\eps, 
\ee
giving a specific formula for $Q$ depending on $\delta$ and $r_\eps$:
\be
Q=\sqrt{\frac{2\delta}{(r_\eps-2\delta)}}> \sqrt{\frac{2\delta}{(r_0-2\delta)}}.
\ee
Observe that $Q$ is scale invariant. Here we would prefer not to pick $r_\eps$ before we choose $\delta$ since we are not examining a sequence with $\delta_i\le m_{ADM}(M_i)\to 0$.   Instead we solve for 
$$
r_\eps=(2\delta(1+Q^{-2})) < r_0, 
$$
so that (\ref{z'Q}) is a consequence of the choice of $r_\eps$.

We now write the estimates from \cite{LS1} for $\vol(B)$ and $\vol(A)$ as {\sl functions of the parameters} $Q$ and $\delta$, $D$ and $\alpha_0$.  In the next section we
will choose the {\sl optimal value} for $Q$ and obtain a new and stronger estimate on the
intrinsic flat as well as D-flat distances. Examining the proof of Lemma 4.1 in \cite{LS1} we see that
$$
\aligned
\vol(A_1)
&\le 4\pi r_\eps^{2}D 
 \le \omega_2 (2\delta(1+Q^{-2}))^2 D
\endaligned
$$
and 
$$
\aligned
\vol(A_2)= (4/3)\pi r_\eps^3 
&\le (4/3)\pi r_\eps^2 r_0
\\
&\le (4/3)\pi (2\delta(1+Q^{-2}))^2 D.
\endaligned
$$

Since $z'(r)\le Q$, Lemma 4.3 in \cite{LS1} shows that
$$
\vol(A_0)\le D Q 4\pi (r_0+D)^2.
$$
Also one can estimate
$$
\aligned
\vol(A_{3,3}) 
&\le 4\pi(r_{D^+})^2 (z(r_{D^+})-z(r_\eps))
\\
&\le 4\pi(r_0+D)^2 Q (r_{D^+}-r_\eps)\le 4\pi(r_0+D)^2 Q (2D), 
\endaligned
$$
since $r_\eps>r_{D+}-2D$.

Lemma 4.5 in \cite{LS1} chooses the strip width
$
S_M=\sqrt{C(2D+\pi r_0+C)}, 
$
where
$
C=(4D+2\pi r_0)Q
$
to guarantee the metric isometric embedding of $U_1$ into $B$.
Requiring now 
\be
Q\le 1/2
\ee
so that $C\le 2D+\pi r_0$ and
$$
\aligned
S_M&=\sqrt{(4D+2\pi r_0)Q(2D+\pi r_0+2D+\pi r_0)}\\
&\le  2(D+\pi r_0) \sqrt{Q},
\endaligned
$$
we arrive at 
$$
\aligned
\vol(A_{3,1})
&= S_M 4\pi (r_0+D)^2 = 8\pi(r_0+D)^2(D+\pi r_0) \sqrt{Q},
\\
\vol(A_{3,2})&=S_M 4\pi r_\eps^2 \le  8\pi(r_0+D)^2(D+\pi r_0) \sqrt{Q}. 
\endaligned
$$
Summing over all of these we get
$$
\aligned
\vol(A_3)&= \vol(A_{3,1})+\vol(A_{3,2})+ \vol(A_{3,3})\\
&\le 16 \pi (r_0+D)^2(D+\pi r_0) \sqrt{Q} + 4\pi(r_0+D)^2 Q (2D)\\
&\le 4\pi(r_0+D)^2(6D+4\pi r_0)\sqrt{Q}, 
\endaligned
$$
since $Q\le \sqrt{Q}$, and thus 
thus
\be
\label{A1}
\aligned
\vol(A)&=\vol(A_0)+\vol(A_1)+\vol(A_2)+ \vol(A_3)\\
&\le D Q 4\pi (r_0+D)^2 + \omega_2 (2\delta(1+Q^{-2}))^2 D\\
& \quad + (4/3)\pi (2\delta(1+Q^{-2}))^2 D +4\pi(r_0+D)^2(6D+4\pi r_0)\sqrt{Q}\\
&\le4\pi (8D+4\pi r_0)\left(\delta^2(1+Q^{-2})^2 + (r_0+D)^2\sqrt{Q} \right).
\endaligned
\ee

We can estimate $\vol(B)$ next, as follows: 
$$
\aligned
\vol(B_1)
&=  \int_{r_\eps}^{r_{D^+}} 4\pi r^2 (z(r) - z(r_\eps) ) \, dr
\le  \int_{r_\eps}^{r_{D^+}}4\pi r^2 \int_{r_\eps}^r z'(s) \, ds\,dr \\
&\le  \int_{r_\eps}^{r_{D^+}}4\pi r^2 \int_{r_\eps}^r Q \, ds\,dr 
\le  \int_{r_\eps}^{r_{D^+}} 4\pi r^2 Q (r-r_\eps)\,dr, 
\endaligned
$$
thus 
$$
\aligned
\vol(B_1)
&\le  \int_{r_\eps}^{r_{D^+}} 4\pi (r_{D^+})^2 Q (2D) \,dr \\
&\le  4\pi (r_0+D)^2 Q(2D)(r_{D^+}-r_\eps) \\
&\le  4\pi (r_0+D)^2 Q(2D)(2D)\le 8\pi D (r_0+D)^2 \sqrt{Q}(2D). 
\endaligned
$$
We also estimate 
$$
\aligned
\vol(B_2)=
 S_M \vol(U_2)
&=  2(D+\pi r_0) \sqrt{Q} 
\int_{r_\eps}^{r_{D^+}} 4\pi r^2 \sqrt{1+z'(r)^2} \, dr\\
&\le  2(D+\pi r_0) \sqrt{Q} 
\int_{r_\eps}^{r_{D^+}} 4\pi r^2 \sqrt{1+Q^2} \, dr,
\endaligned
$$
thus $$
\aligned
\vol(B_2)=
&\le  2(D+\pi r_0) \sqrt{Q} \sqrt{1+Q^2} (4/3)\pi(r_{D^+}^3-r_\eps^3)\\ 
&\le  2 (D+\pi r_0) \sqrt{Q} \sqrt{1+Q^2} 4\pi (r_0+D)^2(2D)\\ 
&\le  8\pi D (r_0+D)^2(D+\pi r_0) \sqrt{Q} \sqrt{2}.
\endaligned
$$
Thus, we obtain 
\be\label{B1}
\aligned
\vol(B)&=\vol(B_1)+\vol(B_2)\\
&\le8\pi D (r_0+D)^2(4D+2\pi r_0) \sqrt{Q}. 
\endaligned
\ee

Note also  that we have estimates on
\be
\label{V1}
\aligned
\vol(T_D(\Sigma_{\alpha_0})\subset \E^3) 
&\le \vol(T_D(\Sigma_{\alpha_0})\subset M),
 \\
 \vol(T_D(\Sigma_{\alpha_0})\subset M)
 &= \vol(A_1) + \vol(U_2)\\
&\leq 4\pi (2\delta(1+Q^{-2}))^2 D  +\sqrt{1+Q^2} (4/3)\pi(r_{D^+}^3-r_\eps^3) \\ 
&\le  4\pi (2\delta)^2(1+Q^{-2})^2 D  +(1+Q) \vol(T_D(\Sigma_{\alpha_0})\subset \E^3),
  \\ 
\vol(\partial T_D(\Sigma_{\alpha_0})\subset\E^3) 
&\le 4\pi r_\eps^2 + 4\pi(r_{D^+})^2\\
&\le  4\pi  (2\delta)^2(1+Q^{-2}))^2 +4\pi(r_0+D)^2.
\endaligned
\ee


\subsection{A new estimate in the intrinsic flat distance}

We may now prove the following theorem which strengthens the results in \cite{LS1} and justifies our introduction of the D flat distance.   Note also how the sum of the D flat distance and the
difference in volumes have the same dependence on $\delta$.   

\begin{theorem}[Quantitative estimate in the intrinsic flat distance]
\label{thm-new-F}   
Suppose $(M^3, g) \in \RSonethree$ and $m_{ADM}(M^3)=\delta$ with
\be
\delta \le \min\left\{\frac{r_0}{32}, \frac{8^5 r_0^9}{(r_0+D)^8}\right\} 
\ee
and let $\Sigma_{\alpha_0}$ be the CMC surface of area
$
\alpha_0=4\pi r_0^2
$
then one has 
\be
\aligned
 d_{\mathcal{F}}(\,T_D(\Sigma_0)\subset M^3\,,\, T_D(\Sigma_0)\subset \E^3\,)
& < (1+D) \eps(D, r_0, \delta), 
\\
 d_{D\mathcal{F}}(\,T_D(\Sigma_0)\subset M^3\,,\, T_D(\Sigma_0)\subset \E^3\,)
& <  2\eps(D, r_0, \delta),
\endaligned
 \ee
  where
$\eps(D, r_0,\delta):=
 48\pi (2D+\pi r_0)(r_0+D)^{16/9}\delta^{2/9}$
and, furthermore,  
 \be
\aligned
\left| \vol(T_D(\Sigma_{\alpha_0}\subset M^3))- \vol(T_D(\Sigma_{\alpha_0}\subset \E^3))\right|
&\le \eps(D, r_0,\delta),
\\
 \vol(\partial T_D(\Sigma_{\alpha_0}\subset M^3)) 
& \le
 \eps(D,r_0,\delta)/(8D+4\pi r_0) + 4\pi(r_0+D)^2.
\endaligned
 \ee
 \end{theorem}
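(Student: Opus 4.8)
The plan is to feed the explicit gluing construction of Remark~\ref{rmrk-est} --- the one used in the proof of Theorem~\ref{thm-basic-LS} and made quantitative in Section~\ref{subsect-reexamine} --- into the definitions of the intrinsic flat and $D$-flat distances, and then to choose the free parameter $Q$ from \eqref{z'Q} optimally. Building the common metric space $Z$ out of the two copies of $T_D(\Sigma_0)$ (inside $M^3$ and inside $\E^3$) and the filling manifold $B=B_1\cup B_2$, Remark~\ref{rmrk-est} gives at once
\[
d_{\mathcal F}\le\vol(A)+\vol(B),\qquad d_{D\mathcal F}\le\vol(A)+\vol(B)/D,
\]
with $A=A_0\cup A_1\cup A_2\cup A_3$. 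So the whole matter reduces to inserting the estimates \eqref{A1} for $\vol(A)$ and \eqref{B1} for $\vol(B)$, which in Section~\ref{subsect-reexamine} were already written as functions of $Q$, $\delta$, $D$ and $r_0$, and then picking $Q$ well; note that \eqref{A1}--\eqref{B1} exhibit $\vol(B)$ as carrying one extra factor of order $D$ relative to $\vol(A)$, which is precisely why $d_{D\mathcal F}$ is the better-behaved quantity here.

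Next I would optimize over $Q$. Since $1+Q^{-2}\le 2Q^{-2}$ for $Q\le 1$, the bound \eqref{A1} is controlled, up to the prefactor $4\pi(8D+4\pi r_0)=16\pi(2D+\pi r_0)$, by $\delta^2Q^{-4}+(r_0+D)^2\sqrt Q$. Balancing the two summands, i.e.\ imposing $\delta^2Q^{-4}=(r_0+D)^2\sqrt Q$, forces the scale-invariant value
\[
Q=\Bigl(\frac{\delta^2}{(r_0+D)^2}\Bigr)^{2/9}=\frac{\delta^{4/9}}{(r_0+D)^{4/9}}.
\]
With this choice each summand equals $(r_0+D)^{16/9}\delta^{2/9}$; writing $\delta^2(1+Q^{-2})^2=(r_0+D)^{16/9}\delta^{2/9}(1+Q^2)^2$ and using $Q\le\tfrac12$ to bound $(1+Q^2)^2\le2$, one finds $\vol(A)\le 16\pi(2D+\pi r_0)\cdot 3\,(r_0+D)^{16/9}\delta^{2/9}=\eps(D,r_0,\delta)$, while \eqref{B1} gives $\vol(B)\le 16\pi D(2D+\pi r_0)(r_0+D)^{16/9}\delta^{2/9}\le D\,\eps(D,r_0,\delta)$. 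Plugging these back into the two displayed inequalities yields $d_{\mathcal F}<(1+D)\eps$ and $d_{D\mathcal F}<2\eps$, the strictness coming from the slack in the constant $3$.

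Then I would check the constraints and the volume estimates. The bounds \eqref{A1}, \eqref{B1} and the embedding underlying them are valid only for $Q\le\tfrac12$ and for $r_\eps=2\delta(1+Q^{-2})<r_0$, the latter being equivalent to $Q>\sqrt{2\delta/(r_0-2\delta)}$. With the value of $Q$ above, $Q\le\tfrac12$ is an inequality of the shape $\delta\lesssim r_0+D$, which $\delta\le r_0/32$ covers, while $r_\eps<r_0$ becomes, after substituting $Q$, an inequality of the shape $\delta^{1/9}\lesssim r_0(r_0+D)^{-8/9}$, which is exactly what the hypothesis $\delta\le 8^5r_0^9/(r_0+D)^8$ is tailored to supply; the numerical constants are verified by hand. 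For the volume comparison, the chain in \eqref{V1} reads $\vol(T_D(\Sigma_{\alpha_0})\subset\E^3)\le\vol(T_D(\Sigma_{\alpha_0})\subset M^3)\le 4\pi(2\delta)^2(1+Q^{-2})^2D+(1+Q)\vol(T_D(\Sigma_{\alpha_0})\subset\E^3)$, so the difference is at most $4\pi(2\delta)^2(1+Q^{-2})^2D+Q\,\vol(T_D(\Sigma_{\alpha_0})\subset\E^3)$; using $\vol(T_D(\Sigma_{\alpha_0})\subset\E^3)\le\tfrac43\pi(r_0+D)^3$ together with the chosen $Q$, both pieces are dominated by $\eps(D,r_0,\delta)$ thanks to the smallness of $\delta$. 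Finally, the stated bound on $\vol(\partial T_D(\Sigma_{\alpha_0}\subset M^3))$ is just the last line of \eqref{V1} rewritten with this $Q$.

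The hard part is not conceptual but is the bookkeeping in the last two steps: confirming that the single explicit value of $Q$ simultaneously meets $Q\le\tfrac12$ and $r_\eps<r_0$ under the (deliberately engineered) smallness hypotheses on $\delta$, and that every lower-order contribution --- the $2\delta Q^{-2}$ and the $1$ hidden inside $(1+Q^{-2})^2$, the small volumes $\vol(A_1)$ and $\vol(A_2)$, and the correction $Q\,\vol(T_D(\Sigma_{\alpha_0})\subset\E^3)$ --- gets absorbed into $\eps(D,r_0,\delta)$ with enough room left to swallow the final numerical factor $48\pi$.
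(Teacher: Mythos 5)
Your proposal is correct and follows essentially the same route as the paper: express $\vol(A)$ and $\vol(B)$ as functions of $Q$ via \eqref{A1}--\eqref{B1}, balance the two dominant contributions to pick $Q$, check the constraints $Q\le 1/2$ and $r_\eps<r_0$, then read off the volume statements from \eqref{V1}. The only deviation is cosmetic --- your balancing yields $Q=\delta^{4/9}/(r_0+D)^{4/9}$ while the paper takes $Q=(8\delta/(r_0+D))^{4/9}$ --- which does not change the final constant $48\pi$, although your smaller $Q$ makes the constraint $Q>\sqrt{2\delta/(r_0-2\delta)}$ marginally tighter, so the numerical verification you defer to ``by hand'' is exactly the step requiring care.
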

 
 It should be noted that $T_D(\Sigma_0)\subset M^m$ and
 $T_D(\Sigma_0)\subset \E^3$ need not be diffeomorphic in
 order to achieve this closeness property in the intrinsic flat sense.
 
 \begin{proof}
 We first choose the best $Q$ subject to the constraints that
 $Q\le 1/2$ and 
$
Q> \sqrt{\frac{2\delta}{(r_0-2\delta)}}
$
to minimize
$$
\vol(A)=4\pi (8D+4\pi r_0)\left(\delta^2(1+Q^{-2})^2 + (r_0+D)^2\sqrt{Q} \right).
$$
 Taking $q=\sqrt{Q}$ and observing that $(1+Q^{-2})^2 \le 8q^{-8}$ so that
$$
\vol(A)\le F(q):= 4\pi (8D+4\pi r_0)\left(8\delta^2 q^{-8} + (r_0+D)^2q \right),
$$
we find 
$$
0=F'(q)=4\pi (8D+4\pi r_0)\left(-64\delta^2 q^{-9}  + (r_0+D)^2 \right).
$$
So the critical point is
$
q=\left( \frac{64\delta^2} {(r_0+D)^2}\right)^{1/9} 
$
and the best choice for 
\be
Q= \left( \frac{8\delta} {(r_0+D)}\right)^{4/9}
\ee
if it fits the constraints and, by the hypothesis $\delta\le r_0/32$, 
$$
Q\le \left( \frac{r_0/4} {(r_0+D)}\right)^{4/9}\le \left( 1/4 \right)^{4/9}\le 1/2.
$$
Again, by the hypothesis of the theorem, we have 
\be\label{delta-bound}
\delta\le\frac{8^5 r_0^9}{(r_0+D)^8}.
\ee
Given (\ref{delta-bound}), we find 
$$
\aligned
\sqrt{\frac{2\delta}{(r_0-2\delta)}}
&>\frac{\sqrt{2}\delta^{1/2}}{r_0^{1/2}}
=\frac{8^{3/18}\delta^{8/18} \delta^{1/18}}{r_0^{1/2}}\\
&\le \frac{8^{8/18}\delta^{4/9}}{(r_0+D)^{4/9}} \frac{8^{4/9}(r_0+D)^{4/9}\delta^{1/18}}{8^{5/18}r_0^{1/2}}
\le \left(\frac{8\delta}{(r_0+D)}\right)^{4/9},
\endaligned
$$
so $Q$ fits the constraints.
We now substitute our choice for $Q$ into 
$$
\aligned
\vol(A)&\le4\pi (8D+4\pi r_0)\left(\delta^2(1+Q^{-2})^2 + (r_0+D)^2\sqrt{Q} \right)\\
&=4\pi (8D+4\pi r_0)\left(\delta^2 8Q^{-4} + (r_0+D)^2\sqrt{Q} \right)\\
&\le4\pi (8D+4\pi r_0)\left(\delta^2 8 \left( \frac{8\delta} {(r_0+D)}\right)^{-16/9} + (r_0+D)^2 \left( \frac{8\delta} {(r_0+D)}\right)^{2/9} \right)\\
&\le 4\pi (8D+4\pi r_0)\left( (1/8)^{7/9} (r_0+D)^{16/9}\delta^{2/9} + 8^{2/9}(r_0+D)^{16/9}\delta^{2/9}\right), 
\endaligned
$$
thus, with our notation, 
\be\label{A2}
\vol(A)
\le  \eps(D, r_0,\delta).
\ee
Combining this with
(\ref{A1}) and (\ref{B1}), we see that
$$
\max\{\vol(B)/D, \vol(A)\} \le  \eps(D, \alpha_0, m_{ADM}(M)), 
$$
which gives our estimate on the intrinsic flat and D-flat distances.
Rearranging (\ref{V1})
and substituting our choice for $Q$ we obtain
$$
\aligned
\left|\vol(T_D(\Sigma_{\alpha_0})\subset M) -\vol(T_D(\Sigma_{\alpha_0})\subset \E^3) \right|
&\le 4\pi (2\delta)^2(1+Q^{-2})^2 D + Q \vol(T_D(\Sigma_{\alpha_0})\subset \E^3)  \\
&\le 8\pi \delta^2(8Q^{-4}) D + \sqrt{Q} (4/3)\pi (r_0+D)^3 \\
&\le 16\pi (r_0+D)\left(\delta^2(8Q^{-4})  + \sqrt{Q} \pi (r_0+D)^2\right) 
\le\eps(D, r_0,\delta).
\endaligned
$$
Finally we have
\be\label{V8}
\aligned
\vol(\partial T_D(\Sigma_{\alpha_0})\subset\E^3) 
&\le  4\pi  (2\delta)^2(1+Q^{-2}))^2 +4\pi(r_0+D)^2 \\
&\le  \eps(D,r_0,\delta) /(8D+4\pi r_0)+ 4\pi(r_0+D)^2.
\endaligned
\ee
\end{proof}


\subsection{Nonlinear stability of inner regions}

Let $U_D(\Sigma)$ is the  part of the tubular neighborhood of radius $D$ around $\Sigma_0$
that lies within $\Sigma_0$.   

\begin{theorem}[Nonlinear stability of inner regions]
\label{thm-adapted}    
Suppose $(M^3, g) \in \RSonethree$ and $m_{H}(\Sigma_{\alpha_0}) =: \delta$ with
$\delta \le r_0/32$, 
where $\Sigma_{0}$ be the CMC surface of area
$\alpha_0=4\pi r_0^2$
with
\be
\aligned
 d_{\mathcal{F}}(\,U_D(\Sigma_0)\subset M^3\,,\, U_D(\Sigma_0)\subset \E^3\,)
& < (1+D)\eps_U(\delta, D, r_0),
\\
 d_{D\mathcal{F}}(\,U_D(\Sigma_0)\subset M^3\,,\, U_D(\Sigma_0)\subset \E^3\,)
& < 2\eps_U(\delta, D, r_0),  
\endaligned
 \ee
 where $\eps_U(D, r_0,\delta)=48\pi (2D+\pi r_0)r_0^{16/9}\delta^{2/9}$
and, furthermore, one has 
 \be\label{V2}
\aligned
\left| \vol(U_D(\Sigma_{\alpha_0}\subset M^3))- \vol(U_D(\Sigma_{\alpha_0}\subset \E^3))\right|
&\le \eps_U(D, r_0,\delta),
 \\
 \left| \vol(\partial U_D(\Sigma_{\alpha_0}\subset M^3))- \vol(\partial U_D(\Sigma_{\alpha_0}\subset \E^3))\right|
& \le
 2\eps(D,r_0,\delta)/(8D+4\pi r_0).
\endaligned
 \ee
 \end{theorem}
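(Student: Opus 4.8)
The plan is to re-run, nearly verbatim, the construction and the estimates behind Theorem~\ref{thm-new-F}, replacing the tubular neighborhood $T_D(\Sigma_{\alpha_0})$ by the \emph{inner} region $U_D(\Sigma_{\alpha_0})$ and the ADM mass by the Hawking mass of the reference sphere. First I would record that $U_D(\Sigma_0)\subset M^3$ is an integral current space by Proposition~\ref{prop-int-curr-space-2} (with the current structure and the boundary \eqref{eqn-int-curr-space-2} given there), and that the construction of Section~\ref{sec:5} embeds it metric isometrically into $\E^4$ through the continuous height function $z=z(r)$, now on the radial range $r\in[r_{D^-},r_0]$, where $\alpha_0=4\pi r_0^2$, $\Sigma_{\alpha_0}=\{s=s_0\}$, and $r_{D^-}=f(\max\{s_0-D,0\})$.

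The only genuinely new ingredient is to replace the bound $m_H(r)\le m_{ADM}(M)$ used throughout the proof of Theorem~\ref{thm-new-F} by
$$
m_H(r)\le m_H(r_0)=\delta, \qquad r\le r_0,
$$
which holds on all of $U_D(\Sigma_0)$ because this region only meets radii $r\le r_0$ and the Hawking mass is monotone non-decreasing in $r$ (Proposition~\ref{thm-monotonicity}). Consequently, the choice $r_\eps=2\delta(1+Q^{-2})$ of Section~\ref{subsect-reexamine} still yields $z'(r)\le Q$ for $r\ge r_\eps$, and the gluing construction of Remark~\ref{rmrk-est} carries over with $B=B_1\cup B_2$, where $B_1$ is the graph of $z$ over $r\in(r_\eps,r_0)$, $B_2=U_1\times[0,S_M]$, $U_1=r^{-1}((r_\eps,r_0))\subset U_D(\Sigma_0)$, and $S_M$ the strip width furnished by Lemma~4.5 in \cite{LS1}. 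The leftover pieces $A_1,A_2,A_3$ are the direct analogues of those in the proof of Theorem~\ref{thm-new-F}, with one bookkeeping change: every occurrence of the outer radius $r_{D^+}$ (and of $r_0+D$) becomes $r_0$, since the outer boundary of the inner region already sits at radius $r_0$; and the role played there by the ``outer collar'' $A_0$ is played here by the inner collar $r^{-1}([\max\{r_0-D,0\},r_{D^-}])\subset\E^3$, which has no counterpart in $M^3$ because $r_{D^-}\ge\max\{r_0-D,0\}$, and which is bounded by the same type of $\sqrt Q$-estimate since $\sqrt{1+z'(r)^2}-1\le Q^2/2$ for $r\ge r_\eps$.

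With these substitutions, re-running the computations of Section~\ref{subsect-reexamine} gives $\vol(A)\le 4\pi(8D+4\pi r_0)\big(\delta^2(1+Q^{-2})^2+r_0^2\sqrt Q\big)$ and $\vol(B)\le 8\pi D\,r_0^2(4D+2\pi r_0)\sqrt Q$, together with the analogue of \eqref{V1} with $(r_0+D)$ replaced by $r_0$. The optimization over $Q$ is then identical to that of Theorem~\ref{thm-new-F}: with $q=\sqrt Q$, minimizing $4\pi(8D+4\pi r_0)(8\delta^2 q^{-8}+r_0^2 q)$ selects $Q=(8\delta/r_0)^{4/9}$, which meets the constraint $Q\le 1/2$ under $\delta\le r_0/32$ exactly as before; and because $r_0\le r_0+D$, this same $Q$ satisfies the constraint $Q>\sqrt{2\delta/(r_0-2\delta)}$ already under $\delta\le r_0/32$, which is precisely why Theorem~\ref{thm-adapted} can dispense with the second smallness hypothesis of Theorem~\ref{thm-new-F}. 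Plugging this $Q$ back in yields $\max\{\vol(B)/D,\vol(A)\}\le\eps_U(D,r_0,\delta)$ with $\eps_U(D,r_0,\delta)=48\pi(2D+\pi r_0)r_0^{16/9}\delta^{2/9}$, so Remark~\ref{rmrk-est} gives $d_{\mathcal F}\le\vol(A)+\vol(B)\le(1+D)\eps_U$ and $d_{D\mathcal F}\le\vol(A)+\vol(B)/D\le 2\eps_U$. The volume estimate in \eqref{V2} follows by rearranging the analogue of \eqref{V1} exactly as in Theorem~\ref{thm-new-F}; the boundary-area estimate follows by observing that $\partial U_D(\Sigma_0)$ has the same outer sphere $\Sigma_{\alpha_0}$ of area $4\pi r_0^2$ in both $M^3$ and $\E^3$, so only the inner spheres (of radii $r_{D^-}$ and $\max\{r_0-D,0\}$) contribute, and their areas differ by at most a factor of $r_0$ and $D$ times $\sqrt Q$ by the collar estimate above.

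The main obstacle I anticipate is not conceptual but one of bookkeeping and degenerate cases: one must check that the metric isometric embedding of $U_1$ into $B$ still succeeds with the inner region in place of the tubular neighborhood --- i.e.\ that Lemma~4.5 in \cite{LS1} applies verbatim with $r_0$ in the role of the outer radius --- and one must separately treat the configuration in which the inner region reaches the pole or the apparent horizon, i.e.\ $\max\{s_0-D,0\}=0$ and $r_{D^-}=\rmin$, where the collar region has to be described using $\max\{r_0-D,0\}$ rather than $r_0-D$. Neither difficulty is serious, but both need care so that all radial factors are genuinely bounded by $r_0$, and not merely by $r_0+D$, throughout.
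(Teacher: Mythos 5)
Your proposal is correct and takes essentially the same route as the paper: replace $r_{D^+}$ by $r_0$ throughout the estimates of Section~\ref{subsect-reexamine}, which kills the outer collar, and re-optimize to find $Q=(8\delta/r_0)^{4/9}$. You also usefully make explicit two points the paper's proof leaves tacit, namely that $m_H(r)\le m_H(r_0)=\delta$ on the relevant radial range by Proposition~\ref{thm-monotonicity} (so the Hawking mass hypothesis may replace the ADM mass hypothesis), and that the second smallness condition of Theorem~\ref{thm-new-F} becomes superfluous once $r_0+D$ is replaced by $r_0$ in the lower constraint on $Q$.
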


\begin{proof}
To see this proof we return to Section~\ref{subsect-review-LS}
and observe that we should take $r_{D^+}=r_0$ when defining the
regions $A$ and $B$.   Then in Section~\ref{subsect-reexamine},
everywhere that we estimates $r_{D^+}\le r_0+D$, we have
$r_{D^+}=r_0$.   So instead of (\ref{A1}) we have
\be
\label{A2-2}
\vol(A)
\le4\pi (8D+4\pi r_0)\left(\delta^2(1+Q^{-2})^2 + (r_0)^2\sqrt{Q} \right), 
\ee
where $Q$ must satisfy the constraints
$Q\le 1/2$ and 
$
Q> \sqrt{\frac{2\delta}{(r_0-2\delta)}}.$
The best choice of $Q$ is then
$
Q= \left( \frac{8\delta} {r_0}\right)^{4/9}, 
$
which satisfies the constraints under our hypothesis.   
Substituting this value of $Q$ and using calculations
similar to (\ref{A2-2}) we obtain
$$
\vol(A)\le \eps_U(D, r_0, \delta).
$$
Recomputing $vol(B)$ using $r_{D^+}=r_0$ we alter (\ref{B1})
and obtain
$$
\vol(B)\le D \eps_U(D, r_0, \delta).
$$
The same idea gives us (\ref{V2}).   To obtain the estimate on
the volumes of the boundaries of the inner tubular neighborhoods,
observe that 
$
\partial U_D(\Sigma_0)=\Sigma_0 \cup r^{-1}(r_{D^-})
$
and 
$$
\vol(\Sigma_0\subset \E^3)=4\pi r_0^2=\vol(\Sigma_0\subset M^3). 
$$
So, we need only the upper estimate  
$$
\vol(r^{-1}(r_{D^-}))\le \vol(r^{-1}(r_\eps))=4\pi r_\eps^2, 
$$
which is estimated exactly as in the first term of (\ref{V8}).
\end{proof}


\subsection{Nonlinear stability assuming bounded depth}    

Recall the definition of depth in the introduction.
Given a surface $\Sigma$ in a complete and non-compact manifold,
such that $\Sigma=\partial \Omega\setminus \partial M$ we have
\be
\Depth(\Sigma)=\inf\{D: \,\, \Omega \subset T_D(\Sigma) \}, 
\ee
where the infimum is taken over all tubular regions. 

For $(M^m, g) \in \RS$, and $\Sigma_0$ of fixed area $\vol(\Sigma_0)=\alpha_0$ and
$m_H(\Sigma_0)=\delta$, it is possible for the depth to be arbitrarily large. (See  examples in \cite{LS1}.) 
The following statement is a direct consequence of Theorem~\ref{thm-adapted} since
$\Omega_0=Cl(U_D(\Sigma_0))$.   The only difference is that the
boundaries of the regions now match completely.   

\begin{theorem}[An estimate assuming bounded depth]
\label{thm-depth}    
Suppose $(M^3, g) \in \RSonethree$ and $m_{H}(\Sigma_{\alpha_0}) =\delta$ with
$\delta \le r_0/32$, 
where $\Sigma_{0}=\partial \Omega_0$ be the CMC surface of area
$
\alpha_0=4\pi r_0^2
$
and suppose that 
$\Depth(\Sigma)\le D$. 
Then one has 
\be
\aligned
 d_{\mathcal{F}}(\,\Omega_0\subset M^3\,,\, \Omega_0\subset \E^3\,)
& < (1+D)\eps_U(\delta, D, r_0),
\\
d_{D\mathcal{F}}(\,\Omega_0\subset M^3\,,\, \Omega_0\subset \E^3\,)
& < 2\eps_U(\delta, D, r_0),  
\endaligned
\ee
where
$\eps_U(D, r_0,\delta):=
48\pi (2D+\pi r_0)r_0^{16/9}\delta^{2/9}$ 
and, furthermore, 
 \be
\aligned
\left| \vol(\Omega_0\subset M^3)- \vol(\Omega_0\subset \E^3)\right|
& \le \eps_U(D, r_0,\delta),
\\
\vol(\partial \Omega_0\subset M^3))
&=
4\pi r_0^2= \vol(\partial\Omega_0\subset \E^3).
\endaligned
\ee
\end{theorem}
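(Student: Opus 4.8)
The plan is to deduce the statement directly from Theorem~\ref{thm-adapted} by observing that the depth hypothesis forces the region $\Omega_0$ to coincide, as an integral current space, with the inner tubular neighborhood $U_D(\Sigma_0)$. First I would unwind the definition of depth: $\Depth(\Sigma_0)\le D$ means $\Omega_0\subset T_D(\Sigma_0)$, so in the geodesic coordinates \eqref{our-g} with $\Sigma_0=s^{-1}(s_0)$ one has $\Omega_0=s^{-1}([s_{\min},s_0])$ with $s_0-s_{\min}\le D$, where $s_{\min}$ is $0$ (the pole) or the value of $s$ on $\partial M$. Hence $\max\{s_0-D,s_{\min}\}=s_{\min}$ and $U_D(\Sigma_0)=s^{-1}([\max\{s_0-D,s_{\min}\},s_0])=\Omega_0$ up to the pole, which is a single point of $\Hcal^m$-measure zero; by Proposition~\ref{prop-int-curr-space-2} the two carry the same integral current structure \eqref{T}. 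In Euclidean space, $\Omega_0\subset\E^3$ is likewise the closed ball of radius $r_0$, since $\alpha_0=4\pi r_0^2$.

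With this identification in hand, I would invoke Theorem~\ref{thm-adapted} with the same $r_0$ and $\delta=m_H(\Sigma_{\alpha_0})\le r_0/32$: the bounds there on $d_{\mathcal{F}}$, on $d_{D\mathcal{F}}$, and on the volume difference are precisely statements about $U_D(\Sigma_0)$ in $M^3$ versus in $\E^3$, so they transfer verbatim to $\Omega_0$ with the same function $\eps_U(D,r_0,\delta)=48\pi(2D+\pi r_0)r_0^{16/9}\delta^{2/9}$. This already gives every assertion of the theorem except the one concerning boundaries.

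The one genuinely new point is the exact matching of the boundaries, which I would read off from the boundary current formula \eqref{eqn-int-curr-space-2}: $\partial T$ is supported on $s^{-1}(s_0)\cup s^{-1}(s_D)$ with $s_D=\max\{s_0-D,0\}$. Under the depth bound the inner component $s^{-1}(s_D)$ either reduces to the pole, hence is $\Hcal^{m-1}$-null and contributes nothing to the current, or it coincides with $\partial M$, which is excluded from $\Sigma_0=\partial\Omega_0\setminus\partial M$; in either case the effective boundary is exactly $\Sigma_0$, of area $4\pi r_0^2$ by the normalization $\alpha_0=4\pi r_0^2$. On the Euclidean side the boundary of the ball of radius $r_0$ is the sphere of the same area, so $\vol(\partial\Omega_0\subset M^3)=4\pi r_0^2=\vol(\partial\Omega_0\subset\E^3)$.

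I do not expect a real obstacle here: the entire content is the bookkeeping that the depth hypothesis collapses the inner boundary component of $U_D(\Sigma_0)$, so that the inner-region estimate of Theorem~\ref{thm-adapted}, which already uses the cheaper radius $r_{D^+}=r_0$, applies now with matching rather than merely comparable boundaries. The only place to be slightly careful is to handle the closure operation and the degeneracy of the pole at the level of the Ambrosio--Kirchheim current structure, but this is already covered by Proposition~\ref{prop-int-curr-space-2} and its proof.
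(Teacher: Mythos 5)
Your proposal is correct and follows essentially the same route as the paper. The paper disposes of Theorem~\ref{thm-depth} in a single remark, namely that under the depth hypothesis $\Omega_0 = Cl(U_D(\Sigma_0))$, so the estimates of Theorem~\ref{thm-adapted} apply unchanged and the boundary components now match; you have simply spelled out the same identification (checking it at the level of the integral current structure via Proposition~\ref{prop-int-curr-space-2} and noting the pole is $\Hcal^{m}$-null), which is the intended argument.

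One small point worth making explicit, which both you and the paper leave implicit: to apply Theorem~\ref{thm-adapted} you also need the Euclidean-side regions to coincide, i.e.\ $U_D(\Sigma_0)\subset\E^3 = \Omega_0\subset\E^3 = \overline{B(0,r_0)}$. This holds because $f'\le 1$ together with $f(0)=0$ forces $s_0\ge r_0$, so $\Depth(\Sigma_0)\ge r_0$ and hence $D\ge r_0$; with $D\ge r_0$ the inner tubular neighborhood in $\E^3$ is the full ball rather than an annulus. Adding this line would close the only gap in the bookkeeping.
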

 

\section{Nonlinear stability in the Sobolev norm}
\label{sec:7}

\subsection{Preliminaries}
\label{sec:71}

The $H^1$ Sobolev norm between two diffeomorphic regions in manifolds depends upon the diffeomorphim.  Thus, given a diffeomorphism,
$\Psi: W_1 \to W_2$, the Sobolev norm of interest is $\| \Psi_* g_1 - g_2 \|_{H^1(W_2)}$. 
This norm does not scale when one rescales the manifolds.   In fact, the zero-th order terms scale like the square root of volume times distance squared while the first-order terms seem to scale like square root of volume alone.  We will also use the {\bf D-Sobolev norm}, defined by dividing\footnote{It would also be natural to divide here by the square root of volume.}
 the zero-th order terms by a diameter bound $D$.  

We are interested in controling the Sobolev norm between the inner tubular regions 
$U_D(\Sigma_0)\subset M^3$ and $U_D(\Sigma_0)\subset \E^3$  
for $M^3\in \RSonebarthree$ and $U_D(\Sigma_0)\subset \E^3$ where $\Sigma_0$ is
a CMC surface of area $\alpha_0$.  Bounds on Sobolev norm are not gauge invariant and require a well chosen diffeomorphism.   Here we use the intuition from Theorem~\ref{thm-adapted} to set up a diffeomorphism.

We proceed as follows. First in Section~\ref{sec:72} below, we assume the inner tubular regions are thin in the sense that $D< r_0 =\sqrt{\alpha_0/4\pi}$ since then both $U_D(\Sigma_0)\subset M^3$ and $U_D(\Sigma_0)\subset \E^3$  
are diffeomorphic to annular regions in $\R^3$ and we can set up a simple 
diffeomorphism which preserves the rotational symmetry and preserves the radial lengths.  Next, in Section~\ref{sec:73}, we study the $H^1$ sobolev norm without setting up
diffeomorphisms between $U_D(\Sigma_0)\subset M^3$ and $U_D(\Sigma_0)\subset \E^3$  since these regions need not be diffeomorphic when $D \ge r_0$ depending upon the depth of $\Sigma$.
 

\subsection{Nonlinear Sobolev stability of thin inner tubular regions}
\label{sec:72}

Here we consider thin inner tubular regions $U_D(\Sigma_0)$. Our condition on the mass 

\begin{theorem}[Nonlinear stability of thin regions in the $H^1$ norm] 
\label{thm-sobolev-1}  
Consider spaces $(M^3, g) \in \RSonebarthree$ and $m_{H}(\Sigma_{\alpha_0}) =:\delta$ with
\be 
\delta =m_H(\Sigma_0),
\ee
where $\Sigma_{0}=\partial \Omega_0$ is a CMC surface\footnote{Within the class $\RSonebarthree$, this surface may not be unique since the profile function may be constant on some intervals, but our bounds hold for any choice.}
 of area
$\alpha_0=4\pi r_0^2$.
Let $\sigma(x)=d(x, \Sigma_0)$ so that $s=s(\Sigma_0)-\sigma$.   
If 
$$
D < r_0,
$$
one can define a diffeomorphism
$\Psi: U_D(\Sigma_0)\subset M^3 \to U_D(\Sigma_0)\subset \E^3$ 
such that $\sigma(x)=\sigma(\Psi(x))$ and such that radial geodesics are
isometrically mapped to radial geodesics.   Then at a point $x \in U_D(\Sigma_0)\subset \E^3$ one may evaluate the metrics 
$$
\aligned
\psi_*g &= d\sigma^2 + ( f(s_M-\sigma))^2 g_{\mathbb{S}^2},
\\
g_{\E}&= d\sigma^2 + (s_{\E}-\sigma)^2 g_{\mathbb{S}^2}, 
\endaligned
$$
where $s_M=s(\Sigma_0\subset M)$ and $s_{\E}=s(\Sigma_0\subset \E^3)$. 
Then the Sobolev norm over $U=U_D(\Sigma_0)\subset \E^3$ can be estimated as 
\be
\| \Psi_*g-g_\E \|_{H^1(U)} \le  \sqrt{1+r_0^2}\,\,\eps_{H^1}(D, r_0, \delta)
\ee
and the D-Sobolev norm over $U=U_D(\Sigma_0)\subset \E^3$ can be estimated as 
\be
\aligned
\| \Psi_*g-g_\E \|_{DH^1(U)} & \le  \sqrt{2}\,\,\eps_{H^1}(D, r_0, \delta),
\endaligned
\ee
in which $\eps_{H^1}(D, r_0, \delta) : = 8 \sqrt{\pi}\, r_0^2\, \delta^{1/3}\, D^{1/6}$. 
\end{theorem}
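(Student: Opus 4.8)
The plan is to construct $\Psi$ by hand, reduce the Sobolev norm to one–dimensional weighted integrals of the single ``profile defect'' $h(\sigma):=f(s_M-\sigma)^2-(s_\E-\sigma)^2$ and its radial derivative, estimate $h$ and $h'$ pointwise using the Hawking mass, and then optimize a cutoff parameter to produce the exponents $\delta^{1/3}D^{1/6}$. Since $D<r_0=\sqrt{\alpha_0/4\pi}$, and $f(s_M)=r_0$ (because $\alpha_0=4\pi f(s_M)^2$) while $s_\E=r_0$ (the Euclidean profile is $f_\E(s)=s$), both $U_D(\Sigma_0)\subset M^3$ and $U_D(\Sigma_0)\subset\E^3$ are annular regions foliated by the symmetry orbits and coordinatized by $(\sigma,\theta)\in[0,D]\times S^2$ with $\sigma$ the distance to $\Sigma_0$; when there is a pole one has $s_M\ge r_0>D$ so the $M$-side region is a genuine annulus, and when a horizon is present one works with the portion of $U_D(\Sigma_0)$ that does not reach it. I take $\Psi$ to be the identity in both $\sigma$ and $\theta$: it is a diffeomorphism, it preserves $\sigma$, it sends radial geodesics isometrically onto radial geodesics, and pulling back $g=ds^2+f(s)^2g_{S^2}$ along $s=s_M-\sigma$ gives $\Psi_*g=d\sigma^2+f(s_M-\sigma)^2g_{S^2}$. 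Thus the defect is the pure tensor $\Psi_*g-g_\E=h(\sigma)\,g_{S^2}$, the $d\sigma^2$-parts cancelling; and from $0\le f'\le 1$ (that is \eqref{eq:233}, valid in $\RSonebarthree$) one gets $f(s_M-\sigma)\ge r_0-\sigma$, so $h\ge 0$, as well as $f(s_M-\sigma)\le r_0$, so the crude bounds $h(\sigma)\le r_0^2-(r_0-\sigma)^2\le 2r_0\sigma$ and $|h'(\sigma)|\le 4r_0$.

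For the $\delta$-dependent control I invoke the monotonicity of the Hawking mass (Proposition~\ref{thm-monotonicity}): since $\Sigma_0=\Sigma_{\alpha_0}$ sits at radius $r_0$ with $m_H(\Sigma_0)=\delta$, one has $0\le m_H(s)\le\delta$ for every $s\le s_M$, and the $m=3$ case of \eqref{eq:220}, $2m_H(s)=f(s)\big(1-f'(s)^2\big)$, gives $1-f'(s)=\frac{1-f'(s)^2}{1+f'(s)}\le\frac{2\delta}{f(s)}$. Integrating in $\sigma$ yields $0\le f(s_M-\sigma)-(r_0-\sigma)\le 2\delta\int_0^\sigma\frac{dt}{f(s_M-t)}$, and differentiating $h$ while inserting $f f'=\sqrt{f^2-2m_H f}$ produces a pointwise bound on $|h'|$ of the same $\delta$-linear type. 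Both of these bounds carry a factor $1/f(s_M-\sigma)$, hence a factor $1/(r_0-\sigma)$, which degenerates near $\sigma=D$ when $D$ is close to $r_0$ — precisely the range over which the $M$-side profile may collapse toward the pole.

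Substituting $\Psi_*g-g_\E=h\,g_{S^2}$ into the $H^1(U)$ norm over $U=U_D(\Sigma_0)\subset\E^3$ and writing everything in rotationally symmetric coordinates, the norm reduces to weighted $L^2[0,D]$ norms of $h$ and $h'$ against bounded factors (the $g_{S^2}$-coefficients, their $S^2$-derivatives, and the Euclidean volume element, each bounded by a constant multiple of $r_0^2\,d\sigma$ after integrating out $S^2$). I then split $[0,D]$ at a threshold $\sigma_\ast$, use the $\delta$-dependent estimates on the bulk $\{\sigma\le\sigma_\ast\}$ and the crude bounds $h\le 2r_0\sigma$, $|h'|\le 4r_0$ on the thin inner collar $\{\sigma>\sigma_\ast\}$ — where the small Euclidean measure compensates — and minimize the resulting sum in $\sigma_\ast$, exactly in the spirit of the $Q$-balancing in Theorem~\ref{thm-adapted}. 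This gives the zeroth-order bound $\eps_{H^1}(D,r_0,\delta)=8\sqrt\pi\,r_0^2\,\delta^{1/3}D^{1/6}$; the first-order part is of comparable or smaller size, and separating the $d\sigma$-direction contribution from the $g_{S^2}$-direction contribution accounts for the $\sqrt{1+r_0^2}$ prefactor in the $H^1$ estimate. Dividing the zeroth-order terms by $D$, as in the definition of the $D$-Sobolev norm, rebalances the two pieces so that each is at most $\eps_{H^1}$, which yields the $\sqrt 2$ prefactor.

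The main obstacle is this balancing step: the defect and its derivative are controlled in $\delta$ only away from the pole, so one must simultaneously (i) estimate them on the thin inner collar using only that the region there has small Euclidean measure and (ii) tune $\sigma_\ast$ so that the bulk and collar contributions meet at the stated rate. A secondary technical point is to check that the connection terms arising when one differentiates the tensor $g_{S^2}$ with respect to $g_\E$ are absorbed into the $h$- and $h'$-estimates rather than introducing new, heavier, radial weights.
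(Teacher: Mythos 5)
Your proposal is essentially the same argument the paper uses: the diffeomorphism is the identity in $(\sigma,\theta)$ coordinates, the defect reduces to the scalar functions $h$ and $h'$, the pointwise bound on $1-f'$ comes from the monotonicity of the Hawking mass combined with $1-f'^2=2m_H/f$, and then one splits $[0,D]$ at an optimal $\sigma_\eps$ playing exactly the role your $\sigma_\ast$ plays, using the $\delta$-dependent estimates on $[0,\sigma_\eps]$ and the crude bounds together with the small Euclidean annular volume on $[\sigma_\eps,D]$; minimizing $F(\sigma_\eps) \approx 2\delta r_0^3/(r_0-\sigma_\eps)+(r_0-\sigma_\eps)^2D$ gives $(r_0-\sigma_\eps)=(\delta r_0^3/D)^{1/3}$ and the $\delta^{2/3}D^{1/3}$ rate.

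One small misattribution to flag: the prefactor $\sqrt{1+r_0^2}$ does not come from separating a $d\sigma^2$-contribution from the $g_{\mathbb{S}^2}$-contribution — the $d\sigma^2$ parts of $\Psi_*g$ and $g_\E$ cancel identically, so there is no radial defect. It comes instead from the different $r_0$-powers carried by the zeroth- and first-order pieces: the paper obtains $N_0\lesssim r_0^6\,\delta^{2/3}D^{1/3}$ and $N_1\lesssim r_0^4\,\delta^{2/3}D^{1/3}$, so $N_0+N_1\lesssim(1+r_0^2)\,N_1$, and the $\sqrt{2}$ in the $D$-Sobolev estimate likewise comes from dividing $N_0$ by a diameter bound comparable to $(r_0+D)^2$ so that the two pieces become commensurate. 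Your sentence about ``connection terms arising when one differentiates $g_{\mathbb{S}^2}$'' is a red herring in this rotationally symmetric reduction — once you have written everything in terms of $h$ and $h'$, there is nothing further to absorb. Neither of these affects the correctness of your plan; they are cosmetic corrections to your bookkeeping, and the substantive steps you describe match the paper's proof.
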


More precisely, we have $\| \Psi_*g-g_\E \|^2_{H^1(U)} = N_0(U) + N_1(U)$ with 
\be
\aligned
N_0(U) :=& \int_{0}^{D} | ( f(s_M-\sigma) )^2 - (s_{\E}-\sigma)) ^2 |^2 \, 4\pi (s_{\E}-\sigma)^2 d\sigma, 
\\
N_1(U) := &  \int_{0}^{D}  \Big| 
(d/d\sigma)\big(f(s_M-\sigma)\big)^2 - (d/d\sigma) \big(s_{\E}-\sigma \big)^2 
\Big|^2 
\, 4\pi (s_{\E}-\sigma)^2 d\sigma. \\
\endaligned
\ee
Observe that $N_0=0$ and $N_1=0$ precisely when 
\be
f(s_M-\sigma) = (s_{\E}-\sigma),   \qquad  \sigma\in [0,D], 
\ee
which occurs if and only if the map 
$\Psi: U_D(\Sigma_0)\subset M \to U_D(\Sigma_0)\subset \E$
is an isometry.

We will follow the following heuristics. We wish to show that $N_0+N_1$ is small when $m_H(\Sigma_0)$
is small.  Motivated by \cite{LS1} where a radius $r_\eps$ near $0$ was chosen to cut out the well, we will select a suitable 
$\sigma_\eps$ close to $D$ to cut out the well.  One cannot make the metric small in the well so the
integrals for $\sigma\in [\sigma_\eps, D]$ will be bounded by the volume of the region.   For $\sigma\in [0, \sigma_\eps]$, the smallness of the Hawking mass will control the metric.

\begin{proof} 1. In the beginning of this proof we will not use the fact that $D\le r_0$, 
so that we may also use these estimates in the following sections.  
Recall that 
$
f(s_M)=r_0=s_\E
$
but that $s_M$ might be much much larger that $s_\E$ if the
depth of $\Sigma_0$ is very large. Furthermore $f$ is monotone increasing and
\be
f'(s) = \sqrt{1 - \frac{2m_H(s)}{f(s)} } \le 1.
\ee
Thus, we find  
\be
 \label{f-S-M}
\aligned
f(s_M-\sigma)
&= f(s_M)-\int_{s_M-\sigma}^{s_M} f'(s)\,ds\\
&\ge s_{\E}-\int_{s_M-\sigma}^{s_M} 1 \,ds
= s_{\E} -(s_M-(s_M-\sigma))=s_\E-\sigma.
\endaligned
\ee

In order to derive the Sobolev estimates, we will break our integration at some $\sigma_\eps \in [0, r_0]$: 
\be
f(s_M -\sigma)\ge f(s_M-\sigma_\eps), \qquad  \sigma \in [0, \sigma_\eps].
\ee
Thus, we have 
$$
f'(s_M -\sigma) \ge \sqrt{1 - \frac{2m_H(s_M-\sigma)}{f(s_m-\sigma_\eps)}},  
\qquad  \sigma \in [0, \sigma_\eps]
$$
and, since the Hawking mass is non-decreasing as well,
$$
f'(s_M -\sigma) \ge \sqrt{1 - \frac{2m_H(s_M)}{f(s_m-\sigma_\eps)}}, 
\qquad \sigma \in [0, \sigma_\eps].
$$
Thus, we find 
\be
1 \ge f'(s_M-\sigma) \ge \sqrt{1 - \frac{2\delta}{f(s_m-\sigma_\eps)}}, 
\qquad \sigma \in [0, \sigma_\eps] 
\ee
and 
$$
\aligned
\big| d/d\sigma \left(f(s_M-\sigma)-(s_{\E}-\sigma)\right) \big|
&=|-f'(s_M-\sigma)+1|=-f'(s_M-\sigma) +1
\\
&\le 1-\sqrt{1 - \frac{2\delta}{f(s_M-\sigma_\eps)}}
 \le E(\delta, \sigma_\eps), \qquad \qquad 
 \sigma\in [0,\sigma_\eps], 
\endaligned
$$
where we have introduced  the scale invariant function 
\be\label{Edefn}
E(\delta, \sigma_\eps)
:=1-\sqrt{1 - \frac{2\delta}{(s_M-\sigma_\eps)} } =
1-\sqrt{1 - \frac{2\delta}{(r_0-\sigma_\eps)}}.
\ee
Here, we have applied (\ref{f-S-M}) and $s_M=r_0$ in order to obtain the final line in this estimate.

Also, we have 
$$
\aligned
f(s_M-\sigma)-(s_{\E}-\sigma)
&=
f(s_M)-(s_{\E}) + \int_{a=0}^{\sigma} 
d/da \,\left( f(s_M-a)-(s_{\E}-a)\right) \,da 
\\
&= 0+ \int_{a=0}^{\sigma} 
-f'(s_M-a)+1 \,da
\endaligned
$$
and so 
\be
\label{DE}
\aligned
|f(s_M-\sigma)-(s_{\E}-\sigma)|
&\le
\int_{a=0}^{\sigma} |-f'(s_M-a)+1| \,da 
\\
&\le\sigma_\eps E(\delta, \sigma_\eps) \le r_0E(\delta,\sigma_\eps), 
\qquad \qquad \sigma \in [0, \sigma_\eps]. 
\endaligned
\ee
It follows that
$$
\aligned
| ( f(s_M-\sigma) )^2 - (s_{\E}-\sigma)) ^2 |^2 
&=
| ( f(s_M-\sigma) ) - (s_{\E}-\sigma)) |^2 | ( f(s_M-\sigma) ) + (s_{\E}-\sigma)) |^2 \\
 &\le  | r_0E(\delta,\sigma_\eps) |^2 | 2r_0 |^2.
 \endaligned
$$

Then, we can also bound
$$
\aligned
 \left| 
\frac{d}{d\sigma}\big(f(s_M-\sigma)\big)^2 - \frac{d}{d\sigma} \big(s_{\E}-\sigma \big)^2 
\right| 
&=
 \Big|  2 f(s_M-\sigma) f'(s_M-\sigma) - 2(s_{\E}-\sigma) \Big| \\
& \le
 \Big|  2 f(s_M-\sigma) f'(s_M-\sigma) -2 f(s_M-\sigma)\Big| +
 \Big|2f(s_M-\sigma)- 2(s_{\E}-\sigma) \Big| \\
& \le 2 r_0 |  f'(s_M-\sigma) -1| + 2  |f(s_M-\sigma)-(s_\E-\sigma)|, 
\endaligned
$$
hence
\be
\label{primeE}
\aligned
 \left| 
\frac{d}{d\sigma}\big(f(s_M-\sigma)\big)^2 - \frac{d}{d\sigma} \big(s_{\E}-\sigma \big)^2 
\right| 
& \le 2 r_0 E(\delta, \sigma_\eps) + 2 r_0E(\delta, \sigma_\eps), 
\qquad\sigma\in [0, \sigma_\eps].
\endaligned
\ee

\vskip.15cm

2. We may now apply these estimates to approximate $N_0$ and $N_1$.
First observe that
$$
\aligned
N_0(U)&\le N_0(U_\eps)+N_0(U\setminus U_\eps),\\
N_1(U)&\le N_1(U_\eps)+N_1(U\setminus U_\eps),
\endaligned
$$
where
\be
\aligned
N_0(U_\eps) :=& \int_{0}^{\sigma_\eps} | ( f(s_M-\sigma) )^2 - (s_{\E}-\sigma)) ^2 |^2 \, 4\pi (s_{\E}-\sigma)^2 d\sigma, 
\\
N_0(U\setminus U_\eps) :=& \int_{\sigma_\eps}^{D} | ( f(s_M-\sigma) )^2 - (s_{\E}-\sigma)) ^2 |^2 \, 4\pi (s_{\E}-\sigma)^2 d\sigma. 
\\
N_1(U_\eps) :=& \int_{0}^{\sigma_\eps} 
 \Big| 
(d/d\sigma)\big(f(s_M-\sigma)\big)^2 - (d/d\sigma) \big(s_{\E}-\sigma \big)^2 
\Big|^2 
\, 4\pi (s_{\E}-\sigma)^2 d\sigma
\\
N_1(U\setminus U_\eps) :=& \int_{\sigma_\eps}^{D}
 \Big| 
(d/d\sigma)\big(f(s_M-\sigma)\big)^2 - (d/d\sigma) \big(s_{\E}-\sigma \big)^2 
\Big|^2 
\, 4\pi (s_{\E}-\sigma)^2 d\sigma.
\endaligned
\ee
Our estimates on $N_0(U_\eps)$ and $N_1(U_\eps)$ hold for any choice of $\sigma_\eps\in (0, r_0)$ which gives us (\ref{Edefn})-(\ref{primeE}) 
and will be used in the following as well.  
 So we
find these estimates first:
$$
\aligned
N_0(U_\eps)&= \int_{0}^{\sigma_\eps} | ( f(s_M-\sigma) )^2 - (s_{\E}-\sigma)) ^2 |^2 \, 4\pi (s_{\E}-\sigma)^2 d\sigma\\
 &\le 
 \int_{0}^{\sigma_\eps} | r_0E(\delta,\sigma_\eps) |^2 | 2r_0 |^2 
 \, 4\pi (r_0-\sigma)^2 d\sigma \\
&\le  |r_0E|^2|2r_0|^2
 \int_{r_0-\sigma_\eps}^{r_0}
 \, 4\pi r^2 dr \\
&\le  |r_0E(\delta,\sigma_\eps)|^2|2r_0|^2 \vol(Ann_{0}(r_0-\sigma_\eps, r_0)\subset \E^3),
\endaligned
$$
thus 
\be
N_0(U_\eps) \le  |r_0E(\delta,\sigma_\eps)|^2|2r_0|^2 \vol(B_{0}(r_0)\subset \E^3).
\ee

 The following estimate on $N_1(U_\eps)$ also
 holds for any choice of $\sigma_\eps$ which gives us (\ref{Edefn})-(\ref{primeE}): 
\be
\aligned
N_1(U_\eps) &=  
\int_{0}^{\sigma_\eps} 
 \Big| 
(d/d\sigma)\big(f(s_M-\sigma)\big)^2 - (d/d\sigma) \big(s_{\E}-\sigma \big)^2 
\Big|^2 
\, 4\pi (s_{\E}-\sigma)^2 d\sigma\\
&=
\int_{0}^{\sigma_\eps} 
 \Big|  2 r_0 E(\delta, \sigma_\eps) + 2 r_0E(\delta, \sigma_\eps) \Big|^2 
\, 4\pi (r_0-\sigma)^2 d\sigma\\
&\le 4^2r_0^2(E(\delta, \sigma_\eps))^2  \vol(Ann_{0}(r_0-\sigma_\eps, r_0)\subset \E^3)
\endaligned
\ee
and, therefore, 
\be
\label{reuse}
N_1(U_\eps) \le 16r_0^2(E(\delta, \sigma_\eps))^2  \vol(B_{0}(r_0)\subset \E^3).
\ee

\vskip.15cm 

3. The rest of the proof of this theorem which estimates the inner regions relies heavily on $D< r_0$ and will not be used in the proofs of subsequent
theorems.

Our estimate on $N_0(U\setminus U_\eps)$ cannot apply the strong controls on the metric provided in (\ref{DE}) but instead will rely on the
small volume of the regions and use the fact that $D<r_0$:
$$
\aligned
N_0(U\setminus U_\eps)&= \int_{\sigma_\eps}^D | ( f(s_M-\sigma) )^2 - (s_{\E}-\sigma)) ^2 |^2 \, 4\pi (s_{\E}-\sigma)^2 d\sigma\\
&\le \int_{\sigma_\eps}^D | ( f(s_M-\sigma) )^2 + (s_{\E}-\sigma)) ^2 |^2 \, 4\pi (s_{\E}-\sigma)^2 d\sigma\\
&\le \int_{\sigma_\eps}^D |  r_0^2 + r_0^2 |^2 \, 4\pi (s_{\E}-\sigma)^2 d\sigma
\le |  r_0^2 + r_0^2 |^2 \, \int_{r_0-D}^{r_0-\sigma_\eps} 4\pi r^2 dr
\\
 &\le | 2 r_0^2 |^2 \, \vol(Ann_0(r_0-D,r_0-\sigma_\eps),
\endaligned
$$
thus
\be
N_0(U\setminus U_\eps) 
\le | 2 r_0^2 |^2 \, 4\pi (r_0-\sigma_\eps)^2(D-\sigma_\eps)
\le | 2 r_0^2 |^2 \, 4\pi (r_0-\sigma_\eps)^2 D.
\ee

Our estimate on $N_0(U\setminus U_\eps)$ cannot apply the strong controls on the metric provided in (\ref{DE}) but instead will rely on the
small volume of the regions and $f'\le 1$ and use the fact that $D<r_0$:
$$
\aligned
N_1(U\setminus U_\eps) &= \int_{\sigma_\eps}^{D}
 \Big| 
(d/d\sigma)\big(f(s_M-\sigma)\big)^2 - (d/d\sigma) \big(s_{\E}-\sigma \big)^2 
\Big|^2 
\, 4\pi (s_{\E}-\sigma)^2 d\sigma\\
&=
\int_{\sigma_\eps}^{D}
\Big|  2 f(s_M-\sigma) f'(s_M-\sigma) - 2(s_{\E}-\sigma) \Big|^2 
\, 4\pi (s_{\E}-\sigma)^2 d\sigma\\
&\le
\int_{\sigma_\eps}^{D}
\Big|  2 f(s_M-\sigma) f'(s_M-\sigma) + 2(s_{\E}-\sigma) \Big|^2 
\, 4\pi (s_{\E}-\sigma)^2 d\sigma, 
\endaligned
$$
thus
\be
N_1(U\setminus U_\eps) 
\le
\int_{\sigma_\eps}^{D}
\Big|  2 r_0 (1 )+ 2 r_0 \Big|^2 
\, 4\pi (s_{\E}-\sigma)^2 d\sigma
\le | 4 r_0 |^2 \, 4\pi (r_0-\sigma_\eps)^2 D. 
\ee
Combining all of these estimates we have
$$
\aligned
N_0(U)&\le N_0(U_\eps)+N_0(U\setminus U_\eps)\\
&\le |r_0E(\delta,\sigma_\eps)|^2 4r_0^2(4/3)\pi r_0^3
+ 4r_0^2r_0^2 \, 4\pi (r_0-\sigma_\eps)^2 D \\
&\le16\pi r_0^4 \,  
[(E(\delta, \sigma_\eps))^2r_0^3 + (r_0-\sigma_\eps)^2D]
\endaligned
$$
and 
$$
\aligned
N_1(U)&\le  N_1(U_\eps)+N_1(U\setminus U_\eps)\\
&\le 16r_0^2(E(\delta, \sigma_\eps))^2  (4/3)\pi r_0^3
+  16r_0^2 \, 4\pi (r_0-\sigma_\eps)^2 D\\
&\le 32\pi r_0^2 \,  
[(E(\delta, \sigma_\eps))^2r_0^3 + (r_0-\sigma_\eps)^2D].
\endaligned
$$
So whether we wish to estimate the Sobolev norm $\sqrt{N_0(U)+N_1(U)}$
or the $D$ Sobolev norm $\sqrt{N_0(U)/(r_0+D)^2 + N_1(U)}$, we must
choose a good estimate for
$$
\aligned
F(\sigma_\eps)&
:=
(E(\delta, \sigma_\eps))^2r_0^3 + (r_0-\sigma_\eps)^2D\\
&=\left(1-\sqrt{1 - \frac{2\delta}{(r_0-\sigma_\eps)}\, } \right)^2r_0^3 + (r_0-\sigma_\eps)^2D\\
&\le\left(1-\sqrt{1 - \frac{2\delta}{(r_0-\sigma_\eps)}\, } \right)
\left(1+\sqrt{1 - \frac{2\delta}{(r_0-\sigma_\eps)}\, } \right)
r_0^3 + (r_0-\sigma_\eps)^2D\\
&=\left(1- \left(1 - \frac{2\delta}{(r_0-\sigma_\eps)}\, \right) \right)
r_0^3 + (r_0-\sigma_\eps)^2D
=\frac{2\delta}{(r_0-\sigma_\eps)}\, 
r_0^3 + (r_0-\sigma_\eps)^2D. 
\endaligned
$$
Observe that 
$$
\aligned
&0=F'(\sigma)=\frac{2\delta}{(r_0-\sigma)^2}\, r_0^3 -2(r_0-\sigma)D,
\qquad 
&
2(r_0-\sigma)D=\frac{2\delta}{(r_0-\sigma)^2}\,r_0^3,
 \\
&2(r_0-\sigma)^3D=2\delta\,r_0^3, 
\qquad 
&
(r_0-\sigma)=\left(\delta\,r_0^3/D\right)^{1/3}.  
\endaligned
$$

Now, since $\sigma_\eps \in [0,D]\subset [0, r_0]$, we distinguish between two cases:
$$
\aligned
\textrm{Case I: } & r_0 -\left(\delta\,r_0^3/D\right)^{1/3} \le D.
\\
\textrm{Case II: } & r_0 -\left(\delta\,r_0^3/D\right)^{1/3} > D. 
\endaligned
$$
In Case I, we take $\sigma_\eps= r_0 -\left(\delta\,r_0^3/D\right)^{1/3}$
and obtain 
\be
F(\sigma_\eps)= \frac{2\delta}{\left(\delta\,r_0^3/D\right)^{1/3} }\, 
r_0^3 + (\delta\,r_0^3/D )^{2/3}D= 2\delta^{2/3}\,r_0^2 D^{1/3}, 
\ee
thus
\bel{eq:567-I}
\aligned
N_0(U) &\le  16\pi r_0^4 [2\delta^{2/3}\,r_0^2 D^{1/3}],
\\
N_1(U)&\le 32\pi r_0^2 [2\delta^{2/3}\,r_0^2 D^{1/3}].
\endaligned
\ee
On the other hand, in Case II, we take $\sigma_\eps=D$, so that
$$
\aligned
N_0(U)\le N_0(U_\eps)+0
&\le |r_0E(\delta,D)|^2 4r_0^2(4/3)\pi r_0^3\\
&\le16\pi r_0^4 \,  
[(E(\delta, D))^2r_0^3]\\
&\le16\pi r_0^4 \,  
[2\delta/(r_0-D)r_0^3]\\
&\le16\pi r_0^4 \,  
[2\delta/(\delta r_0^3/D)^{1/3}r_0^3]  \le 
 16\pi r_0^4 \,  
[2\delta^{2/3}D^{1/3}r_0^2D^{1/3}], 
\endaligned
$$
where the condition in Case II was used in the penultimate inequality 
and 
$$
\aligned
N_1(U)\le  N_1(U_\eps)+0
&\le 32\pi r_0^2 \,  [(E(\delta, \sigma_\eps))^2r_0^3]\\
&\le  32\pi r_0^2
[2\delta^{2/3}D^{1/3}r_0^2D^{1/3}], 
\endaligned
$$
so that we now find 
\bel{eq:567-II}
\aligned
N_0(U) &\le 
 16\pi r_0^4 \,  [2\delta^{2/3}D^{1/3}r_0^2D^{1/3}], 
\\
N_1(U)&\le  32\pi r_0^2[2\delta^{2/3}D^{1/3}r_0^2D^{1/3}].
\endaligned
\ee
This completes the proof of Theorem~\ref{thm-sobolev-1}. 
\end{proof}


\subsection{Nonlinear Sobolev stability without diffeomorphisms}
\label{sec:73}

Here we would like to compare regions $U_D(\Sigma)$
which may not be diffeomorphic.  To do so we
define the backward profile function, $h$, emanating from $\Sigma$ as follows and estimate the Sobolev bounds on $h^2$ rather than
setting up a diffeomorphism.   

\begin{definition} \label{defn-back}
Fix $r_0 >0$. 
Given a manifold $M$  in $\RSzerobarthree$ with profile function $f$ and given any CMC hypersurface\footnote{Again, this surface may not be unique.}
 $\Sigma_0$ with area 
$\alpha_0=4\pi r_0^2$ one considers the parameter value $s_M\ge 0$ such that 
$f(s_M)=r_0$ and define the {\bf backward profile function} (determined from the hypersurface $\Sigma_0$) to be 
\be
h:[0,\infty) \to [0,\infty), 
\qquad \qquad 
h(\sigma)
=
\begin{cases}
f(s_M-\sigma), \qquad &  \sigma\le s_M,
\\
f(0),             & \sigma> s_M,
\end{cases} 
\ee
which is monotone non-increasing and may be discontinuous. 
\end{definition}

When the additional regularity $(M^3, g) \in \RSonebarthree$ is assumed, then the {\bf backward profile function} is actually (Lipschitz) continuous. Furthermore, the regularity $f \in H^1$ implies the same regularity $h \in H^1$ for the backward profile.

In addition, observe that, in Euclidean space, we have $f(s)=s$ and
$h_\E(\sigma)=\max\{(r_0-\sigma),0\}$, 
which is positive only on $[0, r_0)$. 
Example~\ref{thin-well} below will give an explanation as to why it is essential
to consider here these backward profile functions rather than the original functions.

\begin{theorem}[Nonlinear stability in the $H^1$ norm]
\label{thm-sobolev-no-diffeo}  
Consider a space $(M^m, g) \in \RS^{\weak, 1}_m$ with $m_{H}(\Sigma_{\alpha_0}) =:\delta$ and
$
\delta =m_H(\Sigma_0),  
$
where $\Sigma_{0}$ denotes any CMC surface of area
$\alpha_0=4\pi r_0^2$. 
Then for any $D>0$, the following estimate holds: 
\be
\aligned
 \|h^2(\sigma) - (r_0-\sigma)^2\|_{H^1[0,D]} &\le \sqrt{1+r_0^2} \, 
\eps_{H^1}(D, r_0, \delta),
\\
 \eps_{H^1}(D, r_0, \delta) &= 16 \sqrt{\pi}\, r_0^2\, \delta^{1/3}\, D^{1/6}.
\endaligned
\ee
\end{theorem}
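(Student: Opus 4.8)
The plan is to mirror the proof of Theorem~\ref{thm-sobolev-1}, but with the Euclidean comparison function replaced by the Euclidean backward profile $h_\E(\sigma):=\max\{r_0-\sigma,0\}$ and with no a priori restriction $D<r_0$. First I would record the elementary facts that the Hawking-mass bound $0\le f'\le1$ (equivalently \eqref{f-S-M}) makes the backward profile $h(\sigma)=f(s_M-\sigma)$ non-increasing with $|h'(\sigma)|=f'(s_M-\sigma)\le1$, so that $h(\sigma)\ge h_\E(\sigma)\ge0$ for all $\sigma\ge0$, with equality at $\sigma=0$. Writing $\|h^2-h_\E^2\|_{H^1[0,D]}^2=N_0+N_1$ as a zeroth-order plus a first-order integral over $[0,D]$, I then fix a cut-off $\sigma_\eps\in(0,r_0)$ with $r_0-\sigma_\eps>2\delta$ and split each of $N_0,N_1$ into its contribution over the inner slab $[0,\sigma_\eps]$ and over the outer slab $[\sigma_\eps,D]$.

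On the inner slab the Hawking-mass control from the proof of Theorem~\ref{thm-sobolev-1} applies verbatim, and was in fact stated there to hold for every $\sigma_\eps\in(0,r_0)$ without assuming $D<r_0$: since $f(s_M-\sigma)\ge f(s_M-\sigma_\eps)\ge r_0-\sigma_\eps$ on $[0,\sigma_\eps]$, the estimates \eqref{DE} and \eqref{primeE} give $|h-h_\E|\le\sigma_\eps E(\delta,\sigma_\eps)$ and $|(h^2)'-(h_\E^2)'|\le4r_0E(\delta,\sigma_\eps)$ there, with $E(\delta,\sigma_\eps)=1-\sqrt{1-2\delta/(r_0-\sigma_\eps)}$. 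Combined with $0\le h,h_\E\le r_0$ and the algebraic simplification $E(\delta,\sigma_\eps)^2\le 2\delta/(r_0-\sigma_\eps)$ already used in Theorem~\ref{thm-sobolev-1}, this bounds the inner-slab parts of $N_0,N_1$ by constants times $\delta\,r_0^{(\cdot)}/(r_0-\sigma_\eps)$ — exactly the bounds \eqref{reuse} and its analogue for $N_0(U_\eps)$. On the outer slab no metric control survives, so I would use only $0\le h_\E(\sigma)\le r_0-\sigma_\eps$, $0\le h(\sigma)\le h(\sigma_\eps)\le(r_0-\sigma_\eps)+r_0E(\delta,\sigma_\eps)$ (monotonicity of $h$ together with \eqref{DE} at $\sigma=\sigma_\eps$), $|h'|,|h_\E'|\le1$, and $h_\E\equiv0$ for $\sigma\ge r_0$, to bound the outer-slab contributions by a constant times $(r_0-\sigma_\eps)^2$ times the length of the slab, as in the $N_i(U\setminus U_\eps)$ step of Theorem~\ref{thm-sobolev-1}. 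Adding the two slabs gives a bound of the shape $c\,r_0^{(\cdot)}\bigl(\delta/(r_0-\sigma_\eps)+(r_0-\sigma_\eps)^2D\bigr)$; taking $r_0-\sigma_\eps=(\delta r_0^3/D)^{1/3}$ when this is $\le D$ (Case~I of Theorem~\ref{thm-sobolev-1}) and $\sigma_\eps=D$ otherwise (Case~II, which empties the outer slab) is the same optimization as before and produces the $\delta^{2/3}D^{1/3}$ scaling of $N_0+N_1$, hence the asserted bound $\sqrt{1+r_0^2}\,\eps_{H^1}(D,r_0,\delta)$ with $\eps_{H^1}=16\sqrt{\pi}\,r_0^2\,\delta^{1/3}D^{1/6}$ after collecting powers of $r_0$ and $\pi$.

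The main obstacle is the outer slab $[\sigma_\eps,D]$ precisely when $D\ge r_0$: the well beneath $\Sigma_0$ may be arbitrarily deep, so all quantitative control of the metric is lost there, and the estimate survives only because two structural facts about the backward profile persist no matter how deep the well: $h(\sigma)\le h(\sigma_\eps)$ by monotonicity (and $h(\sigma_\eps)$ is itself small, of order $\delta^{1/3}$, because $\sigma_\eps$ is chosen close to $r_0$), while $h_\E$ vanishes identically beyond $r_0$. This is exactly why one must estimate $h^2$ rather than $f^2$: as Example~\ref{thin-well} shows, a thin deep well keeps $f^2$ far from the Euclidean profile on a set whose measure does not shrink with the Hawking mass, whereas $h^2$ stays uniformly close — so the computation that proves Theorem~\ref{thm-sobolev-1} for thin regions continues to work once it is phrased in terms of the backward profile, at the cost of the crude outer-slab bound described above.
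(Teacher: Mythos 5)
Your plan mirrors the paper's proof step for step: the same split of $\|\cdot\|_{H^1[0,D]}^2$ into $N_0+N_1$, the same cutoff $\sigma_\eps\in(0,r_0)$ with the inner slab controlled by \eqref{DE}--\eqref{primeE} and \eqref{reuse} from the proof of Theorem~\ref{thm-sobolev-1}, the same crude outer-slab bound of the form ``(sup of integrand)$\times$(volume of thin annulus)'', and the same balancing $r_0-\sigma_\eps=(\delta r_0^3/D)^{1/3}$ in Case I (reverting to $\sigma_\eps=D$ in Case II). That is precisely what the paper does.

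The one substantive departure is that you compare against the \emph{clamped} Euclidean backward profile $h_\E(\sigma)=\max\{r_0-\sigma,0\}$ rather than against the unclamped quantity $r_0-\sigma$ that appears in the theorem statement and in the paper's displayed integrands. This is not a cosmetic change, and in fact it repairs a real gap in the paper's outer-slab estimate. The paper bounds $(s_\E-\sigma)^2$ by $(r_0-\sigma_\eps)^2$ uniformly on $[\sigma_\eps,D]$ (and bounds $|s_\E-\sigma|$ by $r_0$), which is correct only while $\sigma\le 2r_0-\sigma_\eps$; once $D$ exceeds roughly $2r_0$ the unclamped $r_0-\sigma$ passes through zero and grows again, and the integrals over $[\sigma_\eps,D]$ no longer obey the stated bound — exactly in the ``large $D$'' regime the theorem is supposed to address. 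Your $h_\E$, being monotone non-increasing and identically zero past $\sigma=r_0$, satisfies $h_\E(\sigma)\le r_0-\sigma_\eps$ on all of $[\sigma_\eps,D]$ with no restriction, so the crude outer-slab bound of the form $\mathrm{const}\cdot(r_0-\sigma_\eps)^2 D$ holds for arbitrary $D$. Since the paper itself identifies $h_\E(\sigma)=\max\{r_0-\sigma,0\}$ as the Euclidean backward profile just before the theorem, this is almost certainly what the statement intends, and you should read ``$(r_0-\sigma)^2$'' in the statement as $h_\E^2$.

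One soft spot: you assert the final constant $16\sqrt{\pi}$ emerges ``after collecting powers of $r_0$ and $\pi$'' without tracking it, and the precise bookkeeping of the $r_0$-powers (they do not all collapse cleanly, which is why the factor $\sqrt{1+r_0^2}$ appears) is where a written-out version of this proof would still need care. But the structure of the argument, the decomposition, and the optimization all match the paper, and your clamping of $h_\E$ is the correct way to make the outer-slab step valid for all $D>0$.
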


This estimate when $D<r_0$ was already proven in Theorem~\ref{thm-sobolev-1}, and this new estimate is relevant to cover ``large'' values of $D$. 

\begin{proof}
We must estimate:
$\|h^2(\sigma) - (r_0-\sigma)^2 \|_{H^1[0,D]} =N_0(U) + N_1(U)$,
where, as in Theorem~\ref{thm-sobolev-1} and with $s_{\E}=r_0$, 
\be
\aligned
N_0(U) = & \int_{0}^{D} | ( h(\sigma) )^2 - (s_{\E}-\sigma)) ^2 |^2 \, 4\pi (s_{\E}-\sigma)^2 d\sigma, 
\\
N_1(U) = &  \int_{0}^{D}  \Big| 
(d/d\sigma)\big(h(\sigma)\big)^2 - (d/d\sigma) \big(s_{\E}-\sigma \big)^2 
\Big|^2 
\, 4\pi (s_{\E}-\sigma)^2 d\sigma. 
\endaligned
\ee
  As before we introduce an arbitrary $\sigma_\eps\in (0, r_0)$ and break the integrals
at $\sigma=\sigma_\eps$:
\be
\aligned
N_0(U)&\le N_0(U_\eps)+N_0(U\setminus U_\eps),
\\
N_1(U)&\le N_1(U_\eps)+N_1(U\setminus U_\eps), 
\endaligned
\ee
where
\be
\aligned
N_0(U_\eps) :=& \int_{0}^{\sigma_\eps} | ( h(\sigma )^2 - (s_{\E}-\sigma)) ^2 |^2 \, 4\pi (s_{\E}-\sigma)^2 d\sigma, 
\\
N_0(U\setminus U_\eps) :=& \int_{\sigma_\eps}^{D} | ( h(\sigma) )^2 - (s_{\E}-\sigma)) ^2 |^2 \, 4\pi (s_{\E}-\sigma)^2 d\sigma,
\\
N_1(U_\eps) :=& \int_{0}^{\sigma_\eps} 
 \Big| 
(d/d\sigma)\big(h(\sigma)\big)^2 - (d/d\sigma) \big(s_{\E}-\sigma \big)^2 
\Big|^2 
\, 4\pi (s_{\E}-\sigma)^2 d\sigma,
\\
N_1(U\setminus U_\eps) :=& \int_{\sigma_\eps}^{D}
 \Big| 
(d/d\sigma)\big(h(\sigma)\big)^2 - (d/d\sigma) \big(s_{\E}-\sigma \big)^2 
\Big|^2 
\, 4\pi (s_{\E}-\sigma)^2 d\sigma.
\endaligned
\ee
Choosing $\sigma_\eps< r_0$ in a way 
 which gives us (\ref{Edefn})-(\ref{primeE}), allows us to estimate two of the integrals as in (\ref{reuse}):
$$
\aligned
N_0(U_\eps)&\le |r_0E(\delta,\sigma_\eps)|^2|2r_0|^2 \vol(B_{0}(r_0)\subset \E^3),
\\
N_1(U_\eps) 
&\le 16r_0^2(E(\delta, \sigma_\eps))^2  \vol(B_{0}(r_0)\subset \E^3).
\endaligned
$$

Next, we estimate  
$$
\aligned
N_0(U\setminus U_\eps) 
&= \int_{\sigma_\eps}^{D} 
| ( h(\sigma) )^2 - (s_{\E}-\sigma)) ^2 |^2 \, 4\pi (s_{\E}-\sigma)^2 d\sigma
\\
&= \int_{\sigma_\eps}^{D} 
|( h(\sigma) )^2 + (s_{\E}-\sigma)) ^2 |^2 \, 4\pi (s_{\E}-\sigma)^2 d\sigma
\\
&= \int_{\sigma_\eps}^{D} 
\big( 2 \, |2r_0^2|^2 + 2 f(0)^2 \big) \, 4\pi (s_{\E}-\sigma)^2 d\sigma
\le \big( 2 \, |2r_0^2|^2 + 2 r_0^2 \big)  \, 4\pi (r_0-\sigma_\eps)^2 D.
\endaligned
$$
Finally we use $|h'(\sigma)|\le 1$ to estimate
$$
\aligned
N_1(U\setminus U_\eps) 
&= \int_{\sigma_\eps}^{D}
 \Big| 
(d/d\sigma)\big(h(\sigma)\big)^2 - (d/d\sigma) \big(s_{\E}-\sigma \big)^2 
\Big|^2 
\, 4\pi (s_{\E}-\sigma)^2 d\sigma\\
&= \int_{\sigma_\eps}^{D}
 \Big| 
(2(h(\sigma)h'(\sigma) + 2\big(s_{\E}-\sigma) 
\Big|^2 
\, 4\pi (s_{\E}-\sigma)^2 d\sigma\\
&= \int_{\sigma_\eps}^{D}
 \Big| 2 r_0 (1) + 2r_0
\Big|^2 
\, 4\pi (s_{\E}-\sigma)^2 d\sigma
\le 
 | 4 r_0 |^2 \, 4\pi (r_0-\sigma_\eps)^2 D. 
\endaligned
$$
These are almost the same estimates as in Theorem~\ref{thm-sobolev-1} and it is not difficult to check that 
\eqref{eq:567-I} and {eq:567-II} should now be replaced by 
\bel{eq:567-I-new}
\aligned
N_0(U) &\le  16\pi (r_0^4 + 2 r_0^2) [2\delta^{2/3}\,r_0^2 D^{1/3}],
\\
N_1(U)&\le 32\pi r_0^2 [2\delta^{2/3}\,r_0^2 D^{1/3}], 
\endaligned
\ee
and 
\bel{eq:567-II-new}
\aligned
N_0(U) &\le 
 16\pi (r_0^4 + 2r_0^2) \,  [2\delta^{2/3}D^{1/3}r_0^2D^{1/3}], 
\\
N_1(U)&\le  32\pi r_0^2[2\delta^{2/3}D^{1/3}r_0^2D^{1/3}], 
\endaligned
\ee
in Cases I and II, 
respectively. 
Again we reach the desired conclusion.
\end{proof}


\section{Compactness theorems}
\label{sec:8}

\subsection{Main compactness result}
\label{sec:81}

We now address the issue of the (pre-)compactness of sequences of rotationally symmetric spaces. In contrast with earlier results stated in  $\RSonebar$ which remain also valid in $\RSone$, it is now essential to work within $\RSonebar$ and therefore allow for profile functions that are only {\sl non-decreasing}  and, in other words, we must allow interior closed minimal surfaces. 

Specifically, in this section we prove the following compactness theorem. 
 
\begin{theorem}[Compactness framework in the intrinsic flat distance]  
\label{compactness}
Fix some constants $A_0, D_0, M_0>0$. Consider a sequence of spaces 
$U_j \subset M_j  \in \RSonebar$, where $\partial U_j\setminus \partial M_j$ is a rotationally symmetric surface $\Sigma_j\in M_j$ satisfying 
\be
\Area(\Sigma_j) =A_0,
\ee
\be
\Depth(\Sigma_j) \le D_0,
\ee
\be
\label{Hawking-bound}
m_H(\Sigma_{j}) \le M_0.
\ee
Then a subsequence (also denoted $M_j$) converges in the intrinsic flat sense to
a region $U_\infty\subset M_\infty  \in \RSonebar$.  By taking
$\Sigma_\infty =\partial U_\infty \in M_\infty$, one has the following 
\be\label{c-area}
\Area(\Sigma_\infty) =A_0, 
\ee
\be\label{c-depth}
\Depth(\Sigma_\infty) \le \liminf_{j \to +\infty} \Depth(\Sigma_j) \le D_0,
\ee
\be\label{c-volume}
\vol(U_\infty)=\lim_{j\to\infty}\vol(U_j)\le A_0D_0,
\ee
and
\be\label{c-Hawking}
m_H(\Sigma_{\infty}) =\lim_{j \to +\infty} m_H(\Sigma_j) \le M_0. 
\ee
\end{theorem}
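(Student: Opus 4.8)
The strategy is to reduce the statement to a compactness theorem for the backward profile functions $h_j$ of the slices $\Sigma_j$ and then transfer that compactness to the intrinsic flat distance by the gluing technique of Lakzian and Sormani \cite{LakzianSormani}. Fix $r_0>0$ by $\omega_{m-1}r_0^{m-1}=A_0$, and for each $j$ let $h_j$ be the backward profile function of $\Sigma_j$ in the sense of Definition~\ref{defn-back}. Then $h_j(0)=r_0$, the function $h_j$ is non-increasing with $0\le -h_j'\le 1$ a.e.\ (because $0\le f_j'\le1$ in $\RSonebar$, by Proposition~\ref{thm-monotonicity} and $m_H\ge0$), and $h_j(\sigma)=f_j(0)$ for $\sigma\ge\Depth(\Sigma_j)$. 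Since $\Depth(\Sigma_j)\le D_0$ and $U_j$ is the full inner region bounded by $\Sigma_j$, one has $U_j=U_{D_0}(\Sigma_j)$, whose geometry is that of the warped metric $d\sigma^2+h_j(\sigma)^2\,g_{S^{m-1}}$ over $\sigma\in[0,D_0]$, with Euclidean comparison $h_\E(\sigma)=\max\{r_0-\sigma,0\}$.

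First I would extract a limiting profile. Applying Theorem~\ref{thm-sobolev-no-diffeo} with $D=D_0$, together with $m_H(\Sigma_j)\le M_0$ and the monotonicity of $\eps_{H^1}$ in its mass argument, gives $\|h_j^2-(r_0-\sigma)^2\|_{H^1[0,D_0]}\le C(r_0,D_0,M_0)$ uniformly in $j$. Hence $\{h_j^2\}$ is bounded in $H^1[0,D_0]$, and by the compact embedding $H^1[0,D_0]\hookrightarrow C^0[0,D_0]$ a subsequence satisfies $h_j^2\to w$ uniformly and $h_j^2\rightharpoonup w$ weakly in $H^1$. Setting $h_\infty:=\sqrt w$, the function $h_\infty$ is non-increasing, $1$-Lipschitz, $h_\infty(0)=r_0$, and lies in $H^1[0,D_0]$; it is therefore the backward profile of a rotationally symmetric warped metric, which I take as the inner region $U_\infty:=U_{D_0}(\Sigma_\infty)$ of a space $M_\infty$ completed beyond $\Sigma_\infty$ by any admissible exterior (this completion is not unique but is immaterial for the statement).

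The step I expect to be the main obstacle is to sharpen this convergence enough to control the curvature sign of the limit and the Hawking mass: the $H^1$ estimate above is only a bound, not a smallness statement, so it yields only weak $H^1$ convergence, whereas $m_{H,j}(\sigma)=\tfrac12 h_j(\sigma)^{m-2}(1-h_j'(\sigma)^2)$ --- whose monotonicity in $\sigma$ is exactly the condition $R_j\ge0$ (Proposition~\ref{thm-monotonicity}) --- depends on $h_j'^2$, which is not continuous under weak $H^1$ limits. Here I would invoke the nonnegative scalar curvature of the $M_j$ themselves: combining $m_{H,j}'\le0$ with the chain rule gives, on each interval $[0,\sigma_\eps]$ on which $h_j\ge\rho>0$ uniformly, a uniform bounded-variation bound for $u_j:=h_j'^2=(1+z_j'^2)^{-1}$; Helly's selection theorem then upgrades a further subsequence to $u_j\to u_\infty$ pointwise and in $L^1$, and since $h_j'=-\sqrt{u_j}\to-\sqrt{u_\infty}$ in $L^2$ while also $h_j'\rightharpoonup h_\infty'$ weakly, one identifies $u_\infty=h_\infty'^2$ a.e., hence $h_j^2\to h_\infty^2$ strongly in $H^1[0,D_0]$. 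The delicate point inside this step is handling the degeneration of the map $\sigma\mapsto r=h_j(\sigma)$ at a pole and along flat arcs of the profile --- both genuinely present for $\RSonebar$ limits, which is what necessitates the barred classes --- and letting $\rho\to0$ in the end via the splitting at $\sigma_\eps$. Passing the curvature inequality \eqref{eq:214b} to the limit (its left side lower semicontinuous and its right side continuous under $L^1$-convergence of $u_j$) gives $R_\infty\ge0$, equivalently that $m_{H,\infty}(\sigma)=\lim_j m_{H,j}(\sigma)$ is non-increasing; so $M_\infty\in\RSonebar$.

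Finally I would transfer this to the intrinsic flat distance and read off the remaining claims. For the transfer, apply the technique of \cite{LakzianSormani} in the same form as in the proofs of Theorems~\ref{thm-adapted}--\ref{thm-depth} via Remark~\ref{rmrk-est}: split $U_{D_0}$ at $\sigma=\sigma_\eps$, observe that the deep part $\{\sigma\ge\sigma_\eps\}$ has volume at most $\omega_{m-1}\big(\sup_j h_j(\sigma_\eps)\big)^{m-1}D_0$, which is uniformly small once $\sigma_\eps$ is close to where $h_\infty$ becomes small, while on the shallow part $\{\sigma\le\sigma_\eps\}$ the warped metrics converge uniformly; the resulting gluing yields $d_{\mathcal F}(U_j,U_\infty)\to0$. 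For the geometric quantities: $\Area(\Sigma_\infty)=\omega_{m-1}h_\infty(0)^{m-1}=\omega_{m-1}r_0^{m-1}=A_0$; by uniform convergence, $\vol(U_j)=\int_0^{D_0}\omega_{m-1}h_j^{m-1}\,d\sigma\to\int_0^{D_0}\omega_{m-1}h_\infty^{m-1}\,d\sigma=\vol(U_\infty)\le\omega_{m-1}r_0^{m-1}D_0=A_0D_0$ (using $h_\infty\le r_0$ and $\Depth(\Sigma_\infty)\le D_0$); $m_H(\Sigma_j)=m_{H,j}(0)\to m_{H,\infty}(0)=m_H(\Sigma_\infty)\le M_0$ from the strong convergence of $h_j,h_j'$ together with the monotonicity of $m_{H,j}$, which pins down the value at $\sigma=0$; and, since $h_j$ is constant equal to $f_j(0)$ beyond $\Depth(\Sigma_j)$ and $h_j\to h_\infty$ uniformly, $h_\infty$ is constant on $(\liminf_j\Depth(\Sigma_j),D_0]$, so $\Depth(\Sigma_\infty)\le\liminf_j\Depth(\Sigma_j)\le D_0$.
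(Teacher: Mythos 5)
Your overall architecture matches the paper's: reduce to compactness of the backward profile functions $h_j$ (Theorem~\ref{compSobolev}), then transfer to intrinsic flat via the Lakzian--Sormani gluing (Proposition~\ref{compIF} together with Theorem~\ref{thm-subdiffeo} and Lemma~\ref{lem-lambda}). Where you differ is in how you obtain the strong $H^1$ compactness, and that difference makes your proof more roundabout than it needs to be.

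The paper does not pass through Theorem~\ref{thm-sobolev-no-diffeo} at all. Rather, it applies Helly's selection theorem directly to the sequence $h_j'\in[-1,0]$, using that the nonnegativity of the scalar curvature gives $h_j'\in BV_\loc$ (via \eqref{eq:215}) with total variation controlled by the Hawking mass bound. This yields pointwise convergence of $h_j'$, hence uniform convergence of $h_j$ via \eqref{eq:unif}, hence strong $H^1$ convergence, hence pointwise convergence of $m_{H,j}$ from \eqref{eq:220}. Your opening step --- invoking Theorem~\ref{thm-sobolev-no-diffeo} to get a uniform $H^1$ bound on $h_j^2$, then compact embedding to get weak $H^1$ and uniform convergence --- is therefore largely dead weight: by the end you still have to run a Helly argument (you do it on $u_j=h_j'^2$) to upgrade weak to strong, and that Helly argument alone would have sufficed. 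It is also a citation issue, since Theorem~\ref{thm-sobolev-no-diffeo} is stated for the class $\RSone$, while the compactness theorem lives in $\RSonebar$ (profile functions only non-decreasing); the underlying estimates do carry over, but you would need to say so.

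On the content of your Helly step: the argument is essentially correct but one link is left implicit. You assert $h_j'\rightharpoonup h_\infty'$ weakly in $L^2$, but this is not what weak $H^1$ convergence of $h_j^2$ gives (which is $2h_jh_j'\rightharpoonup 2h_\infty h_\infty'$; dividing by $2h_j$ is problematic where $h_\infty$ vanishes). The clean way is the one the paper uses implicitly: $h_j'$ is bounded in $L^\infty$, hence weak-$*$ compact; combined with the uniform convergence $h_j\to h_\infty$, the weak limit of $h_j'$ must be $h_\infty'$. Once that is in place, your identification $u_\infty=h_\infty'^2$ via uniqueness of the weak limit is fine. The restriction $h_j\ge\rho>0$ for the BV bound is a real subtlety (it is where $e^{-2F}$ in \eqref{eq:214b} becomes unbounded), and you are right to flag it; the paper's statement of a uniform TV bound is somewhat compressed at precisely this point, and the exhaustion by $\rho\to 0$ combined with the smallness of the contribution near the pole is the correct resolution.

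Your transfer to intrinsic flat, your derivation of \eqref{c-area}, \eqref{c-volume}, \eqref{c-Hawking}, and your depth lower semicontinuity argument via the constancy of $h_j$ beyond $\Depth(\Sigma_j)$, all match the paper's Proposition~\ref{compIF}.
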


Before we can give a proof of this result, we are going to consider the metrics based at the surface 
$\Sigma_0$ viewed using the backward profile functions denoted by $h_j$, and we will prove that this sequence 
$h_j$ is compact in the strong $H^1$ sense and that the nonnegative scalar curvature condition is preserved; cf.~Proposition~\ref{compSobolev}, below.   This theorem introduces a reversed backwards limit profile function, which we will use to define the
limit $U_\infty$ introduced in Theorem~\ref{compactness} above.

 Next, in Section~\ref{sec:83} below, we will exhibit an intrinsic flat limit by applying Wenger's flat compactness theorem.  Finally, by combining these observations, we will construct an isometry between the Sobolev and flat limits, and arrive at the desired compactness theorem in
the intrinsic flat distance, with the property that the nonnegative scalar curvature condition is retained in the limit.

In Example~\ref{no-scalar} below, we will show that while the notion of nonnegative scalar curvature in the sense of distributions persists under intrinsic flat convergence, scalar curvature is not converging.


\subsection{Compactness in the Sobolev norm}
\label{sec:82}

The following theorem is of interest in its own sake and will also be used in order to construct the limit space arising in
Theorem~\ref{compactness}.

\begin{theorem} [Compactness framework in the Sobolev norm]  
\label{compSobolev}  
Fix some constant $M_0$ and consider any sequence of spaces $(M_j, g_j) \in \RSonebar$ with profile functions $f_j$ 
satisfying the following uniform ADM mass bound: 
\be
m_{ADM}(M_j,g_j) \leq M_0. 
\ee
Then, the following properties hold:   
\bei
\item{\bf Backward profile functions.}  Fix some area $A_0=4\pi r_0^2 >0$ and consider the backward profile functions $h_j$ associated with the radius $r_0$. Then, the function $h_j$ and its derivative subconverge pointwise to a limit $h_\infty$ which is non-increasing and Lipschitz continuous: 
\be
\aligned
h_j(\sigma) \to h_\infty(\sigma) \quad \text{ at every } \sigma,  
\\
h_j'(\sigma) \to h'_\infty(\sigma) \quad \text{ at every } \sigma.   
\endaligned
\ee
In particular, these convergence properties imply the strong convergence $h_j \to h_\infty$ in the $H^1$ norm.

\item{\bf Reversed backwards limit profile function.} Assume, in addition, a uniform upper bound on the depth of 
a level set $\Sigma_0\subset M_j$, whose area equals $\Area(\Sigma_0)=4\pi r_0^2$,
\be
\Depth(\Sigma_0\subset M_j)\le D_0. 
\ee
Then, $h_j(\sigma)=0$ for $\sigma>D_0$ so that the
same property holds for $h_\infty$.  This allows us to define a reversed
backwards profile limit
\be
f(s) := h_\infty(s_\infty-s) \quad \textrm{ with } f(0)=r_{min \infty},
\ee
in which 
\be
s_\infty :=\sup\{\sigma: \, h_\infty(\sigma)>0\} \le D_0, 
\qquad 
r_{min \infty}=\lim_{\sigma' \to s_\infty} h_\infty(\sigma'). 
\ee 
This function precisely satisfies the conditions of a profile function for a space lying in $\RSonebar$ restricted to $[0, s_\infty]$
and the Hawking mass functions $m_{H j}$ of the spaces $M_j$ also converge pointwise. 
\eei 
\end{theorem}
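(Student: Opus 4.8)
The plan is to build the limit through a two-stage compactness argument: Arzel\`a--Ascoli applied to the backward profiles $h_j$ themselves, and Helly's selection theorem applied to the Hawking masses, which are monotone. First I would record the uniform bounds. Since $\Area(\Sigma_0)=A_0$ is fixed, the parameter $s_{M_j}$ with $f_j(s_{M_j})=r_0$ is well defined and $h_j(0)=r_0$ for every $j$; and since $2\,\mH(s)=f_j(s)^{m-2}\big(1-f_j'(s)^2\big)\ge 0$ by Proposition~\ref{thm-monotonicity}, one has $0\le f_j'\le 1$, hence $|h_j'|\le 1$ and $0\le h_j\le r_0$ on $[0,\infty)$. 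Thus $\{h_j\}$ is uniformly bounded and $1$-Lipschitz on every interval $[0,k]$, and by Arzel\`a--Ascoli together with a diagonal extraction over $k\in\N$ a subsequence converges locally uniformly to a non-increasing, $1$-Lipschitz limit $h_\infty$ with $h_\infty(0)=r_0$.

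The second step is to upgrade this to convergence of the derivatives, which Arzel\`a--Ascoli does \emph{not} give. Here I would introduce $\mu_j(\sigma):=\mH\big(s_{M_j}-\sigma\big)$, the Hawking mass of $M_j$ read in the backward coordinate $\sigma$ (extended by $\mH(0)$ for $\sigma\ge s_{M_j}$); by Proposition~\ref{thm-monotonicity} each $\mu_j$ is non-increasing in $\sigma$ and $0\le\mu_j\le M_0$, so Helly's selection theorem, after a further subsequence, yields $\mu_j(\sigma)\to\mu_\infty(\sigma)$ at every $\sigma$, with $\mu_\infty$ non-increasing. Combining this with the identity $h_j'(\sigma)=-\sqrt{1-2\mu_j(\sigma)/h_j(\sigma)^{m-2}}$, one gets, at every $\sigma$ with $h_\infty(\sigma)>0$, convergence of the right-hand side; dominated convergence (using $|h_j'|\le 1$) identifies the limit with $h_\infty'$ via $h_\infty(\sigma)=r_0+\int_0^\sigma h_\infty'$. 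On the complementary set $\{h_\infty=0\}$, which by monotonicity in $\sigma$ is a half-line $[s_\infty,\infty)$ on which both $h_j$ and $\mu_j$ converge uniformly to $0$, one sets $h_\infty'=0$ and verifies the convergence there directly. Strong $H^1[0,D]$ convergence then follows at once: $h_j\to h_\infty$ uniformly and $h_j'\to h_\infty'$ in $L^2$ by dominated convergence.

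For the reversed backwards limit profile I would reason as follows. The depth hypothesis $\Depth(\Sigma_0\subset M_j)\le D_0$ is exactly the statement $s_{M_j}\le D_0$, so each $h_j$ is constant on $(D_0,\infty)$, hence so is $h_\infty$; writing $s_\infty\le D_0$ for the smallest value beyond which $h_\infty$ is constant and $r_{\min\infty}$ for that constant value, the reversal $f(s):=h_\infty(s_\infty-s)$ on $[0,s_\infty]$ is non-decreasing, $1$-Lipschitz, of class $H^1$, with $f(0)=r_{\min\infty}$ and with distance spheres of area $A_0$ at $s=s_\infty$. Moreover $\mu_\infty$ being non-increasing in $\sigma$ translates, upon reversing the coordinate, into monotonicity of the Hawking mass in $s$, which by the computation behind Proposition~\ref{thm-monotonicity} (equivalently by \eqref{eq:214b}) is precisely nonnegativity of the distributional scalar curvature; thus $f$ is the profile function of a space in $\RSonebar$. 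Finally, $2\,\mH=h_j^{\,m-2}\big(1-(h_j')^2\big)$ together with the pointwise convergence of $h_j$ and $h_j'$ gives pointwise convergence of the Hawking mass functions to $\tfrac12\, h_\infty^{\,m-2}\big(1-(h_\infty')^2\big)$, the Hawking mass of the limit.

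I expect the main obstacle to be the second step: obtaining convergence of the derivatives $h_j'$ at \emph{every} point, not merely almost everywhere or in $L^2$. Ordinary equicontinuity arguments control $h_j$ but say nothing about $h_j'$; one genuinely needs the monotonicity of the Hawking mass so that Helly applies, and the fact that $f_j'$ is an explicit algebraic function of $\mH$ and $f_j$, to pin down the derivative pointwise. The degenerate locus $\{h_\infty=0\}$ --- where a pole or a minimal boundary emerges in the limit --- is the delicate point there, handled via the uniform convergence of the monotone functions $h_j$ and $\mu_j$.
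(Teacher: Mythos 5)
Your route is genuinely different from the paper's. The paper applies Helly's selection theorem \emph{directly} to the derivatives $h_j'$, asserting a uniform bound on their total variation (referencing the $BV_\loc$ regularity \eqref{eq:215} and the ADM mass bound), and then recovers the uniform convergence of $h_j$ by integrating $h_j'$; the convergence of the Hawking masses $m_{Hj}$ is then read off at the end from \eqref{eq:220}. You reverse the logic: Arzel\`a--Ascoli on $h_j$ (only using $|h_j'|\le 1$, which follows from nonnegativity of the Hawking mass), Helly applied to the monotone functions $\mu_j$, and then the algebraic identity $h_j'=-\sqrt{1-2\mu_j/h_j^{m-2}}$ to recover pointwise convergence of the derivatives. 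The advantage of your order is that Helly is applied to genuinely monotone (hence automatically uniformly bounded-variation) functions, avoiding the need to justify a uniform total-variation bound on the non-monotone sequence $h_j'$, which the paper asserts but does not spell out.

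There is, however, a real gap on the degenerate locus $\{h_\infty=0\}=[s_\infty,\infty)$, which you flag as ``the delicate point'' but do not actually resolve. There the algebraic identity reads $(h_j')^2=1-2\mu_j/h_j^{m-2}$ with \emph{both} $\mu_j\to 0$ and $h_j\to 0$, a $0/0$ form, so pointwise convergence of $h_j'$ does not follow. The uniform convergence $h_j\to 0,\ \mu_j\to 0$ that you invoke only controls $\int_{s_\infty}^{D}|h_j'|=h_j(s_\infty)-h_j(D)\to 0$, i.e.\ $L^1$ (hence $L^2$, since $|h_j'|\le 1$) convergence on that set; this suffices for the $H^1$ conclusion, but the theorem also claims $h_j'(\sigma)\to h_\infty'(\sigma)$ \emph{at every} $\sigma$, and on $(s_\infty,\liminf s_{M_j})$ a fixed point can carry $h_j'(\sigma)$ bounded away from $0$ while $h_j(\sigma)\to 0$ (think of a short steep ramp near the pole whose location drifts). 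To reach the full pointwise claim you would need an extra argument on $\{h_\infty=0\}$ --- for instance the uniform total-variation bound on $h_j'$ that underlies the paper's Helly step, or a further subsequence extraction tracking the pole positions $s_{M_j}$ --- rather than the monotone convergences of $h_j$ and $\mu_j$ alone.
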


At this stage, it is important to emphasize the following: 
\bei 
\item In Example~\ref{thin-well} below, we illustrate why the limit of the original functions
$f_j$ is not as geometrically natural as the reversed backwards profile
limit $f$.    

\item Namely, it may happen that the functions $f_j$ converge to $0$ while the functions $h_j$ converge to the Euclidean space's backward profile function, so that the reversed backwards profile limit is $f(s)=s$.

\item This observation is consistent with our conclusion above which {\sl does not} claim that $h_\infty$ is the backward profile function associated with $f_\infty$. 
\eei 

\begin{proof} In view of the regularity property \eqref{eq:215} and since the Hawking mass is uniformly bounded, we have 
$h_j' \in BV_\loc(0, +\infty)$ together with a uniform bound on the total variation of the functions $h_j' \in [0, -1]$. Therefore, by Helly's theorem \cite{EG}, a subsequence of $h_j'$ converges at every $s$ to some limit denoted by $h_\infty'$. This convergence property consequently holds in any $L^p$  norm for $p \in [1,+\infty]$.  
In addition, by construction, the functions $h_j \geq 0$ are uniformly bounded (since $h_j(r_0)=0$ and $h_j$ is non-increasing)
 and, therefore, 
converge uniformly, as follows:  
\bel{eq:unif}
\sup_\sigma |h_j(\sigma) - h_\infty(\sigma)| \leq \int_0^{+\infty} |h_j'(\sigma) - h_\infty'(\sigma)| \, d\sigma \to 0. 
\ee
In particular, this pointwise convergence of $h_j$ and $h_j'$ implies the convergence $h_j \to h_\infty$ in the $H^1$ norm. 
 
Furthermore, let us consider the Hawking mass functions $m_{H j}$. By assumption, these functions are nonnegative and non-increasing and are uniformly bounded by the ADM mass. Therefore, they converge to a limit $m_{H \infty}$ which is also non-increasing and satisfies 
$$
0 \leq m_{H \infty} \leq M_0. 
$$
Furthermore, importantly, in view of \eqref{eq:220}, we have 
$$
2 \, m_{H j}(\sigma) = (h_j(\sigma))^{m-2} \big( 1 - (h_j'(\sigma))^2 \big),
$$
in which the right-hand side converges pointwise, 
so that this limit can also be regarded as the Hawking mass associated with the function $h_\infty$, that is, 
\bel{eq:limmass}
2 \, m_{H \infty}(\sigma) = (h_\infty(\sigma))^{m-2} \big( 1 - (h_\infty'(\sigma))^2 \big). 
\ee

Now, the function $f$ defined as the statement fo the theorem clearly satisfies the regularity conditions of a profile function for a space
lying in $\RSonebar$.  Also, since $m_{H \infty}$ is non-increasing, this space has nonnegative scalar curvature. 
\end{proof}

\begin{remark} \rm
\label{lengths}
In the following section, we will use the following consequence of Theorem~\ref{compSobolev}:  for any curve $(\theta(t), \sigma(t))$, the length 
$$
\int_0^1 \sqrt{|\sigma'(t)|^2 + h_j^2(\sigma(t))|\theta'(t)|^2} \, dt
$$
converges to 
$$
\int_0^1 \sqrt{|\sigma'(t)|^2 + h_\infty^2(\sigma(t))|\theta'(t)|^2} \, dt. 
$$
This is immediate from the uniform convergence property $h_j \to h_\infty$ in \eqref{eq:unif}. 
\end{remark}


\subsection{Sobolev to intrinsic flat compactness}
\label{sec:83}

In light of Theorem~\ref{compSobolev}, in order to complete the proof of Theorem~\ref{compactness} we need only check the following result.  

\begin{proposition} \label{compIF}
Given a sequence as in Theorem~\ref{compactness}
we obtain a sequence of profile functions as in Proposition~\ref{compSobolev} whose backwards profile functions converge allowing us to define
a limit space $U_\infty\subset M_\infty  \in \RSonebar$ 
using the reversed backwards limit profile function $f$.  Then, one has 
\be\label{eq-compIF}
d_{\mathcal{F}}(U_j, U_\infty) \to 0
\ee
and, by taking
$\Sigma_\infty =\partial U_\infty \in M_\infty$, the conditions 
(\ref{c-area})-(\ref{c-Hawking}) hold.
\end{proposition}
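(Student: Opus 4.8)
The plan is to combine the strong $H^1$ convergence of the backward profile functions from Theorem~\ref{compSobolev} with the technique of \cite{LakzianSormani} for turning metric convergence away from a small set into intrinsic flat convergence, and then to read off \eqref{c-area}--\eqref{c-Hawking} from the pointwise and uniform convergence of the $h_j$ and their derivatives. First I would check that the hypotheses of Theorem~\ref{compSobolev} are in force: since $m_H(\Sigma_j)\le M_0$ and the Hawking mass is monotone along $\Omega_j$ (Proposition~\ref{thm-monotonicity}), it is bounded by $M_0$ throughout $\Omega_j$, so after passing to a subsequence we obtain a non-increasing Lipschitz limit $h_\infty$ with $h_j\to h_\infty$ uniformly (by \eqref{eq:unif}) and $h_j'\to h_\infty'$ pointwise, the backward profile $h_j$ supported in $[0,\Depth(\Sigma_j)]\subseteq[0,D_0]$, and the reversed backwards profile $f(s)=h_\infty(s_\infty-s)$ defining $U_\infty\subset M_\infty\in\RSonebar$ with $s_\infty=\sup\{\sigma:h_\infty(\sigma)>0\}\le D_0$. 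A point I would emphasise at the outset, and use crucially below, is that $h_\infty$ is continuous and non-increasing, hence $h_\infty(s_\infty)=0$, so $r_{\min,\infty}=0$: the limit space $M_\infty$ has a \emph{pole} at $s=s_\infty$, i.e. the well of $M_j$ collapses completely in the limit. Using Proposition~\ref{prop-rect} and Lemma~\ref{lem-geod}, I would then identify, in geodesic coordinates based at $\Sigma_j$, the region $U_j=\Omega_j$ with the warped product $\big([0,\Depth(\Sigma_j)]\times S^{m-1},\ d\sigma^2+h_j(\sigma)^2g_{S^{m-1}}\big)$ endowed with its length metric, and likewise $U_\infty$ with $\big([0,s_\infty]\times S^{m-1},\ d\sigma^2+h_\infty(\sigma)^2g_{S^{m-1}}\big)$.

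For the flat convergence \eqref{eq-compIF} I would argue by cutting off the well. Fix a small $\epsilon\in(0,s_\infty)$ with $h_\infty(s_\infty-\epsilon)>0$, and split $U_j=P_j^\epsilon\cup W_j^\epsilon$ with $P_j^\epsilon=\{\sigma\le s_\infty-\epsilon\}$ and $W_j^\epsilon=\{\sigma\ge s_\infty-\epsilon\}$, and likewise $U_\infty=P_\infty^\epsilon\cup W_\infty^\epsilon$. Since $h_j$ is non-increasing and $\Depth(\Sigma_j)\le D_0$, the well has volume
\[ \vol(W_j^\epsilon)=\omega_{m-1}\!\int_{s_\infty-\epsilon}^{\Depth(\Sigma_j)}\!h_j^{m-1}\,d\sigma\ \le\ \omega_{m-1}\,h_j(s_\infty-\epsilon)^{m-1}D_0\ \xrightarrow[j\to\infty]{}\ \omega_{m-1}\,h_\infty(s_\infty-\epsilon)^{m-1}D_0, \]
and similarly $\vol(W_\infty^\epsilon)\le\omega_{m-1}h_\infty(s_\infty-\epsilon)^{m-1}\epsilon$. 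On the shallow parts $h_j$ and $h_\infty$ are bounded below away from $0$ for $j$ large, so $P_j^\epsilon$ and $P_\infty^\epsilon$ are both isometric, as integral current spaces, to the fixed manifold $[0,s_\infty-\epsilon]\times S^{m-1}$ equipped with the metrics $d\sigma^2+h_j^2g_{S^{m-1}}$ and $d\sigma^2+h_\infty^2g_{S^{m-1}}$, which converge in $C^0$, are uniformly non-degenerate, and have uniformly bounded volume and diameter; hence the main estimate of \cite{LakzianSormani} yields $d_{\mathcal{F}}(P_j^\epsilon,P_\infty^\epsilon)\to0$ as $j\to\infty$, for each fixed $\epsilon$. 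Combining this with the excision estimate of \cite{LakzianSormani} — which bounds $d_{\mathcal{F}}(U_j,P_j^\epsilon)$ by $\vol(W_j^\epsilon)$ plus the area $\omega_{m-1}h_j(s_\infty-\epsilon)^{m-1}$ of the newly exposed boundary sphere, and similarly for $U_\infty$ — the triangle inequality gives
\[ \limsup_{j\to\infty}d_{\mathcal{F}}(U_j,U_\infty)\ \le\ 2\,\omega_{m-1}\,h_\infty(s_\infty-\epsilon)^{m-1}\,(D_0+1). \]
Letting $\epsilon\to0^+$ and using $h_\infty(s_\infty-\epsilon)\to h_\infty(s_\infty)=0$ — which is precisely where the pole of $M_\infty$ enters — we obtain \eqref{eq-compIF}.

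It then remains to verify \eqref{c-area}--\eqref{c-Hawking}, all of which follow from the convergence of $h_j$ and $h_j'$. Since $h_j(0)=f_j(\Depth(\Sigma_j))=r_0$ for every $j$, we get $h_\infty(0)=r_0$ and hence $\Area(\Sigma_\infty)=\omega_{m-1}h_\infty(0)^{m-1}=A_0$. For the depth, $\Depth(\Sigma_\infty)=s_\infty$, and for any $\sigma_0<s_\infty$ one has $h_\infty(\sigma_0)>0$, which forces $h_j(\sigma_0)>0$ and hence $\sigma_0\le\Depth(\Sigma_j)$ for $j$ large (the backward profile vanishes beyond $\Depth(\Sigma_j)$), so $\liminf_{j}\Depth(\Sigma_j)\ge s_\infty=\Depth(\Sigma_\infty)\le D_0$. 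For the volume, extend each $h_j$ by $0$ past $\Depth(\Sigma_j)$: the $h_j^{m-1}$ are uniformly bounded on $[0,D_0]$ and converge almost everywhere to $h_\infty^{m-1}$, so dominated convergence gives $\vol(U_j)=\omega_{m-1}\int_0^{D_0}h_j^{m-1}\to\omega_{m-1}\int_0^{s_\infty}h_\infty^{m-1}=\vol(U_\infty)\le\omega_{m-1}r_0^{m-1}s_\infty\le A_0D_0$. Finally, by \eqref{eq:220} and \eqref{eq:limmass}, $m_H(\Sigma_j)=\tfrac12h_j(0)^{m-2}\big(1-h_j'(0)^2\big)\to\tfrac12h_\infty(0)^{m-2}\big(1-h_\infty'(0)^2\big)=m_H(\Sigma_\infty)$, using $h_j(0)=r_0$ and $h_j'(0)\to h_\infty'(0)$, and the bound $m_H(\Sigma_j)\le M_0$ passes to the limit.

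The main obstacle is the flat convergence step, and more precisely the passage from the (easily obtained) pointwise and uniform convergence of the $h_j$ to a genuine bound on $d_{\mathcal{F}}(U_j,U_\infty)$. The difficulty is twofold: first, the depth $\Depth(\Sigma_j)$ need not converge — as the examples of \cite{LS1} show — so $U_j$ may contain an arbitrarily long thin well with no counterpart in $U_\infty$, which forces the cut-and-compare strategy above rather than a direct comparison on a common manifold; second, after the well is excised, the shallow region $P_j^\epsilon$ is in general not convex inside $U_j$, so its intrinsic metric is strictly larger than the metric restricted from $U_j$ and one cannot naively embed $U_j$ and $P_j^\epsilon$ into a common space — this is exactly the obstruction that the explicit filling construction of \cite{LakzianSormani} is designed to handle, so I would invoke their theorem rather than reproduce it. The one genuinely new bookkeeping point is the observation that $r_{\min,\infty}=0$, so that the areas of the exposed boundary spheres also tend to $0$ as $\epsilon\to0$, which is what makes the iterated limit close.
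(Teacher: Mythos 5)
Your overall plan is sound and correctly identifies the three ingredients (Sobolev compactness, a Lakzian--Sormani-type comparison, and reading off the limit quantities from the pointwise/uniform convergence of $h_j$ and $h_j'$), and your verification of \eqref{c-area}--\eqref{c-Hawking} at the end is essentially what the paper does. However, the route you take through the flat convergence step is genuinely different from the paper's, and as written it has a quantitative gap.

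The paper does not cut at a fixed $\sigma$-level and does not use a triangle inequality through an intermediate excised space. Instead it applies Theorem~\ref{thm-subdiffeo} \emph{once}, with $M_1=U_j$, $M_2=U_\infty$, and with $W_{j,\varepsilon}\subset U_j$, $W_{\infty,j,\varepsilon}\subset U_\infty$ chosen by the mass-ratio criterion $h_j(\sigma)/h_\infty(\sigma)\in\big((1+\varepsilon)^{-2},(1+\varepsilon)^2\big)$, which is exactly what the bi-Lipschitz hypothesis of that theorem asks for. The distortion term $\lambda_j$ (which compares $d_{U_j}$ to $d_{U_\infty}$, not intrinsic $W$-distances) is then controlled in a separate Lemma~\ref{lem-lambda} via the $C^0$ convergence of the $h_j$, and the complementary volumes are controlled by the $L^2$ convergence plus a $\delta$-cutoff on $h_\infty$. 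Your proposal instead cuts at $\sigma=s_\infty-\epsilon$ and compares through the excised spaces $P_j^\epsilon$.

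The gap in your version is the ``excision estimate.'' You correctly flag that $P_j^\epsilon$ with its intrinsic length metric does \emph{not} isometrically embed into $U_j$, since a $U_j$-geodesic between two shallow points may dip through the well. But then you assert that $d_{\mathcal{F}}(U_j,P_j^\epsilon)$ is bounded by $\vol(W_j^\epsilon)$ plus the area $\omega_{m-1}h_j(s_\infty-\epsilon)^{m-1}$ of the exposed sphere, which is not what the Lakzian--Sormani machinery gives. Applying Theorem~\ref{thm-subdiffeo} with $W_1=W_2=P_j^\epsilon$ (so $\varepsilon$ of that theorem can be taken near $0$, but $\lambda$ cannot), the distortion $\lambda$ is controlled by the diameter of the neck sphere, $\lambda\le \pi\,h_j(s_\infty-\epsilon)$, and the resulting $\bar h$ term scales like $\sqrt{h_j(s_\infty-\epsilon)\cdot(r_0+D_0)}$, not like $h_j(s_\infty-\epsilon)^{m-1}$. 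The final bound on $d_{\mathcal{F}}(U_j,P_j^\epsilon)$ is therefore $O\!\big(\sqrt{h_\infty(s_\infty-\epsilon)}\big)$ rather than the $O\!\big(h_\infty(s_\infty-\epsilon)^{m-1}\big)$ you wrote; your conclusion that this $\to0$ as $\epsilon\to0$ survives, but the stated power is wrong, and the step ``invoke their theorem'' hides precisely the computation that matters. You also invoke ``the main estimate of \cite{LakzianSormani}'' to get $d_{\mathcal{F}}(P_j^\epsilon,P_\infty^\epsilon)\to0$ without addressing the $\lambda$-term in that application either; the paper handles this via Lemma~\ref{lem-lambda}, and you should too.

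Two smaller remarks. First, your observation that $r_{\min,\infty}=0$ (so $M_\infty$ has a pole, and the exposed spheres shrink as $\epsilon\to0$) is a real insight and is what makes the iterated limit close; note, however, that it (like the paper's claim ``$h_j(\sigma)=0$ for $\sigma>D_0$'' in Theorem~\ref{compSobolev}) tacitly assumes the $M_j$ have poles rather than minimal boundaries. Second, the statement $\Depth(\Sigma_\infty)=s_\infty$ and the $\liminf$ argument for \eqref{c-depth} are fine and essentially match the paper's contrapositive version.
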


To prove this statement, we first observe that the following theorem (established first in Lakzian and Sormani \cite{LakzianSormani} for sufficiently regular spaces) holds even when $g_i\in \RSzerobar$, thanks to our work in Proposition~\ref{prop-int-curr-space} above. 

\begin{theorem} \label{thm-subdiffeo}  (See \cite{LakzianSormani}.) 
Suppose $M_1=(M,g_1)$ and $M_2=(M,g_2)$ are oriented
precompact Riemannian manifolds
with diffeomorphic subregions $W_i \subset M_i$ and
diffeomorphisms $\psi_i: W \to W_i$ such that
\be  
\aligned
\psi_1^*g_1(V,V)
& < (1+\eps)^2 \psi_2^*g_2(V,V) \qquad \text{ for all } \, V \in TW, 
\\  
\psi_2^*g_2(V,V) 
&<
(1+\eps)^2 \psi_1^*g_1(V,V) \qquad  \text{ for all } \, V \in TW.
\endaligned
\ee
Taking the extrinsic diameters, i.e. 
$$ 
D_{W_i}= \sup \big\{\diam_{M_i}(W): \, W\textrm{ is a connected component of } W_i \big\} \le \diam(M_i),
$$
one can introduce the hemispherical width 
\be \label{thm-subdiffeo-3}
a>\frac{\arccos(1+\eps)^{-1} }{\pi}\max\{D_{W_1}, D_{W_2}\}.
\ee
Taking the difference in distances with respect to the outside manifolds,
\be \label{lambda}
\lambda=\sup_{x,y \in W}
|d_{M_1}(\psi_1(x),\psi_1(y))-d_{M_2}(\psi_2(x),\psi_2(y))|,
\ee
one defines the heights
\be 
\label{thm-subdiffeo-4}
\aligned
h & =\sqrt{\lambda ( \max\{D_{W_1},D_{W_2}\} +\lambda/4 )\,}, 
\\  
\bar{h}&= \max\{h,  \sqrt{\eps^2 + 2\eps} \; D_{W_1}, \sqrt{\eps^2 + 2\eps} \; D_{W_2} \}.
\endaligned
\ee
Then, the intrinsic flat distance between the settled completions is bounded as follows: 
\begin{eqnarray*}
d_{\mathcal{F}}(M'_1, M'_2) &\le&
\left(2\bar{h} + a\right) \Big(
\vol_m(W_{1})+\vol_m(W_2)+\vol_{m-1}(\partial W_{1})+\vol_{m-1}(\partial W_{2})\Big)\\
&&+\vol_m(M_1\setminus W_1)+\vol_m(M_2\setminus W_2).
\end {eqnarray*}
\end{theorem}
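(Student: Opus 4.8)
The plan is to follow the explicit gluing construction of Lakzian and Sormani \cite{LakzianSormani} essentially verbatim, checking at each step that no smoothness of the metrics $g_i$ is used beyond what $\RSzerobar$ already provides. The heart of the argument is to build a single complete metric space $Z$ carrying metric isometric copies of the settled completions $M'_1, M'_2$, an $(m+1)$-dimensional filling current $B\in\intcurr_{m+1}(Z)$, and an $m$-dimensional current $A\in\intcurr_m(Z)$ with $A+\partial B=T_1-T_2$ (pushed into $Z$), and then to read off $d_{\mathcal{F}}(M'_1,M'_2)\le\mass(A)+\mass(B)$ directly from \eqref{IF-2}. The key point for the low-regularity extension is that, by Proposition~\ref{prop-int-curr-space} together with the bi-Lipschitz charts built from the monotone profile functions in Proposition~\ref{prop-rect}, all of $M_i$, the bridge region, and $B$ are genuine integral current spaces whose Ambrosio--Kirchheim masses agree (up to the factor in \eqref{AKmass}) with the expected Riemannian volumes, so the mass bookkeeping below makes sense even when $g_i\in\RSzerobar$.

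First I would construct $Z$. Using the diffeomorphisms $\psi_i\colon W\to W_i$ I identify the two copies of $W$ and form the ``hemispherical bridge'' $W\times[0,a]$ carrying the warped metric interpolating between the pulled-back metrics $\psi_1^*g_1$ and $\psi_2^*g_2$ over $W$, glued to $M_1$ along $W\times\{0\}$ and to $M_2$ along $W\times\{a\}$, together with vertical strips of height $\bar h$ attached along $\partial W_i$ to absorb the discrepancy between the extrinsic distances on $W_1$ and on $W_2$. The bi-Lipschitz inequalities force $\psi_1^*g_1$ and $\psi_2^*g_2$ to be $(1+\eps)$-close, which is exactly what makes the bridge a well-defined length space in which the projections onto the two ends are $1$-Lipschitz; the width condition \eqref{thm-subdiffeo-3} then guarantees that a path through the bridge between two points of $W_1$ (or of $W_2$) is never a shortcut. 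The distance-distortion parameter $\lambda$ of \eqref{lambda} is accommodated by the choice of $h$ in \eqref{thm-subdiffeo-4}, and the metric distortion $\eps$ by the terms $\sqrt{\eps^2+2\eps}\,D_{W_i}$; together these yield genuine metric isometric embeddings of $M_1$ and $M_2$ into $Z$.

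Next I would take $B$ to be the integral current carried by the bridge-plus-strips region and $A$ the $m$-current carried by $(M_1\setminus W_1)\cup(M_2\setminus W_2)$ together with the ``side faces'' of the bridge and strips; Stokes' theorem applied to \eqref{push-forward} and the boundary formula give $A+\partial B=T_1-T_2$ in $Z$. Estimating $\mass(B)$ by (total height)$\times$(base area) gives $(2\bar h+a)\big(\vol_m(W_1)+\vol_m(W_2)\big)$ up to the collar contributions, while $\mass(A)$ is bounded by $\vol_m(M_1\setminus W_1)+\vol_m(M_2\setminus W_2)+(2\bar h+a)\big(\vol_{m-1}(\partial W_1)+\vol_{m-1}(\partial W_2)\big)$; summing these two bounds produces the stated estimate. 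In the $\RSzerobar$ setting these volumes must be read as Ambrosio--Kirchheim masses, but this is harmless since the bridge, like the tubular neighborhoods in Proposition~\ref{prop-int-curr-space}, is a twisted product of $W$ with an interval and inherits a bi-Lipschitz chart structure for which the mass and the Hausdorff-measure bound \eqref{AKmass} give precisely the Riemannian volume.

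The main obstacle I expect is the verification that the gluing maps are \emph{metric isometric} embeddings rather than merely bi-Lipschitz ones --- i.e., that minimizing path length across the bridge does not shorten any distance internal to $M_1$ or $M_2$ --- since this is what forces the quantitative choices of $a$, $h$, $\bar h$ in \eqref{thm-subdiffeo-3}--\eqref{thm-subdiffeo-4}. Fortunately the Lakzian--Sormani argument for this step is purely metric: it uses only the triangle inequality, the length structure of the bridge, and the comparison constants $\eps$ and $\lambda$, never any regularity of $g_i$. Hence, once Proposition~\ref{prop-int-curr-space} has upgraded every space involved to an integral current space, the proof in \cite{LakzianSormani} transfers unchanged and the conclusion follows.
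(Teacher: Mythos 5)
Your proposal is correct and takes essentially the same approach as the paper: the paper simply cites the Lakzian--Sormani gluing theorem and notes that it transfers to the $\RSzerobar$ setting because Propositions~\ref{prop-rect} and~\ref{prop-int-curr-space} supply the bi-Lipschitz chart and integral current structure needed for the mass bookkeeping. You have usefully unpacked the bridge-and-strips construction from the cited reference, but the logical route is identical.
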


\begin{proof}[Proof of Proposition~\ref{compIF}] 
By Proposition~\ref{prop-int-curr-space}, for
$j= 1, \ldots, +\infty$ we have that $U_j$ 
is an
integral current space when viewed as a metric space with the restricted metric $d_{g_j}$ and  whose current structure is defined by (\ref{T}).   
The metric $g_j$ which is defined using the profile function $f_j$
may also be defined using the backwards profile functions $h_j$
so that
$$
g_j= d\sigma^2 + h_j(\sigma) g_{\mathbb{S}^2}
$$
with $\sigma \in [0,D_0]$ where we have extended $h_j$ as a constant
to reach $D_0$ if needed.   Recall that $h_j(0)=r_0>0$ for all 
$j\in \{1,..,\infty\}$. 

For any $\varepsilon>0$, let
$$
\aligned
W_{j,\varepsilon}& = \{(\sigma,\theta)\in U_j: \, \sigma\in A_{j,\varepsilon} \},
\\
W_{\infty, j,\varepsilon}& = \{(\sigma,\theta)\in U_\infty: \, \sigma\in A_{j,\varepsilon} \},
\endaligned
$$
where
$$
A_{j,\varepsilon}= \left\{\sigma\in [0,D_0]: \, \frac{h_j(\sigma)}{h_\infty(\sigma)} \in (1+\eps)^{-2},(1+\eps)^2\right\}.
$$
In particular $h_\infty(\sigma)$ and $h_j(\sigma)$ are positive
for $(\sigma,\theta)\in W_{j,\varepsilon}$.  
Then, we have 
$$
\aligned
D_{W_{j,\varepsilon}}
& = \sup\{\diam_{U_j}(W): \, W\textrm{ is a connected component of } W_{j,\varepsilon}\} \le \diam(U_j)\le r_0+D_0,
\\
D_{W_{\infty,j,\varepsilon}}
& = \sup\{\diam_{U_\infty}(W): \, W\textrm{ is a connected component of } W_{\infty,j,\varepsilon}\} \le \diam(U_\infty)
\le r_0+D_0
\endaligned
$$
Observe that
\be
a=a_\varepsilon= 2\frac{\arccos(1+\varepsilon)^{-1} }{\pi}(r_0+D) \to 0
\quad 
\textrm{ as } \varepsilon \to 0.
\ee
Now, we have 
\be
\lambda=\sup\left\{
\left|d_{U_j}((\theta_1,\sigma_1),(\theta_2,\sigma_2))
-d_{U_\infty}((\theta_1,\sigma_1),(\theta_2,\sigma_2))\right|
 :\,\, \sigma_1, \sigma_2\in A_{j,\varepsilon} \, \theta_1,\theta_2\in \mathbb{S}^2\right\}\le \lambda_{j}, 
\ee
where
\be
\lambda_{j}:=\sup\left\{
\left|d_{U_j}((\theta_1,\sigma_1),(\theta_2,\sigma_2))
-d_{U_\infty}((\theta_1,\sigma_1),(\theta_2,\sigma_2))\right|
 :\,\, \sigma_1, \sigma_2\in [0,D_0] \, \theta_1,\theta_2\in \mathbb{S}^2\right\}. 
\ee

We state a separate result in Lemma~\ref{lem-lambda} below, which show us that
$
\lim_{j\to \infty}\lambda_j =0.
$
We can then define the heights as in (\ref{thm-subdiffeo-4}) and obtain
$\bar{h}_{j}$
such that
\be
\bar{h}_{j}\to 0 \textrm{ whenever } \lambda_{j}\to 0.
\ee
Then the intrinsic flat distance is bounded, since 
\begin{eqnarray*}
d_{\mathcal{F}}(U_j, U_\infty) &\le&
\left(2\bar{h}_{j} + a_\varepsilon\right) \Big(
\vol_m(W_{j\varepsilon})+\vol_m(W_{\infty,j,\varepsilon})+\vol_{m-1}(\partial W_{j\varepsilon})+\vol_{m-1}(\partial W_{\infty,j,\varepsilon})\Big)\\
&&+\vol_m(U_j\setminus W_{j\varepsilon})+\vol_m(U_\infty\setminus W_{\infty,j,\varepsilon})\\
&\le&
\left(2\bar{h}_{j} + a_\varepsilon\right) (4\pi r_0^2 D_0 + 4\pi r_0^2 D_0 + 8\pi r_0^2 + 8\pi r_0^2\Big)\\
&& +\vol_m(U_j\setminus W_{j\varepsilon})+\vol_m(U_\infty\setminus W_{\infty,j,\varepsilon}).
\end {eqnarray*}

Now, we take $\delta>0$, and let 
$$
\sigma_\delta=\inf\{\sigma: h_\infty(\sigma)<\delta\}.
$$
Then by the pointwise convergence,
 for $j$ sufficiently large depending on $\delta$,
$$
h_j(\sigma_\delta) \in (\delta/2, 2\delta)
$$
and, so, by the monotonicity
$$
\aligned
h_\infty(\sigma)>\delta>\delta/2 & h_j(\sigma) > \delta/2 && \textrm{ on } [0,\sigma_\delta],
 \\
h_\infty(\sigma)<\delta<2\delta & h_j(\sigma) < 2\delta && \textrm{ on } [\sigma_\delta, D_0].
\endaligned
$$
Thus, we deduce that 
\be
\aligned
\vol_m(U_j \setminus W_{j\varepsilon})&\le V_{j,\eps,\delta},
\\
\vol_m(U_\infty \setminus W_{\infty,j,\varepsilon})&\le V_{j,\eps,\delta}
\endaligned
\ee
where
$$
V_{j,\eps,\delta}= D_0 4\pi \delta^2 + 
\mathcal{L}\left(A^C_{j\varepsilon}\cap [0, \sigma_\delta] \right) 4\pi r_0^2.
$$
If
$\sigma \in A^C_{j\varepsilon}\cap [0, \sigma_\delta]$, 
then
$$
h_j(\sigma)\ge (1+\varepsilon)^2 h_\infty(\sigma)\ge h_\infty(\sigma)+\varepsilon^2 \delta/2
$$
or
$$
h_\infty(\sigma)\ge (1+\varepsilon)^2 h_j(\sigma)\ge h_j(\sigma)+\varepsilon^2\delta/2 
$$
and, in either case,
\be
\inf\left\{|h_j(\sigma)-h_\infty(\sigma)|: \, \sigma \in  
A^C_{j\varepsilon}\cap [0, \sigma_\delta] \right\} \ge \varepsilon^2\delta/2.
\ee

Now, we obtain 
$$
\aligned
\int_0^{D_0} |h_j(\sigma)-h_\infty(\sigma)|^2 \, d\sigma
&\ge 
\int_{A^C_{j\varepsilon}\cap [0, \sigma_\delta] } |h_j(\sigma)-h_\infty(\sigma)|^2 \, d\sigma \\
&\ge  
\mathcal{L}\left(A^C_{j\varepsilon}\cap [0, \sigma_\delta] \right) \varepsilon^2\delta/2.
\endaligned
$$
Recall that in Theorem~\ref{compSobolev} we proved the convergence 
 $h_j \to h_\infty$ in $L^2[0,D_0]$.   So, 
for fixed $\varepsilon>0$ and $\delta>0$, we have
\be
\lim_{j\to \infty} \mathcal{L}\left(A^C_{j\varepsilon}\cap [0, \sigma_\delta] \right) \to 0
\ee
and thus
\be
\lim_{j\to\infty} V_{j,\eps,\delta} = D_0 4\pi \delta^2. 
\ee
Thus, for the flat distance, 
\begin{eqnarray*}
\lim_{j\to \infty}d_{\mathcal{F}}(U_j, U_\infty) 
&\le&
\left(2\lim_{j\to\infty}\bar{h}_{j} + a_\varepsilon\right) (8\pi r_0^2 D_0 + 16\pi r_0^2\Big) + D_0 4\pi \delta^2 \\
&\le&
\left(0 + a_\varepsilon\right) (8\pi r_0^2 D_0 + 16\pi r_0^2\Big) + D_0 4\pi \delta^2. 
\end {eqnarray*}
Taking $\delta \to 0$ and then $\eps \to 0$ we have completed the proof of \eqref{eq-compIF}.

Finally, we can apply Theorem~\ref{compSobolev} to this sequence and we see that
$h_j(0)\to h_\infty(0)$ implies (\ref{c-area}) while
$$
\vol(U_j)=\int_0^{D_0} h_j(\sigma)\, d\sigma \to \int_0^{D_0} h_\infty(\sigma)\, d\sigma  
=\vol(U_\infty)
$$
implies (\ref{c-volume}).  Note that, in general, intrinsic flat convergence only implies 
lower semicontinuity of the mass; yet, here, we have continuity and the mass agrees with
the volume and the area.   The convergence of the Hawking mass claimed in
(\ref{c-Hawking}) also follows from Theorem~\ref{compSobolev}.

Finally, we establish the bound (\ref{c-depth}), as follows.  Let
$D_1=\liminf_{j \to +\infty} \Depth(\Sigma_j) \le D_0$. 
If $D_1=D_0$ then we are done since  $\Depth(\Sigma_\infty)\le D_0$
by the definition of $U_\infty$ in Theorem~\ref{compSobolev}.   If $D_1<D_0$
then, by the definition of liminf,
$$
 \text{for all } D_2\in (D_1, D_0],\,\text{there exists } N_{D_2}
\textrm{ such that }
\sup_{j\ge N_{D_2}} \Depth(\Sigma_j)\le D_2,
$$
and so
$$
h_j(\sigma)=0 \textrm{ for } \sigma \in (D_2, D_0].
$$
Taking $j\to \infty$ we also have
$$
h_\infty(\sigma)=0 \textrm{ for } \sigma \in (D_2, D_0]
$$
and so
$$
\Depth(\Sigma_\infty) \le D_2.
$$
Taking $D_2 \to D_1$ we obtain
$\Depth(\Sigma_\infty) \le D_1$
and we are done.
\end{proof}

Finally we stated and prove the promised lemma. 
 
\begin{lemma}\label{lem-lambda}
If $h_j \to h_\infty$ 
and
\be
\lambda_{j}=\sup\left\{
\left|d_{U_j}((\theta_1,\sigma_1),(\theta_2,\sigma_2))
-d_{U_\infty}((\theta_1,\sigma_1),(\theta_2,\sigma_2))\right|
 :\,\, \sigma_1, \sigma_2\in [0,D_0] \, \theta_1,\theta_2\in \mathbb{S}^2\right\},
\ee
then
\be
\lim_{j\to\infty} \lambda_j=0.
\ee
\end{lemma}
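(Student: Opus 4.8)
The plan is to reduce the statement to a comparison of the length functionals of $U_j$ and $U_\infty$ that is uniform over the curves one needs. Working in the backward‑profile gauge, Proposition~\ref{prop-rect} identifies $d_{U_j}$ (and $d_{U_\infty}$) as the infimum of
\[
L_{g_j}(C)=\int_0^1\sqrt{|\sigma'(t)|^2+h_j(\sigma(t))^2\,|\theta'(t)|^2}\,dt
\]
over piecewise smooth curves $C(t)=(\sigma(t),\theta(t))$ joining the two points (avoiding the pole, should $h_j$ or $h_\infty$ vanish somewhere), where $h_j$ is extended by a constant past its support and $h_\infty$ similarly. So it suffices to show: given $\varepsilon>0$, for all $j$ large one has $|L_{g_j}(C)-L_{g_\infty}(C)|\le\pi\varepsilon$ for every curve $C$ admissible as a near‑minimizer, with a bound uniform in the endpoints; then $\lambda_j\to 0$ follows by comparing the two infima.

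First I would record the elementary pointwise inequality $\bigl|\sqrt{a^2+b^2}-\sqrt{a^2+c^2}\bigr|\le|b-c|$ for $a\ge0$, $b,c\ge0$, and apply it with $a=|\sigma'(t)|$, $b=h_j(\sigma(t))\,|\theta'(t)|$, $c=h_\infty(\sigma(t))\,|\theta'(t)|$. Integrating in $t$ gives
\[
|L_{g_j}(C)-L_{g_\infty}(C)|\le\int_0^1|h_j(\sigma(t))-h_\infty(\sigma(t))|\,|\theta'(t)|\,dt\le\Bigl(\sup_{\sigma\in[0,D_0]}|h_j(\sigma)-h_\infty(\sigma)|\Bigr)\int_0^1|\theta'(t)|\,dt.
\]
By the uniform convergence $h_j\to h_\infty$ established in Theorem~\ref{compSobolev} (inequality \eqref{eq:unif}), the first factor is $\le\varepsilon$ for $j$ large. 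For the second factor I would invoke Lemma~\ref{lem-geod}: in computing either distance we may restrict to curves whose spherical component is a minimal geodesic of $\mathbb{S}^2$ parametrized proportionally to arclength, so that $\int_0^1|\theta'(t)|\,dt=d_{\mathbb{S}^2}(\theta_1,\theta_2)\le\pi$. Hence $|L_{g_j}(C)-L_{g_\infty}(C)|\le\pi\varepsilon$ for all such $C$ and all large $j$, uniformly in the endpoints.

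To close, fix $p_i=(\sigma_i,\theta_i)$, $i=1,2$, and $\eta>0$; pick a curve $C$ in $U_\infty$ from $p_1$ to $p_2$ of the restricted form with $L_{g_\infty}(C)\le d_{U_\infty}(p_1,p_2)+\eta$. Then $d_{U_j}(p_1,p_2)\le L_{g_j}(C)\le L_{g_\infty}(C)+\pi\varepsilon\le d_{U_\infty}(p_1,p_2)+\eta+\pi\varepsilon$; letting $\eta\to0$ yields $d_{U_j}(p_1,p_2)\le d_{U_\infty}(p_1,p_2)+\pi\varepsilon$, and the same argument with the roles of $j$ and $\infty$ swapped gives the reverse bound. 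Thus $\lambda_j\le\pi\varepsilon$ for all large $j$, and since $\varepsilon>0$ was arbitrary, $\lambda_j\to0$.

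The square‑root inequality and the infimum comparison are routine; the one point needing care — the \emph{main obstacle} — is the reduction to near‑minimizers whose spherical component has length at most $\pi$. One must note that Lemma~\ref{lem-geod} genuinely applies to the backward‑profile warped products $U_j,U_\infty$ (its proof uses only the warped‑product form of the length, not smoothness or monotonicity of the warping function), and one must allow for the case in which $h_\infty$, or some $h_j$, vanishes on an interval, so that $U_\infty$ has a pole or a collapsed segment. Since $d_{U_\infty}$ is by definition the infimum over curves avoiding the pole, the length comparison above is unaffected, and this degeneration is already accounted for by the volume bounds used elsewhere in the proof of Proposition~\ref{compIF}.
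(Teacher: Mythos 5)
Your proof is correct and follows essentially the same route as the paper's: bound the difference of the two length functionals along a common near-minimizing curve using the uniform convergence $h_j\to h_\infty$ and the fact (Lemma~\ref{lem-geod}) that the spherical component of a near-minimizer may be taken to be a minimal geodesic of $\mathbb{S}^2$, so $\int|\theta'|\le\pi$. The only differences are cosmetic: the paper argues by contradiction and uses $\delta_j=\sup|h_j^2-h_\infty^2|^{1/2}$ together with subadditivity of the square root, while you argue directly from the inequality $|\sqrt{a^2+b^2}-\sqrt{a^2+c^2}|\le|b-c|$ and work with $\eta$-near-minimizers rather than asserting existence of an exact minimizing geodesic, which is in fact a slightly more careful handling of the low-regularity setting.
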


\begin{proof} We proceed by contradiction. Then, there exists $k_0>0$
and $(\theta_{1j},\sigma_{1j}),(\theta_{2j},\sigma_{2j})\in \mathbb{S}^2\times[0,D_0]$ 
 such that
\be
|d_{U_j}((\theta_{1j},\sigma_{1j}),(\theta_{2j},\sigma_{2j})) 
-d_{U_\infty}((\theta_{1j},\sigma_{1j}),(\theta_{2j},\sigma_{2j})) |\ge k_0.
\ee
Let 
$$
\delta_j=\sup \big\{ |(h_j(\sigma))^2-(h_\infty(\sigma))^2|^{1/2}: \, \sigma\in [0,D_0] \big\}
$$
and recall that in Theorem~\ref{compSobolev} we have proven $\delta_j\to 0$.
This implies that the lengths of curves converge as described in Remark~\ref{lengths}.

Recall that between any pair of points, there is a curve $(\theta(t), \sigma(t))$
whose length is the distance between those points.   When the metric is rotationally
symmetric, then $\sigma$ is in fact a reparametrized geodesic in $\mathbb{S}^2$.   
Any longer path taken by $\sigma$ would only make the length longer.

Now suppose we have a curve $(\theta(t), \sigma(t))$ running from
$(\theta_{1j},\sigma_{1j})$ to $(\theta_{2j},\sigma_{2j}))$
such that
\be
d_{U_\infty}((\theta_{1j},\sigma_{1j}),(\theta_{2j},\sigma_{2j})) =\int_0^1 \sqrt{|\sigma'(t)|^2 + h_\infty^2(\sigma(t))|\theta'(t)|^2} \, dt. 
\ee
Then, we find 
$$
\aligned
d_{U_j}((\theta_{1j},\sigma_{1j}),(\theta_{2j},\sigma_{2j})) 
&\le\int_0^1 \sqrt{|\sigma'(t)|^2 + h_j^2(\sigma(t))|\theta'(t)|^2} \, dt \\
&\le \int_0^1 \sqrt{|\sigma'(t)|^2 + (h_\infty(\sigma(t))^2+\delta_j^2)(\sigma(t))|\theta'(t)|^2} \, dt \\
&\le \int_0^1 \sqrt{|\sigma'(t)|^2 + (h_\infty(\sigma(t))^2(\sigma(t))|\theta'(t)|^2} \, dt 
 + \delta_j \int_{0}^1|\theta'(t)| \, dt \\
&\le d_{U_\infty}((\theta_{1j},\sigma_{1j}),(\theta_{2j},\sigma_{2j})) +\delta_j \pi.
\endaligned
$$
If on the other hand we take a curve $(\theta(t), \sigma(t))$ running from
$(\theta_{1j},\sigma_{1j})$ to $(\theta_{2j},\sigma_{2j}))$
such that
\be
d_{U_j}((\theta_{1j},\sigma_{1j}),(\theta_{2j},\sigma_{2j})) 
=\int_0^1 \sqrt{|\sigma'(t)|^2 + h_j^2(\sigma(t))|\theta'(t)|^2} \, dt. 
\ee
Then, we have 
$$
\aligned
d_{U_\infty}((\theta_{1j},\sigma_{1j}),(\theta_{2j},\sigma_{2j})) 
&\le\int_0^1 \sqrt{|\sigma'(t)|^2 + h_\infty^2(\sigma(t))|\theta'(t)|^2} \, dt \\
&\le \int_0^1 \sqrt{|\sigma'(t)|^2 + (h_j(\sigma(t))^2+\delta_j^2)(\sigma(t))|\theta'(t)|^2} \, dt \\
&\le \int_0^1 \sqrt{|\sigma'(t)|^2 + (h_j(\sigma(t))^2(\sigma(t))|\theta'(t)|^2} \, dt 
+ \delta_j \int_{0}^1|\theta'(t)| \, dt \\
&\le d_{U_j}((\theta_{1j},\sigma_{1j}),(\theta_{2j},\sigma_{2j})) +\delta_j \pi.
\endaligned
$$
We conclude that 
\be
|d_{U_j}((\theta_{1j},\sigma_{1j}),(\theta_{2j},\sigma_{2j})) 
-d_{U_\infty}((\theta_{1j},\sigma_{1j}),(\theta_{2j},\sigma_{2j})) |\le \delta_j \pi
\ee
and for $j$ sufficiently large we have reached a contradiction.
\end{proof}


\section{Examples}\label{sec:9}

In this section we provide the full details of examples mentioned earlier in this paper. We work in dimension three and, for each example, we
 provide a detailled construction.  An approach for constructing these examples is to refer to Lemma 2.6 in  \cite{LS1} by Lee and the second author. Therein, it was pointed out that given any smooth increasing function $\overline{m}_H:[0,\infty)\to[0,\infty)$ such that $\overline{m}_H(0)=0$ and
\be\label{admissible}
\overline{m}_H(r) < \frac{1}{2}r \qquad \text{ for all } r>0, 
\ee
there exists a smooth rotationally symmetric 3 dimensional Riemannian manifold with metric
\be\label{metric-constr}
g=(1+[z'(r)]^2)dr^2 + r^2 g_0,
\ee
with nonnegative scalar curvature such that 
the Hawking mass of the level set $z^{-1}(r)$ coincides with the prescribed function $\overline{m}_H(r)$.
Specifically, we find 
\be \label{lem-ex-m-H-to-z}
 z(\bar{r}) = \int_{\rmin}^{\bar{r}} \sqrt{ \frac{2\mathrm{m}_{\mathrm{H}}(r)}
{r^{m-2}-2\mathrm{m}_{\mathrm{H}}(r)}}\,dr,
\ee
and 
so $z(0)=0$.   

\begin{exampleproposition}\label{no-scalar}
There exist sequences of manifolds $M_j \in \RS_m^\reg$ satisfying the  uniform bounds in 
Theorem~\ref{compactness} which converge to Euclidean space $\mathbb{E}^3$ and have
$m_{ADM}(M_j)\to 0$, but $Scalar(p_j)\to K_\infty\in (0,\infty]$.  Hence, the scalar curvature
need not converge and, on the other hand, the points $p_j$ may lie on the pole or on $\Sigma_j$.
\end{exampleproposition}

\begin{proof}
Let $K_j\in (0,\infty)$ be increasing such that $K_j \to K_\infty$.   Let $\delta_j\in (0,\infty)$
decrease to $0$.   In this first example we take $p_j$ to be the poles and set $m_H(r)$ to be the Hawking mass
of (\ref{metric-constr}) where 
\be
z(r)=\sqrt{1/K_j^2 -r^2} -1/K_j
\ee
for $r\in [0,\delta_j]$ which is increasing since $Scalar = K_j\ge 0$.   
For $r\ge \delta_j$ set
\be
m_H(r)=m_H(\delta_j)
\ee
so that it continues to be nondecreasing.    If we choose $\delta_j$ decreasing to $0$
fast enough that $m_H(\delta_j)\to 0$ then $m_{ADM}(M_j)\to 0$ and the intrinsic flat limit
is Euclidean space.   

Next we take $m_j \to 0$ and
$p_j\in \Sigma_j$ such that $\Area(\Sigma_j)=A_0=4\pi r_0^2$.
Set $m_H(r)=m_j$ for $r\in [2m_j, r_0-\delta_j]$.   
This gives us a $z(r)$ defined up to $r=r_0-\delta_j$.
Let
\be
a_j=z(r_0-\delta_j) \textrm{ and } m_j=z'(r_0-\delta_j).
\ee
Choose $b_j>0$ such that the circle about $(0,b_j)$ of radius $1/K_j$ touches
the point $(r_0-\delta_j, a_j)$ with a tangent line of slope $m_j$.   
Let
\be
z(r)=\sqrt{1/K_j^2 - r^2} + b_j \textrm{ for } r\in [r_0-\delta_j, r_0+\delta_j].
\ee
For $r\ge r_0+\delta_j$, we set 
$m_H(r)=m_H(r_0+\delta_j)$, 
so that it continues to be nondecreasing.    If we choose $m_j$ and $\delta_j$ decreasing to $0$
fast enough that $m_H(r_0+\delta_j)\to 0$ then $m_{ADM}(M_j)\to 0$ and the intrinsic flat limit
is Euclidean space.   However $Scalar_{p_j}=K_j\to K_\infty$.
\end{proof}

The next example first appeared in \cite{LS1} demonstrating why Gromov-Hausdorff
convergence fails to provide stability of the positive mass theorem and why we needed
to study intrinsic flat convergence.   Here we make this example more explicit and show that it justifies why
we are 
studying backwards profile functions in Theorem~\ref{compSobolev}, why reversed backwards
limit profile functions are not the limits of profile functions and why we only obtain semicontinuity 
of the depth function in Theorem~\ref{compactness}:

\begin{exampleproposition}\label{thin-well}
There exist sequences of manifolds,
$M_j \in \RSone$ satisfying the conditions of
Theorem~\ref{compactness} which converge in the intrinsic flat sense
to Euclidean space $\mathbb{E}^3$ and have mass 
$m_{ADM}(M_j)\to 0$ which have increasingly thin wells
such that the reversed backwards limit profile function does not agree with
the limit of the profile functions and such that 
\be
\Depth(\Sigma_\infty)=3< \lim_{j\to \infty} \Depth(\Sigma_j)=6. 
\ee
\end{exampleproposition}

\begin{proof}
We want to construct a precise sequence of metrics $g_j$ with a very thin deep well.
Let $L>0$ and let 
\be
z_j(r)= L(jr)^2 \textrm{ for } r\in [0,1/j]
\ee
so that by (\ref{Hawking-r-bound}) we have (in dimension three) 
\be
m_{Hj}(r)=\frac{r}{2} \frac{(2Lj^2r)^2}{1+(2Lj^2r)^2} 
\ee
and thus 
\be
H_j:=m_{Hj}(1/j) =\frac{(1/j)}{2} \frac{(2Lj)^2}{1+(2Lj)^2}\to 0
\textrm{ as } j\to \infty.
\ee
We then prescribe 
\be
m_{Hj}(r)=H_j \textrm{ for } r\ge 1/j
\ee
and define $z_j(r)$ as in (\ref{lem-ex-m-H-to-z}) with
$r_{min}=0$.   Observe that
\be
z_j(r) = L + \sqrt{2H_j(r-2H_j)}- \sqrt{2H_j((1/j)-2H_j)}
 \textrm{ for } r\ge 1/j.
\ee
To see that we satisfy the conditions for Theorem~\ref{compactness}
we take $\Sigma_j\subset M_j \textrm{ be } r^{-1}(3)$, 
so that
\be
\Area(\Sigma_j)=4\pi 3^2, 
\qquad
 m_H(\Sigma_0)=H_j \le 1
\ee 
for $j$ sufficiently large.
In addition, we find 
$$
\aligned
\Depth(\Sigma_j)&= d_{g_j}(r^{-1}(3), r^{-1}(0))
=d_{g_j}(r^{-1}(1/j), r^{-1}(0))+d_{g_j}(r^{-1}(3), r^{-1}(1/j))
\\
&=\int_{0}^(1/j) \sqrt{ 1+z'(r)^2 } \, dr+
\int_{(1/j)}^3 \sqrt{ 1+z'(r)^2 } \, dr\\
&\le  \int_{0}^(1/j)  1+|z'(r)|\, dr+
\int_{(1/j)}^3  1+|z'(r)| \, dr\\
&\le  (1/j) + z(1/j)-z(0) + 3 + z(3)-z(1/j), 
\endaligned
$$
thus 
\be
\Depth(\Sigma_j) \le  (1/j) + L  + 3 + \sqrt{2H_j(3-2H_j)}, 
\ee
which is uniformly bounded so that we satisfy the conditions
of Theorem~\ref{compactness}.

Note also that, for the depth, 
$$
\aligned
\Depth(\Sigma_j)&= d_{g_j}(r^{-1}(3), r^{-1}(0))
=d_{g_j}(r^{-1}(1/j), r^{-1}(0))+d_{g_j}(r^{-1}(3), r^{-1}(1/j))
\\
&=\int_{0}^{(1/j)} \sqrt{ 1+z'(r)^2 } \, dr+
\int_{(1/j)}^3 \sqrt{ 1+z'(r)^2 } \, dr\\
&\ge  \int_{0}^{(1/j)}  |z'(r)|\, dr+
\int_{(1/j)}^3  1 \, dr \ge  z(1/j)-z(0) + 3 \ge  L +3 
\endaligned
$$
and thus
$
\lim_{j\to \infty} \Depth(\Sigma_j)=L+3, 
$
in which $L$ was arbitrary.

Since $m_{ADM}(M_j) =H_j \to 0$, we know by the stability of the positive mass theorem \cite{LS1} that $M_j$ converge to Euclidean space in the intrinsic flat sense. Thus, we have 
\be
\Depth(\Sigma_\infty)=\Depth\left(r^{-1}(3)\subset \E^3\right)=3
\ee
and thus
\be
\lim_{j\to \infty} \Depth(\Sigma_j)=L+3 > \Depth(\Sigma_\infty).
\ee

In addition by, Theorem~\ref{thm-sobolev-no-diffeo},
the backward profile functions $h_j(\sigma)$ must converge to 
$h_{\E}(\sigma)$ and so the reversed backwards limit profile function defined in Theorem~\ref{compSobolev} is $f(s)=s$ as in Euclidean space.

But let us examine exactly what happens to the ordinary profile functions, $f_j(s)$,
where $s$ is the distance from $r^{-1}(0)$.   We have $f_j(0)=0$.  Then there exist points 
$$
\aligned
s_{0,j}&=d_{M_j}(r^{-1}(0), r^{-1}(3)) =\Depth(\Sigma_j),
\\
s_{1,j}&=d_{M_j}(r^{-1}(0), r^{-1}(1/j), 
\endaligned
$$
so that
\be
f_j(s_{0,j})=3, \qquad  f_j(s_{1,j})=1/j.
\ee
We then observe that
$$
\aligned
s_{1,j}&=d_{g_j}(r^{-1}(1/j), r^{-1}(0))\\
&\ge  \int_{0}^(1/j)  |z'(r)|\, dr \ge z(1/j)-z(0) \ge L.
\endaligned
$$
Thus $[0, L]\subset [0, s_{1,j}]$, and so
\be
f(s) \le 1/(j) \quad \textrm{ for all } s\in [0, L], 
\ee
and, therefore, these profile functions converge to $f_\infty(s)=0$ on $[0, L]$.  Thus $f_\infty(s)\neq f(s)$.    

The backward profile functions, $h_j(\sigma)=f_j(s_{0,j}-\sigma)$ are
well controlled since they are based on the level set $h_j^{-1}(0)=\Sigma_0\subset M_j$, 
which persists in the intrinsic flat limit, while the profile functions $f_j$ vanish in the limit since 
they are based at a point which is ``disappearing'' in the limit.
\end{proof}

\begin{exampleproposition}
There exist sequences of manifolds $M_j\in\RS_m^\reg$ satisfying the uniform bounds in 
 Theorem~\ref{compactness} that converge in the intrinsic
flat and Sobolev sense toward a limit $M_\infty \in \RSonebar\setminus \RSone$.
\end{exampleproposition}

\begin{proof}
It is easy to construct a sequence  of smooth functions $f_j$
which approaches (for instance) 
$$
f_\infty(s)
=\begin{cases}
\sin(s),            & s\in [0,\pi/2], 
\\
1      &  s\in [\pi/2,\pi]
\end{cases}
$$ 
and $f_\infty(s)$ defined for $s>\pi$ to have
constant Hawking mass equal to $2$.  In fact
we can consider any sequence satisfying $f_j(s)=f_\infty(s)$ for $s>\pi$ and
while $f_j'(s)>0$ and $f_j"(s)>0$ for $s<\pi$ such that $f_j$ 
 converges in the $C^1$ norm toward $f_\infty$.  Such functions 
$f_j$ are suitable profile functions for defining the sequence of spaces $M_j$.
In this example, $f_\infty(s)$ agrees
with the reversed backwards limit profile function from
$\Sigma_\infty=r^{-1}(2)$, since the manifolds are
smoothly converging.
\end{proof}



\end{document}